\documentclass[a4paper,11pt]{article}

\usepackage{Packages/math}
\usepackage{Packages/page_style}

\title{Mean-Field Dynamics of the Bose--Hubbard Model in High Dimension}

\author{
Shahnaz Farhat\thanks{Email: \texttt{sfarhat@constructor.university}}, 
Denis Périce\thanks{Email: \texttt{dperice@constructor.university}, corresponding author.}, 
Sören Petrat\thanks{Email: \texttt{spetrat@constructor.university}} \\ \\ School of Science, Constructor University Bremen, \\ Campus Ring 1, 28759 Bremen, Germany
}

\begin{document}

    \maketitle

    \begin{abstract}
        The Bose--Hubbard model effectively describes bosons on a lattice with on-site interactions and nearest-neighbour hopping, serving as a foundational framework for understanding strong particle interactions and the superfluid to Mott insulator transition. This paper aims to rigorously establish the validity of a mean-field approximation for the dynamics of quantum systems in high dimension, using the Bose--Hubbard model on a square lattice as a case study. We prove a trace norm estimate between the one-lattice-site reduced density of the Schrödinger dynamics and the mean-field dynamics in the limit of large dimension. Here, the mean-field approximation is in the hopping amplitude and not in the interaction, leading to a very rich and non-trivial mean-field equation. This mean-field equation does not only describe the condensate, as is the case when the mean-field description comes from a large particle number limit averaging out the interaction, but it allows for a phase transition to a Mott insulator since it contains the full non-trivial interaction. Our work is a rigorous justification of a simple case of the highly successful dynamical mean-field theory (DMFT) for bosons, which somewhat surprisingly yields many qualitatively correct results in three dimensions.
    \end{abstract}
    
    
    \section{Introduction}

    One of the big aspirations of mathematical physics is to advance our rigorous understanding of phase transitions. Within this research area, much recent attention has been paid to the phenomenon of Bose--Einstein--Condensation (BEC), a phase of matter of cold Bose gases that was predicted in 1924 by Bose \cite{Bose_original} and Einstein \cite{Einstein_original_1,Einstein_original_2}. Since then, BEC has been studied extensively by theoretical physicists, and at least since the 1980s also by mathematical physicists with more rigorous methods. After the first experimental realizations in the labs of Cornell/Wieman \cite{anderson1995} and Ketterle \cite{davis1995} in 1995, the study of BEC has received a new wave of attention throughout experimental, theoretical, and mathematical physics. As a recent highlight in mathematical physics, let us mention the rigorous derivation of the Lee--Huang--Yang formula by Fournais and Solovej \cite{Fournais_Solovej_1,Fournais_Solovej_2}. The motivation of our work comes from different perspectives.
    \begin{enumerate}
        \item\label{BEC_no_phase_transition} Lots of recent effort has been put into understanding BEC at zero temperature, e.g., in the Gross--Pitaevskii limit \cite{schlein_2022_review} and the thermodynamic limit \cite{Fournais_Solovej_1,Fournais_Solovej_2}. This yields insight into the behavior of the condensate and its excitations, e.g., a rigorous proof of Bogoliubov theory in the Gross--Pitaevskii regime \cite{boccato2018}, but it is very far from understanding the (thermodynamic) phase transition to BEC.

        \item\label{mean-field_no_phase_transition} A very useful and simple method for studying many-body systems is the mean-field approximation. For bosonic cold atoms, the interaction potential is scaled down with the inverse of the particle number $N$ \cite{erdos2001} (or density \cite{derezinski2014}), thus considering weak interaction. Then the inverse particle number can be regarded as a small parameter, and the interaction can be effectively replaced by its mean-field. In this mean-field limit, many physical effects can be rigorously established regarding the dynamics and the low-energy properties. In particular, one can prove the validity of Bogoliubov theory \cite{lewin2015,lewin2015_2}, and perturbative expansions beyond Bogoliubov \cite{QF,spectrum}. However, these types of mean-field models do not describe phase transitions.
        \item\label{TD_vs_quantum_phase_transition} For Bose gases in the continuum, one would ultimately like to prove a thermodynamic phase transition. However, Bose gases on the lattice offer a different possibility for a phase transition, namely a quantum phase transition between a BEC and a localized state usually called a Mott insulator; see, e.g., \cite{Zwerger_2003} for a review. There are only a few mathematical rigorous works on this topic, e.g., \cite{ALSSY}.
    \end{enumerate}
    Our work addresses these points in the following way. \ref{BEC_no_phase_transition}: We study a limit that may describe a phase transition. \ref{mean-field_no_phase_transition}: Our limit is a mean-field limit, not for large particle number but for large dimension. We hope and indeed show that some of the methods of large $N$ mean-field limits are still relevant for this case. Since in our model the averaging is done over the hopping terms, and the interaction is treated non-perturbatively, our mean-field model is strongly interacting. \ref{TD_vs_quantum_phase_transition}: Our microscopic model is the Bose--Hubbard model, which is a lattice model that has been successfully used to describe the BEC-Mott transition.

    Our main result is convergence of the reduced one-lattice-site density matrix of the many-body Schrödinger dynamics to the mean-field dynamics, with an error bound that goes to zero as the dimension $d$ goes to infinity. Other parameters such as the density and the coupling remain fixed. Thus, we rigorously justify the validity of the mean-field approximation for a quantum system in large dimension. We choose the Bose--Hubbard model to illustrate this statement both for its remarkable usefulness in physics and for the technical simplifications it offers as a lattice model. The Bose--Hubbard model is a popular model used to describe bosons on a lattice with on-site interactions, allowing hopping between nearest-neighbor lattice sites. It is well known for capturing strong interactions between particles \cite{Bloch2007ManyBodyPW} and providing one of the simplest descriptions of the Mott transition to date; see \cite{PhysRevB.40.546} and later \cite{PhysRevLett.81.3108}, see also \cite{Greiner2002QuantumPT,GREINER200311}. 
    
    A common technique to study models such as Bose--Hubbard is Dynamical Mean-field Theory (DMFT). This theory is well known for its description of the Mott insulator/superfluid phase transition \cite{PhysRevB.80.245110,PhysRevB.40.546}. It is usually formulated via a self-consistency condition for a Green's function. DMFT is typically justified in the physics literature by stating that mean-field theories become exact in the limit of infinite dimensions \cite{DMFT-22}; see also \cite{PhysRevLett.62.324} for fermions. A remarkable fact is that DMFT tends to provide accurate results in three dimensions already \cite{DMFT-96}.
    
    In the literature, the effective equation we are deriving here is often called the mean-field model of Fisher et al. \cite{boson_DMFT}, referring to \cite{PhysRevB.40.546}. Our equation can be considered as a simple case of DMFT. A more involved mean-field type equation is obtained in \cite{boson_DMFT} by scaling different parts of the hopping term in different ways. In the paper \cite{PhysRevB.40.546} the authors consider the Bose--Hubbard model on a complete graph (the hopping term is of equal strength between all vertices). In comparison, our model has only nearest-neighbor hopping. Rigorous justifications for the effective thermodynamic behavior of the Bose--Hubbard model on a complete graph were obtained in \cite{Bru_Dorlas}. As mentioned above, in the mathematical literature, mean-field limits are typically considered as many-particle limits for the Bose--Hubbard model \cite{MFBH} or, more generally, for continuous models where the Hartree equation is obtained as effective dynamics (see, e.g., \cite{benedikter_lec,Golse_2016} for reviews). This approach requires dividing the interaction term by the number of particles to ensure that the kinetic energy and the interaction energy are of the same order.

    Our goal is to provide a rigorous justification, in the \(d \to \infty\) limit, that DMFT is a good approximation of the Schrödinger equation in the context of the Bose--Hubbard model. It is interesting to note that, in the large \(d\) limit, the roles of the kinetic energy and the interaction between particles are reversed compared to the mean-field limit \(N \to \infty\). The terms we aim to average in our regime are the hopping terms between nearest-neighbor sites which come from the kinetic energy. Since we only consider on-site interactions, the interaction between particles acts as a one-site operator and therefore does not contribute to correlations between two different lattice sites. For our setting, the basic idea behind the mean-field approximation is that the coordination number of the lattice (the number of nearest neighbors) increases with the dimension. This means that we have a mean-field picture locally around every site, which allows us to control the correlations between sites. Note that our main estimates are for the reduced one-lattice-site density matrix, and not for the one-particle reduced density matrix that is usually used to describe convergence in the large $N$ limit.

    \section{The Model and Main Results}

\subsection{Model}

    We consider the $d$-dimensional square lattice with periodic boundary conditions $\Lambda \coloneq \prth{\mathbb{Z}\slash L\mathbb{Z}}^d$ of volume $|\Lambda| \coloneq L^d$, with $L \in \mathbb{N}, L \ge 2$. We write $x \sim y$ if $x, y \in \Lambda$ are nearest neighbors. The one-site Hilbert space is $\ell^2(\C)$ and its canonical Hilbert basis is denoted by $(\ket{n})_{n\in\mathbb{N}}$. We define the standard creation and annihilation operators $a^*$, $a$ satisfying the CCR $[a,a^*]=1$, $[a,a] = 0 = [a^*,a^*]$; explicitly,
    \begin{align*}
        &a\ket{0} \coloneq 0, a \ket{n} \coloneq \sqrt{n} \ket{n-1}  \forall n \in \mathbb{N}^*, \\
        &a^* \ket{n} \coloneq \sqrt{n+1}\ket{n+1} \forall n \in \mathbb{N}.
    \end{align*}
    The number operator is given as $\mathcal{N} \coloneq a^* a$. To simplify our notation in some later proofs, we introduce an order on $\Lambda$ such that $\forall x \in \Lambda$
    \begin{align*}
        \#\sett{y \in \Lambda | y > x \text{ and } x \sim y} = \#\sett{y \in \Lambda | x > y \text{ and } x \sim y} = d.
    \end{align*}
    For example, the lexicographic order does the job. The Fock space is
    \begin{equation*}
        \mathcal{F} \coloneq \ell^2(\C)^{\otimes \abs{\Lambda}} \cong \mathcal{F}_+\prth{L^2(\Lambda, \mathbb{C})} \coloneq \bigoplus_{k\in\mathbb{N}} L^2(\Lambda, \mathbb{C})^{\otimes_+ k},
    \end{equation*}
    where $\otimes_+$ denotes the symmetric tensor product. Given a one-site operator $A$ and $x\in \Lambda$, we define
    \begin{align*}
        A_x \coloneq \prth{\bigotimes_{y < x} \mathds{1}} A \prth{\bigotimes_{y > x} \mathds{1}}.
    \end{align*}
    In the following we define $\nn{x}{y}$ to mean that $x, y \in \Lambda, x\sim y$ and $x < y$. The kinetic energy is given by the negative second quantized discrete Laplacian
    \begin{align*}
        - \dGamma(\Delta_d) 
            \coloneq&  \sum_{\nn{x}{y}} (a^*_x - a^*_y)(a_x - a_y) = - \sum_{\nn{x}{y}} \prth{a^*_x a_y + a^*_y a_x} + 2d \sum_{x\in \Lambda} \mathcal{N}_x.
    \end{align*}
    Furthermore, we denote by $\mathcal{N}_\mathcal{F}$ the number operator on Fock space, i.e., 
    \begin{align*}
        \mathcal{N}_\mathcal{F} \coloneq d\Gamma(1) := \sum_{x\in\Lambda} \mathcal{N}_x.
    \end{align*}
    Given hopping amplitude $J \in \R$, chemical potential $\mu \in \R$, and coupling constant $U\in \R$, we define the Bose--Hubbard Hamiltonian
    \begin{align}
        H_d \coloneq& - \frac{J}{2d} \dGamma(\Delta_d) - \mu \mathcal{N}_\mathcal{F} + \frac{U}{2} \sum_{x\in\Lambda} \mathcal{N}_x (\mathcal{N}_x - 1) \nonumber\\
            =& - \frac{J}{2d}\sum_{\nn{x}{y}} \prth{a^*_x a_y + a^*_y a_x} + (J - \mu) \sum_{x\in\Lambda} \mathcal{N}_x + \frac{U}{2} \sum_{x\in\Lambda} \mathcal{N}_x (\mathcal{N}_x - 1). \label{eq:HBH}
    \end{align}
    Here, we have scaled down the hopping term with the inverse of the dimension $d$.
    The time-dependent Schrödinger equation for $\Psi_d \in \mathcal{F}$ is
    \begin{equation}\label{schrodinger}
    i \frac{d}{dt} \Psi_d(t) = H_d \Psi_d(t).
    \end{equation}

    The idea of the mean-field approximation is the lattice site product state ansatz $\Psi_d \approx \prod_{x\in\Lambda} \varphi_x$ where $\varphi \in \ell^2(\C)$ is a one-lattice-site wave function. Such a state $\prod_{x\in\Lambda} \varphi_x$ is sometimes called Gutzwiller product state \cite{Gutzwiller,Rokhsar_Kotliar_1991}. Our main results state that if $\Psi_d(0) \approx \prod_{x\in\Lambda} \varphi_x(0)$, then also $\Psi_d(t) \approx \prod_{x\in\Lambda} \varphi_x(t)$ for all times $t >0$, where $\approx$ is meant in an appropriate reduced sense. We can guess the right mean-field equation for $\varphi(t)$ by writing $a_x = \langle \varphi(t), a \varphi(t) \rangle + \widetilde{a}_x(t)$, and neglect in the hopping term of \eqref{eq:HBH} all terms that are quadratic in $\widetilde{a}(t)$. Then the corresponding mean-field equation is
    \begin{equation}\label{eq:MFS}
        i \partial_t \varphi(t) = h^{\varphi(t)} \varphi(t),
    \end{equation}
    where the nonlinear mean-field operator is
    \begin{equation}
    h^{\varphi} \coloneq - J \big( \alpha_\varphi \ad + \overline{\alpha_\varphi} a - \abs{\alpha_\varphi}^2 \big) + (J - \mu) \mathcal{N} + \frac{U}{2} \mathcal{N}(\mathcal{N}-1), \label{eq:h mf}
    \end{equation}
    with order parameter $\alpha_\varphi \coloneqq \scp{\varphi}{a \varphi}$. Roughly speaking, $\alpha_\varphi = 0$ indicates a Mott insulator state, whereas $\alpha_\varphi \neq 0$ indicates a superfluid state. The well-posedness of \eqref{eq:MFS} is discussed in Section~\ref{șec_well_posedness}. We would like to emphasize the richness of this mean field dynamics (for $U\neq 0$) as it may exhibit a phase transition. For example, it is possible to find an initial state in the Mott insulator phase, i.e., with $\varphi(0) \in \ell^2(\C)$ such that $\alpha_{\varphi}(0) = 0$, but $\alpha_{\varphi}'(0) > 0$, meaning that the state ends up in the superfluid phase ($\alpha_{\varphi}(t) \neq 0$) for a neighborhood of $t>0$.
    
    The approximation $\Psi_d \approx \prod_{x\in\Lambda} \varphi_x$ is not expected to hold in $\mathcal{F}$, but in the sense of reduced lattice site density matrices. Given $\Psi_d$, let us first define the corresponding positive trace one operator $\gamma_d \in \mathcal{L}^1\prth{\mathcal{F}}$, which satisfies the von Neumann equation 
    \begin{align}
        i\partial_t \gamma_d(t) = \sbra{H_d, \gamma_d(t)}. \label{eq:gammad dynamics}
    \end{align}
    We define its first reduced one-lattice-site density matrix as
    \begin{align*}
        \gOne \coloneq \frac{1}{\abs{\Lambda}} \sum_{x \in \Lambda} \PTr{\Lambda\backslash{\sett{x}}}{\gamma_d}.
    \end{align*}
    The operator $\gOne:\ell^2(\C)\to \ell^2(\C)$ should not be confused with the reduced one-particle density matrix $\gamma^{(1)}_{\mathrm{particle}}:L^2(\Lambda)\to L^2(\Lambda)$ defined via its integral kernel $\gamma^{(1)}_{\mathrm{particle}}(x,y) = \langle \Psi_d, a^*_y a_x \Psi_d \rangle$ with $x, y \in \Lambda$. Given $\varphi\in\ell^2(\C)$, let us also introduce the corresponding orthogonal projections
    \begin{equation}\label{def_p_q}
    p = p_\varphi := \ketbra{\varphi}, ~~\text{and}~~ q = q_\varphi := 1 - p_\varphi.
    \end{equation}
    In our main results, we prove convergence of $\gOne(t)$ to $p_{\varphi}(t)$.

\subsection{Main Results}

    Our main results are estimates for the trace-norm difference of $\gOne(t)$ and $p_{\varphi}(t)$, which we denote by $\big\|\gOne(t) - p_{\varphi}(t)\big\|_{\mathcal{L}^1}$. We prove two similar estimates. Theorem~\ref{th:moment method} proves an error bound that holds for any value of the parameters $J,\mu,U$ of the Bose--Hubbard model \eqref{eq:HBH}. The convergence rate is slightly worse than $\frac{1}{\sqrt{d}}$. However, we need to assume stronger conditions on the initial data, and the bound contains a double exponential growth in time. On the other hand, Theorem~\ref{thm:excitation method} holds only for repulsive interaction, i.e., $U>0$. However, it holds for a larger class of initial data, and the error bound only grows exponentially in time. Our first main result is the following.

    \begin{theorem} \label{th:moment method}
        Let $\gamma_d$ be the solution to \eqref{eq:gammad dynamics} with initial data $\gamma_d(0) \in \mathcal{L}^1\prth{\mathcal{F}}$ such that $\Tr{\gamma_d(0)}=1$, and let $\varphi$ be the solution to \eqref{eq:MFS} with initial data $\varphi(0) \in \ell^2(\C)$ such that $\norm{\varphi}_{\ell^2} = 1$. Let $p_\varphi$ be defined as in \eqref{def_p_q}.
        We assume that there exist
        $a, c > 0$ such that
        \begin{align} 
            \forall n \in \mathbb{N}, \Tr{p_{\varphi}(0) \mathds{1}_{\mathcal{N} = n}} \le c e^{-\frac{n}{a}} ~\text{ and }~
            \Tr{\gOne(0) \mathds{1}_{\mathcal{N} = n}} \le c e^{-\frac{n}{a}}. \label{eq:decay hypo}
        \end{align}
        Then for all $t \in \mathbb{R}_+$ we have
        \begin{align}\label{moments_thm_main_result}
            \norm{\gOne(t) - p_{\varphi}(t)}_{\mathcal{L}^1}
                &\le\sqrt{2} \prth{\norm{\gOne(0) - p_\varphi(0)}_{\mathcal{L}^1} + \frac{C_2 e^{C_1 t} + \Tr{p_{\varphi}(0) \mathcal{N}}^{\frac{1}{2}}}{d\prth{C_4 + 2\prth{\sqrt{2(a+e)}e^{\frac{C_1}{2}t} +1} \sqrt{\ln(d+1)}}}}^{\frac{1}{2}} \nonumber\\
                    &\quad e^{JC_3\prth{C_4 + 2\prth{\sqrt{2(a+e)}e^{\frac{C_1}{2}t} +1} \sqrt{\ln(d+1)}}t},
        \end{align}
        with the following constants independent of $d$ and $t$:
        \begin{align*}
            &C_1 \coloneq 2eJ\max\prth{\Tr{p_\varphi(0)\mathcal{N}},1}, \\
            &C_2 \coloneq 4\prth{c (1+a) + e^{-1}}\prth{2+4(a+e)}, \\
            &C_3 \coloneq \prth{\Tr{p_\varphi(0) \mathcal{N}} + 1}^{\frac{1}{2}}, \\
            &C_4 \coloneq 4\Tr{p_{\varphi}(0) \mathcal{N}}^{\frac{1}{2}} + 2.
        \end{align*}
    \end{theorem}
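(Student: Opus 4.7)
My plan is a Grönwall-type argument on the scalar counting functional
\begin{equation*}
    \alpha_d(t) \coloneq 1 - \Tr{p_{\varphi(t)}\,\gOne(t)} = \frac{1}{\abs{\Lambda}}\sum_{x\in\Lambda}\Tr{q_{\varphi(t),x}\,\gamma_d(t)},
\end{equation*}
a lattice-site analogue of the counting indicator used in large-$N$ mean-field proofs. Since $p_{\varphi(t)}$ is rank one and $\Tr{\gOne(t)}=1$, a standard inequality of the form $\norm{\gOne(t)-p_{\varphi(t)}}_{\mathcal{L}^1} \le \sqrt{2}\,\sqrt{\alpha_d(t)}$ converts a Grönwall estimate on $\alpha_d(t)$ into the trace-norm bound \eqref{moments_thm_main_result}, explaining the outer square root in the statement.

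\textbf{Time derivative and on-site cancellations.} Differentiating $\alpha_d(t)$ using \eqref{eq:gammad dynamics} together with the identity $\partial_t q_{\varphi(t),x}=-i[h^{\varphi(t)}_x,q_{\varphi(t),x}]$ yields
\begin{equation*}
    \dot\alpha_d(t)=\frac{i}{\abs{\Lambda}}\sum_{x\in\Lambda}\Tr{\gamma_d(t)\bigl[H_d-h^{\varphi(t)}_x,\,q_{\varphi(t),x}\bigr]}.
\end{equation*}
Since $q_{\varphi(t),x}$ is supported on site $x$, every off-site piece of $H_d$ commutes with it, and the on-site chemical-potential and interaction terms cancel with those of $h^{\varphi(t)}_x$. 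What survives is the hopping term together with the linear compensator $J(\alpha_{\varphi(t)}a^*_x+\overline{\alpha_{\varphi(t)}}a_x)$ from $h^{\varphi(t)}_x$. Inserting $1=p_{\varphi(t),z}+q_{\varphi(t),z}$ on both sides of every $a_z$ and using $p\,a\,p=\alpha_{\varphi(t)}\,p$, the ``$p$--$p$'' diagonal blocks along each edge are absorbed exactly by the compensator, leaving only edge contributions carrying at least one $q_{\varphi(t),z}$ factor. After the $\frac{1}{2d}$ prefactor and a Cauchy--Schwarz split at the two endpoints of every edge, this produces an inequality $\dot\alpha_d(t)\le A_d(t)\,\alpha_d(t)+R_d(t)$, where $R_d(t)$ is the residual from ``cross-edge'' pieces without a matching compensator.

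\textbf{Moment propagation and number cutoff.} The coefficients $A_d(t)$ and $R_d(t)$ involve the moments $m_k(t)\coloneq\Tr{\gOne(t)\,\mathcal{N}^k}$, which, following the moment method of the theorem's label, I propagate directly: $[H_d,\mathcal{N}_x^k]$ involves only edges incident to $x$, so after averaging over $x$ and using the CCR one obtains a differential inequality of the form $\dot m_k\le CJk\,m_k$. Together with the exponential decay hypothesis \eqref{eq:decay hypo} this gives $m_k(t)\le C\,k!\,a^k\,e^{C_1 t}$, matching the form of the constants $C_1, C_2$ in the statement; an analogous and easier computation based on \eqref{eq:MFS} handles $\Tr{p_{\varphi(t)}\,\mathcal{N}^k}$. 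To convert $R_d(t)$ into a genuine $O(1/d)$ quantity, I insert a cutoff $\mathds{1}_{\mathcal{N}_x\le M}$ on each site: the bounded part contributes a factor of order $M$, while the tail $\{\mathcal{N}_x>M\}$ is absorbed by the exponential moment bound into a term of order $e^{-M/a}$. Balancing these two losses with $M\sim\sqrt{\ln(d+1)}$ accounts for both the $\sqrt{\ln(d+1)}$ factor and the combination $d(C_4+M(t,d))$ in the denominator of \eqref{moments_thm_main_result}. Grönwall with $A_d(t)\le JC_3\,M(t,d)$ then closes the argument and produces the multiplicative factor $\exp(JC_3\,M(t,d)\,t)$.

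\textbf{Main obstacle.} The delicate step is the joint execution of the fluctuation estimate and the moment propagation: the total number $\mathcal{N}_\mathcal{F}$ grows like $\abs{\Lambda}=L^d$, so per-site moment bounds on $\gOne(t)$ cannot be extracted from mere conservation of $\mathcal{N}_\mathcal{F}$, and the moment differential inequality must instead be closed directly on $\gOne(t)$ by carefully exploiting the local structure of the hopping. This interplay is also the source of the double exponential in $t$ in \eqref{moments_thm_main_result}: the moment bound $e^{C_1 t}$ enters the cutoff level $M(t,d)$, which in turn enters the Grönwall exponent $\exp(JC_3\,M(t,d)\,t)$.
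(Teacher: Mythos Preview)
Your overall strategy matches the paper's: Gr\"onwall on $\alpha_d(t)=\Tr{\gOne q}$, a $p/q$ decomposition of the hopping term, propagation of the moments $\Tr{\gOne\mathcal{N}^k}$ and $\Tr{p_\varphi\mathcal{N}^k}$, and a particle-number cutoff $M$ to close the estimate. The trace-norm conversion and the explanation of the double exponential are also right. However, several of the mechanisms are misidentified, and if you carried the argument out as written it would not close.

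\medskip

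\textbf{Where the $1/d$ actually comes from.} It is not the cutoff that produces the $1/d$. In the paper the $1/d$ appears already in the basic fluctuation estimate (Proposition~\ref{prop:gronwall start}), specifically from the $p_1p_2\,A\,q_1q_2$ block: after writing the sum over edges as $\sum_x\sum_{y\sim x}$, Cauchy--Schwarz leaves a double sum $\sum_{y,z\sim x}$ whose \emph{diagonal} $y=z$ contribution carries an extra $(2d)^{-1}$, while the off-diagonal part is bounded by $\alpha_d$. Your description of $R_d$ as ``cross-edge pieces without a matching compensator'' does not capture this; you need the diagonal/off-diagonal split, otherwise you never see a clean $1/d$.

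\medskip

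\textbf{What the cutoff actually does.} The problematic term in $\dot\alpha_d$ is not a source term $R_d$ but the nonlinear piece $\Tr{\gOne q}^{1/2}\,\Tr{\gOne q(\mathcal{N}+1)q}^{1/2}$ (from the $p_1q_2\,A\,q_1q_2$ block). The cutoff splits $\Tr{\gOne q(\mathcal{N}+1)q}$ into a part $\le M\,\alpha_d$ and a tail $\Tr{\gOne q(\mathcal{N}+1)\mathds{1}_{\mathcal{N}\ge M}q}$; to bound the latter you first have to remove the $q$'s (the paper uses $\Tr{\gamma q\mathcal{N}^k q}\le 2\Tr{\gamma\mathcal{N}^k}+2\Tr{p\mathcal{N}^k}$), a step absent from your outline. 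After Cauchy--Schwarz the bounded part contributes $\sqrt{M}\,\alpha_d$ to the Gr\"onwall coefficient, not $M\,\alpha_d$.

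\medskip

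\textbf{Scaling of the cutoff.} Because the propagated moments give $\Tr{\gOne(t)\mathcal{N}^k}\lesssim b(t)^k k!$ with $b(t)\sim e^{C_1 t}$, the tail is controlled by $e^{-M e^{-C_1 t}/\mathrm{const}}$, not $e^{-M/a}$. Balancing this against $1/d$ forces $M\sim e^{C_1 t}\ln d$, so that $\sqrt{M}\sim e^{C_1 t/2}\sqrt{\ln(d+1)}$ is the $\sqrt{\ln(d+1)}$ factor you see in the statement. Your choice $M\sim\sqrt{\ln(d+1)}$ would leave a tail of size $d^{-o(1)}$, which is too large to absorb into a $1/d$ error.
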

    Note that the $d$-dependent terms on the right-hand side of \eqref{moments_thm_main_result} are small when $d\to\infty$, since
    \begin{align*}
        \prth{d \sqrt{\ln(d+1)}}^{-\frac{1}{2}} e^{C \sqrt{\ln(d+1)}t} 
            = e^{C \sqrt{\ln(d+1)}t - \frac{1}{2}\ln(d) - \frac{1}{4}\ln(\ln(d+1))} \limit\displaylimits_{d\to\infty} 0
    \end{align*}
    for any $C,t >0$. Our second main result is as follows.

    \begin{theorem}\label{thm:excitation method}
        Let $\gamma_d$ be the solution to \eqref{eq:gammad dynamics} with initial data $\gamma_d(0) \in \mathcal{L}^1\prth{\mathcal{F}}$, and let $\varphi$ be the solution to \eqref{eq:MFS} with initial data $\varphi(0) \in \ell^2(\C)$ such that $\norm{\varphi}_{\ell^2} = 1$. Let $p_\varphi, q_\varphi$ be defined as in \eqref{def_p_q}. We assume that  there is  $C>0$ such that
        \[{\rm Tr}(p_{\varphi}(0)\mathcal{N}^4)  \leq C,\]
        and that $U>0$.
        Then there exists $C(J,\mu,U)>0$ such that for all $t\in \mathbb{R}_+$ we have
        \begin{equation}\label{thm_2_main_est}
        \begin{aligned}
        \norm{\gOne(t) - p_{\varphi}(t)}_{\mathcal{L}^1} \leq C(J,\mu,U) e^{C(J,\mu,U) (1+t^7)} \left( \Tr{\gOne(0) \left( q_\varphi(0) \mathcal{N} ^2 q_\varphi(0) +q_\varphi(0)\right)} +\frac{1}{d} \right)^{1/2}.
        \end{aligned}
        \end{equation}
    \end{theorem}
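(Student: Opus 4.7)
The plan is a method-of-excitations argument adapted to the one-lattice-site setting. Introduce the excitation functional
\[
    \beta(t) := \Tr{\gOne(t)\prth{q_{\varphi(t)} \mathcal{N}^2 q_{\varphi(t)} + q_{\varphi(t)}}} + \tfrac{1}{d},
\]
so that $\beta(0)$ equals, up to the additive $\tfrac{1}{d}$, the quantity inside the square root on the right-hand side of \eqref{thm_2_main_est}. Decomposing
\[
\gOne - p_{\varphi} = -\Tr{q_{\varphi}\gOne}\,p_{\varphi} + p_{\varphi}\gOne q_{\varphi} + q_{\varphi}\gOne p_{\varphi} + q_{\varphi}\gOne q_{\varphi},
\]
and bounding the off-diagonal blocks by Cauchy--Schwarz yields
\[
    \norm{\gOne(t) - p_{\varphi(t)}}_{\mathcal{L}^1} \le 4\sqrt{\Tr{\gOne(t)\,q_{\varphi(t)}}} \le 4\sqrt{\beta(t)},
\]
so the theorem reduces to a Grönwall-type estimate for $\beta$. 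The $\mathcal{N}^2$ weighting is not needed for this reduction but turns out to be necessary to close the Grönwall step later.

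Next I would establish a priori moment bounds of the form $\Tr{\gOne(t)\,\mathcal{N}^k} \le C(1+t^{m_k})$ and $\scp{\varphi(t)}{\mathcal{N}^k \varphi(t)} \le C(1+t^{m_k})$ for $k\le 4$, uniformly in $d$. The repulsive assumption $U>0$ is essential here: an operator-Cauchy--Schwarz bound gives that the kinetic term in \eqref{eq:HBH} is dominated by $J\,\mathcal{N}_\mathcal{F}$, so that $H_d \ge -\mu\,\mathcal{N}_\mathcal{F} + \frac{U}{2}\sum_x \mathcal{N}_x^2 - \frac{U}{2}\mathcal{N}_\mathcal{F}$; energy conservation then controls $\Tr{\gOne(t)\,\mathcal{N}^2}$ uniformly in $d$. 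Higher moments are propagated by the differential inequalities coming from the commutators $[\mathcal{N}_x^k, H_d]$ and $[\mathcal{N}^k, h^{\varphi(t)}]$, each commutation costing a power of $t$; this bookkeeping is what eventually produces the $t^7$ growth in the theorem.

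The core step is to compute $\dot\beta$ from \eqref{eq:gammad dynamics} and \eqref{eq:MFS}. The nonlinear mean-field operator $h^{\varphi}$ in \eqref{eq:h mf} is tuned exactly so that every single-site contribution to $\dot\beta$ cancels. What remains are two-site terms of schematic form $\tfrac{J}{d\abs{\Lambda}}\sum_{\nn{x}{y}} \Tr{\gamma_d\, A_x B_y}$, where the nearest-neighbour sum arises because the hopping part of $H_d$ commutes nontrivially with the one-site weight $q_{\varphi,x}\mathcal{N}_x^2 q_{\varphi,x}+q_{\varphi,x}$ only when the hopping pair touches $x$. Writing $a_x = \alpha_{\varphi(t)} + \widetilde a_x(t)$ and inserting $\mathds{1} = p_{\varphi(t),z} + q_{\varphi(t),z}$ on each relevant site $z$ ensures that every surviving $A_x, B_y$ carries at least one factor of $q_{\varphi(t)}$. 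Combined with Cauchy--Schwarz on the reduced two-site density and the moment bounds of the previous paragraph, this should give
\[
    \dot\beta(t) \le C(J,\mu,U)\prth{1+t^6}\,\beta(t),
\]
with the explicit $\tfrac{1}{d}$ in $\beta$ absorbing the residual terms where the sum over the $\sim d\abs{\Lambda}$ nearest-neighbour pairs does not cancel the $\tfrac{1}{d}$ prefactor. Grönwall then yields $\beta(t) \le \beta(0)\,e^{C(1+t^7)}$, and the first paragraph completes the argument.

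The main obstacle is the algebraic bookkeeping of the two-site hopping terms once the $\mathcal{N}^2$ weighting is included. Commuting $\mathcal{N}_x^2$ past the nonlocal operators $a^*_x a_y$ produces numerous cross terms, each of which must be shown to factorize as a product of one-site operators each carrying a $q_{\varphi}$, so that the estimate closes on $\beta$ itself rather than on a genuine two-site excitation quantity. The $\mathcal{N}^2$ weight is in fact what makes this closure possible, since it absorbs the number-operator factors produced by these commutators; but matching every resulting term to the Grönwall form without losing either the $1/d$ gain or a power of an available moment will require a careful term-by-term analysis of the excitation count and a consistent use of the $U>0$ moment bounds at the very points where $\mathcal{N}^2 q_\varphi$ is most awkward to handle.
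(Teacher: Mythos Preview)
Your proposal attempts a direct Grönwall argument on $\beta(t)=\Tr{\gOne(q\mathcal{N}^2 q+q)}+\tfrac{1}{d}$, which is precisely the quantity the paper calls $g(t)+\tfrac{1}{d}$. The paper explicitly considers this route and explains why it does not close: differentiating $\Tr{\gOne q\mathcal{N}^2 q}$ requires commuting $q_x\mathcal{N}_x^2 q_x$ with the hopping, and the four-$q$ piece $q_xq_yK^{(4)}_{x,y}q_xq_y$ of $\tilde H$ then produces terms of type $q_x\mathcal{N}_x^3 q_x$ (since $[\mathcal{N}_x^2,a^*_x]\sim \mathcal{N}_x a^*_x$). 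Your plan to absorb these via the a priori moment bounds $\Tr{\gOne\mathcal{N}^k}\le C(t)$ does not work: that replaces $q\mathcal{N}^3 q$ by an $O(1)$ quantity and yields $\dot\beta\lesssim C(t)\sqrt{\beta}$ rather than $C(t)\beta$, so Grönwall gives $\beta(t)=O(1)$ for $t>0$ and the $1/d$ smallness is lost. The ``$\mathcal{N}^2$ weight absorbs the commutator factors'' heuristic in your last paragraph is exactly where the argument breaks.

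The paper's fix is to run the Grönwall not on $g$ but on the energy-type functional
\[
  f(t)=\frac{1}{|\Lambda|}\Big\langle\Psi_d,\Big(H_d+\sum_x\big(q_xh^{\alpha_\varphi}_xq_x-h^{\alpha_\varphi}_x+cq_x\big)\Big)\Psi_d\Big\rangle,
\]
after first showing $\tfrac{U}{4}g-\tfrac{1}{d}\le f\le Cg+\tfrac{1}{d}$. The point is that $\langle H_d\rangle$ is conserved and $q h q-h=-hp-ph+php$ always carries at least one $p$, so in $\dot f$ the interaction's $\mathcal{N}^2$ only ever appears as $p\mathcal{N}^2$, which is controlled by moment bounds on $\varphi$ alone; no bare $q\mathcal{N}^kq$ with $k>2$ is generated. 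One more correction: $U>0$ is not used for the a priori particle-number moments (those follow from a direct Grönwall on $\Tr{\gOne\mathcal{N}^k}$ regardless of the sign of $U$), but is essential for the lower bound $f\ge\tfrac{U}{4}g-\tfrac{1}{d}$, i.e.\ for recovering control of $g$ from $f$.
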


    Note that for an initial Gutzwiller product state $\Psi_d(0) = \prod_{x \in \Lambda} \varphi_x(0)$, we have
    \begin{equation}
        \Tr{\gOne(0) \left( q_{\varphi}(0) \mathcal{N} ^2 q_{\varphi}(0) + q_{\varphi}(0)\right)} = 0.
    \end{equation}
    More generally, assuming $\Tr{\gOne(0) q_{\varphi}(0) \mathcal{N} ^2 q_{\varphi}(0)} \leq d^{-1}$ and $\Tr{\gOne(0) q_{\varphi}(0)} \leq d^{-1}$, the estimate \eqref{thm_2_main_est} becomes 
    \begin{equation*}
    \norm{\gOne(t) - p_{\varphi}(t)}_{\mathcal{L}^1} \leq C(J,\mu,U) e^{C(J,\mu,U) (1+t^7)} \frac{1}{\sqrt{d}}.
    \end{equation*}
    For example, for the state $\prod_{x \in \Lambda \setminus S} \varphi_x(0) \prod_{x \in S} \varphi^\perp_x(0)$ with $\varphi^\perp(0) \perp \varphi(0)$ and $|S|=\frac{|\Lambda|}{d}$, we find $\Tr{\gOne(0) q_{\varphi}(0)} = d^{-1}$, and the other conditions of the theorems can be satisfied by an appropriate choice of $\varphi(0)$ and $\varphi^\perp(0)$. From the perspective of the law of large numbers, this is the expected optimal convergence rate. (However, this convergence rate obviously does not explain why our approximation is so successful even for $d=3$.) Note also that the bound \eqref{thm_2_main_est} can be written in more detail as
    \begin{equation}
    \begin{aligned}
    & \norm{\gOne(t) - p_{\varphi}(t)}_{\mathcal{L}^1}  
    \\& \leq \left(  \frac{1}{d}  \frac{1}{U} + \widetilde{C}(J,\mu,U) \left(1+\frac{1}{U^2} \right) e^{\widetilde{C}(J,\mu,U) \sum_{j=1}^{7} t^j} \left( \Tr{\gOne(0) \left( q_{\varphi}(0) \mathcal{N} ^2 q_{\varphi}(0) +q_{\varphi}(0)\right)} +\frac{1}{d} \right) \right)^{1/2},
    \end{aligned}
    \end{equation}
    where $\widetilde{C}(J,\mu,U)$ depends polynomially on the parameters $J$, $\mu$, $U$ of the Bose--Hubbard model. The divergence for small $U$ comes from our use of an energy estimate, as outlined below (around Equation~\eqref{exc_energy}).

    \begin{remark}
    The trace norm convergence of Theorems~\ref{th:moment method} and \ref{thm:excitation method} in particular implies convergence of the order parameter $\alpha$, meaning
    \begin{equation}\label{alpha_convergence}
    \alpha_{\mathrm{micro}}(t) := \frac{1}{|\Lambda|} \sum_{x \in \Lambda} \langle \Psi_d(t), a_x \Psi_d(t) \rangle \to \alpha_{\varphi(t)} := \langle \varphi(t),a \varphi(t) \rangle ~~\text{as}~ d\to\infty.
    \end{equation}
    Note that for initial data $\Psi_d(0)$ with a fixed particle number, the left-hand side of \eqref{alpha_convergence} is zero (since $H_d$ is particle number conserving), but in general our initial data live on Fock space where the particle number is not fixed.
    
    To prove \eqref{alpha_convergence}, let us consider the operator $\mathcal{O} \coloneq B (\mathcal{N}+1)^k$ with $B$ a bounded operator on $\ell^2(\C)$. Inserting a cutoff $M \in \N$ we find
    \begin{align*}
        &\abs{\Tr{\gOne\mathcal{O}} - \Tr{p_\varphi \mathcal{O}}} \\
            &\quad\le 
        \norm{\prth{\gOne - p_\varphi}\mathcal{O}}_{\mathcal{L}^1} \\
            &\quad\le \norm{\prth{\gOne - p_\varphi}\mathcal{O}(\mathcal{N} +1)^{-k} (\mathcal{N} +1)^k\mathds{1}_{\mathcal{N}< M}}_{\mathcal{L}^1}
            + \norm{\prth{\gOne - p_\varphi}\mathcal{O}(\mathcal{N} +1)^{-k} (\mathcal{N} +1)^k\mathds{1}_{\mathcal{N}\ge M}}_{\mathcal{L}^1} \\
            &\quad\le \norm{B}_{\mathcal{L}^\infty} \prth{M^k \norm{\gOne - p_\varphi}_{\mathcal{L}^1} + \Tr{\gOne(\mathcal{N} +1)^k\mathds{1}_{\mathcal{N}\ge M} } + \Tr{p_\varphi(\mathcal{N} +1)^k\mathds{1}_{\mathcal{N}\ge M}}}.
    \end{align*}
    Assuming that
    \begin{align*}
        \Tr{(\mathcal{N} +1)^k\gOne(0)} + \Tr{(\mathcal{N} +1)^k p_\varphi(0)} < \infty
    \end{align*}
    we are able to propagate these moments (see Proposition~\ref{prop:moments bound}). Thus we can estimate the remainders terms, i.e., 
    \begin{align*}
        \Tr{(\mathcal{N} +1)^k\mathds{1}_{\mathcal{N}\ge M} \gOne} + \Tr{(\mathcal{N} +1)^k\mathds{1}_{\mathcal{N}\ge M}p_\varphi} \limit\displaylimits_{M\to\infty} 0.
    \end{align*}
    and any choice of $M \gg 1$ such that $M^k \norm{\gOne - p_\varphi}_{\mathcal{L}^1} \ll 1$ as $d\to\infty$ is sufficient to prove that
    \begin{align*}
        \norm{\prth{\gOne - p_\varphi}\mathcal{O}}_{\mathcal{L}^1} \limit\displaylimits_{d\to\infty}0
    \end{align*}
    In particular this applies to the order parameter since $a \le \mathcal{N} + 1$.
    In line with the above arguments, we can similarly establish that  the reduced one-particle density matrix $\gamma^{(1)}_{\mathrm{particle}}$, defined via its integral kernel $\gamma^{(1)}_{\mathrm{particle}}(x,y) = \langle \Psi_d, a^*_y a_x \Psi_d \rangle$ with $x, y \in \Lambda$,
converges in Hilbert-Schmidt norm. Specifically,  we obtain
 \begin{align*}
\frac{1}{\vert \Lambda \vert^2} \sum_{x,y \in \Lambda} \Big\vert \langle \Psi_d(t), a_y^* a_x \Psi_d(t) \rangle - \vert \alpha_{\varphi(t)} \vert^2 \Big\vert^2\rightarrow 0 ~~\text{as}~ d\to\infty.
\end{align*}
To prove this, we  can insert first the identities $p_x+q_x=1$ and $p_y+q_y=1$  inside $ \langle \Psi_d(t), a_y^* a_x \Psi_d(t) \rangle$ and then  all the  resulting terms can be bounded by ${\rm Tr}(\gOne (q+q \mathcal{N}^2 q))$ which  converges to zero as $d \rightarrow \infty$. The convergence of the latter quantity follows from Propositions~\ref{equiv} and \ref{granwalllemmaf}.
Moreover, we can also establish  that
\begin{align*}
     \frac{1}{d |\Lambda|} \sum_{<x,y>} \langle \Psi_d(t), a_x^* a_y \Psi_d(t) \rangle \to \vert \alpha_{\varphi(t)} \vert^2 ~~\text{as}~ d\to\infty,
    \end{align*}
which implies convergence of the kinetic energy.
    \end{remark}

    Both theorems are proven using a Gronwall estimate for $\text{Tr}(\gOne q_\varphi)$. This quantity heuristically counts the average number of lattice sites that do not follow the product state ansatz. It is inspired by the corresponding quantity for the weak coupling limit introduced by Pickl \cite{pickl2011}. The main technical challenge is then caused by the unboundedness of the creation and annihilation operators in the hopping term, which is a bit analogous to the technical problems that arise when considering the weak-coupling limit with singular interactions; see, e.g., \cite{knowles2010}. More concretely, we need to bound $\text{Tr}(\gOne q_\varphi (\mathcal{N}+1)q_\varphi)$ in terms of $\text{Tr}(\gOne q_\varphi)$ or terms that go to zero as $d\to\infty$. We do this in two different ways, leading to the two main theorems.
    
    For Theorem~\ref{th:moment method}, we introduce a new moment method. For this, we first separate
    \begin{equation*}
    \Tr{\gOne q_\varphi (\mathcal{N}+1)q_\varphi} = \Tr{\gOne q_\varphi (\mathcal{N}+1) \mathds{1}_{\mathcal{N} < M} q_\varphi} + \Tr{\gOne q_\varphi (\mathcal{N}+1) \mathds{1}_{\mathcal{N} \geq M} q_\varphi}.
    \end{equation*}
    Then the first term can simply be bounded by $M$, whereas we use moment estimates to bound the second term by $e^{MC(t)}d^{-1}$. Then it turns out to be possible to optimize in $M$ to close the Gronwall argument. The introduction of this cutoff parameter is inspired by \cite{Gyrokineticlimit}, where a Landau level cutoff is introduced, and the remainder term is controlled through moments of the kinetic energy operator in a similar manner as what we do here with the number of particles operator.
    
    For Theorem~\ref{thm:excitation method}, we proceed using an energy estimate inspired by \cite{lewin2015} (which deals with proving Bogoliubov theory for the dynamics of the weakly interacting Bose gas). The idea is to write the Hamiltonian \eqref{eq:HBH} as
    \begin{equation*}
    H_d = \sum_{x\in \Lambda} h^{\varphi(t)}_x + \widetilde{H}(t).
    \end{equation*}
    Here, $\widetilde{H}(t)$ describes the excitations around our product state ansatz. A similar splitting was used in \cite{lewin2015}, where, after a unitary transformation, $\widetilde{H}(t)$ converges to a Bogoliubov Hamiltonian. The energy of the excitations should now be defined as
    \begin{equation}\label{exc_energy}
    E^{\mathrm{exc}}(t) := \Big\langle \Psi_d, \left( \widetilde{H}(t) + \sum_{x\in \Lambda} q_x h^{\varphi(t)}_x q_x\right) \Psi_d \Big\rangle,
    \end{equation}
    where the first term corresponds to the kinetic energy, and the second term to the mean-field energy of the excitations (including the interaction energy). The energy $E^{\mathrm{exc}}(t)$ is not conserved, as excitations from the product state can be created and annihilated. However, it can be bounded using a Gronwall argument. The crucial point here is that the interaction term enters only as
    \begin{align*}
        \sum_{x\in\Lambda}q_x \mathcal{N}_x (\mathcal{N}_x - 1) q_x,
    \end{align*}
    and hence the Gronwall argument does not produce higher powers than $q_x \mathcal{N}_x^2 q_x$. This ultimately allows us to control terms that involve $q_x \mathcal{N}_x^2 q_x$ or $q_x \mathcal{N}_x q_x$, in particular $\text{Tr}(\gOne q_\varphi (\mathcal{N}+1)q_\varphi)$.

    The remainder of the paper is organized as follows. We first prove global well-posedness of the mean-field equation \eqref{eq:MFS} in Section~\ref{șec_well_posedness}. Then, we discuss some preliminary estimates in Section~\ref{sec_preliminaries}. In particular, we prove properties of a two-lattice-site reduced density matrix, preliminary bounds for the mean-field and Bose--Hubbard energies, and conservation laws and propagation estimates for the mean-field equation and the Bose--Hubbard dynamics. Furthermore, we compute $\partial_t \text{Tr}(\gOne(t) q_{\varphi}(t))$, and prove bounds on all the terms in this time derivative except for $\text{Tr}(\gOne q_\varphi (\mathcal{N}+1)q_\varphi)$. Then, Theorem~\ref{th:moment method} is proven in Section~\ref{sec_proof_thm_1}, and Theorem~\ref{thm:excitation method} is proven in Section~\ref{sec_proof_thm_2}, each using a different method to control $\text{Tr}(\gOne q_\varphi (\mathcal{N}+1)q_\varphi)$.

    \section{Global Well-posedness of the Mean-Field Dynamics}\label{șec_well_posedness}

The goal of this section is to prove well-posedness of the mean-field dynamics. The mean-field dynamics can be written as 
\begin{equation}\label{pde}
\begin{cases}
i\partial_t \varphi =h^{\alpha_\varphi} \varphi=A \varphi +F(\varphi),
\\ \varphi(0)=\varphi_0,
\end{cases}
\end{equation}
where the linear operator $A$ and the nonlinear operator $F$ are defined as
\begin{align}
A &:= (J-\mu)\mathcal{N} +\frac{U}{2} \mathcal{N}(\mathcal{N} -1),\label{linearpart}
\\ F(\varphi)&:=-J\left( \alpha_{\varphi} a^* +\overline{\alpha_{\varphi}} a-\vert \alpha_{\varphi}\vert^2 \right)\varphi .\label{nonlinearpart}
\end{align}
When examining the semilinear equation \eqref{pde} above, we cannot directly apply fixed-point arguments to study global well-posedness because both \( A \) and \( F \) are unbounded operators, and the nonlinear operator \( F \) is not Lipschitz continuous. Therefore, a different approach is required. Our strategy is to approximate the nonlinear term so that it becomes Lipschitz continuous. This allows us to establish the existence of a unique solution to the approximated problem by standard methods. Then, we show that the obtained solution converges to the solution of the untruncated mean-field equation.
\subsection{Approximating the Mean-field Dynamics}
Let $M>0$ and consider the truncated creation and  annihilation operators 
\begin{align}
a_M:=a \mathds{1}_{\mathcal{N}\leq M}, \qquad a^*_M:=  \mathds{1}_{\mathcal{N}\leq M} a^*
\end{align}
and
\begin{align}
\alpha_M :=\langle \varphi_M, a_M \varphi_M \rangle, 
\end{align}
where $\varphi_M$ is the solution to the  approximated problem
\begin{equation}\label{pdeM}
\begin{cases}
i\partial_t \varphi_M =h_M^{\alpha_M} \varphi_M=A \varphi_M +F_M(\varphi_M),
\\ \varphi_M(0)=\varphi_0\in \mathcal{D}(\mathcal{N}^2),
\end{cases}
\end{equation}
where we have introduced the approximated nonlinear operator $F_M$ as 
\begin{align}
F_M(\varphi_M)&:=-J\left( \alpha_M a^*_M +\overline{\alpha_M} a_M-\vert \alpha_M\vert^2 \right)\varphi_M .\label{appnonlinearpart}
\end{align} 
The solution to \eqref{pdeM} solves the weak form of the preceding nonlinear equation \eqref{pdeM} which is usually known as Duhamel formula
\begin{equation}\label{Duhamel}
\begin{cases}
\varphi_M(t)= \tilde{\varphi}_0(t)-i\int_0^t e^{-i(t-s)A} F_M(\varphi_M(s)) \, ds,
\\ \tilde \varphi_0(t):=e^{-itA} \varphi_0, \qquad \varphi_0 \in \mathcal{D}(\mathcal{N}^{2}).
\end{cases} 
\end{equation}
\begin{remark}
Note the following:
\begin{enumerate}  
\item For the weak formulation \eqref{Duhamel} of the approximated nonlinear equation \eqref{pdeM}, the existence of a unique local solution can be established using fixed-point arguments for a broader class of initial data, specifically \( \varphi_0 \in \ell^2(\mathbb{C}) \), which leads to the existence of a unique local solution \( \varphi_M \in C([0,T], \ell^2(\mathbb{C})) \). This follows from the fact that \( F_M \) is a nonlinear bounded operator satisfying \( F_M(\varphi_M) \in C([0,T], \ell^2(\mathbb{C})) \). However, to extend the solution to global times, we rely on the conservation laws, which require the initial data \( \varphi_0 \in \mathcal{D}(\mathcal{N}^2) \), as $A$ remains an unbounded operator.  
\vskip 2mm  
\item Since \( F_M(\varphi_M) \in C([0,T], \ell^2(\mathbb{C})) \), to ensure the equivalence between \eqref{pdeM} and \eqref{Duhamel}, it is enough to restrict our analysis to initial data \( \varphi_0 \in \mathcal{D}(\mathcal{N}^2) \). For further details, see \cite[Lemma 4.1.1,  Proposition 4.1.6 and Corollary 4.1.8]{cazenave1998introduction}.  
\item One could in addition truncate the unbounded linear term \( A \). On the one hand, this approach ensures equivalence between \eqref{pdeM} and \eqref{Duhamel} for all initial data \( \varphi_0 \in \ell^2(\mathbb{C}) \). On the other hand, it allows us to obtain a unique global strong solution \( \varphi_M \in C(\mathbb{R}, \ell^2(\mathbb{C})) \). However, ensuring convergence to the solution of the mean-field dynamics becomes more complicated.
\end{enumerate}  
\end{remark}

\subsection{Properties of the Approximate Solution}
In this subsection, we state some conservation laws for the approximated problem.
\begin{lem}\label{conservationlaws}
Assume that $\varphi_M$ is a solution to \eqref{pdeM} with $\Vert \varphi_0\Vert_{\ell^2} =1$. Then, the following holds:
\begin{itemize}
\item [(i)] $\Vert \varphi_M(t)\Vert_{\ell^2}  =\Vert \varphi_0\Vert_{\ell^2}=1.$
\item [(ii)] $\langle \varphi_M(t), \mathcal{N} \varphi_M(t) \rangle =\langle \varphi_0, \mathcal{N} \varphi_0 \rangle .$
\item [(iii)]$\langle \varphi_M(t),h_M^{\alpha_M} \varphi_M(t) \rangle =\langle \varphi_0, h_M^{\alpha_M} \varphi_0 \rangle .$
\item [(iv)] $\vert \alpha_M\vert \leq \Vert \mathcal{N}^{1/2}\varphi_0 \Vert_{\ell^2}. $
\item [(v)] Assume that $  \varphi_0 \in \mathcal{D}(\mathcal{N}^{k}) $. Then, there exists a constant $C > 0$ such that
\vskip 1mm
$\langle \varphi_M(t),\mathcal{N}^k \varphi_M(t) \rangle \leq \sum_{j=0}^{2k-2}  \left(  C J k \Vert \mathcal{N}^{1/2}\varphi_0 \Vert_{\ell^2} \right)^j   \left \langle \varphi_0,(\mathcal{N}+j)^{k-\frac{j}{2} }\varphi_0 \right \rangle \frac{t^j}{j!}$.
\end{itemize}
\end{lem}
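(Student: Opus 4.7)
Items (i) through (iv) are standard conservation-law computations. For (i), since $h_M^{\alpha_M}$ is self-adjoint on $\mathcal{D}(\mathcal{N}^2)$, $\frac{d}{dt}\|\varphi_M(t)\|^2 = 0$. For (ii), compute $\frac{d}{dt}\langle\varphi_M,\mathcal{N}\varphi_M\rangle = i\langle\varphi_M,[h_M^{\alpha_M},\mathcal{N}]\varphi_M\rangle$; the identities $[a_M^*,\mathcal{N}] = -a_M^*$ and $[a_M,\mathcal{N}] = a_M$ (from the CCR together with $[\mathds{1}_{\mathcal{N}\le M},\mathcal{N}]=0$) reduce the commutator to $J(\alpha_M a_M^* - \overline{\alpha_M}a_M)$, whose expectation vanishes by the definition $\alpha_M = \langle\varphi_M,a_M\varphi_M\rangle$. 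For (iii), self-adjointness of $h_M^{\alpha_M}$ cancels the $\partial_t\varphi_M$ contributions, and the remaining term $\langle\varphi_M,(\partial_t h_M^{\alpha_M})\varphi_M\rangle$ telescopes to $-J\partial_t(|\alpha_M|^2-|\alpha_M|^2)=0$. For (iv), Cauchy--Schwarz gives $|\alpha_M|\le\|a_M\varphi_M\|$, and $\|a_M\varphi_M\|^2 = \langle\varphi_M,\mathds{1}_{\mathcal{N}\le M}\mathcal{N}\varphi_M\rangle\le\langle\varphi_0,\mathcal{N}\varphi_0\rangle$ by (ii).

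The core of the proof is (v). The plan is to introduce the shifted moments $Q^{(c,m)}(t) := \langle\varphi_M(t),(\mathcal{N}+c)^m\varphi_M(t)\rangle$, defined for $c\ge 0$ and $m>0$ via functional calculus, and derive a Gronwall-type recursion $(c,m)\to(c+1,m-1)$. Only the hopping part of $h_M^{\alpha_M}$ contributes to $\frac{d}{dt}Q^{(c,m)}$ since $A$ commutes with $\mathcal{N}$. The functional-calculus identity $(\mathcal{N}+c)^m a_M^* = a_M^*(\mathcal{N}+c+1)^m$ gives $[a_M^*,(\mathcal{N}+c)^m] = -a_M^* R$ with $R:=(\mathcal{N}+c+1)^m - (\mathcal{N}+c)^m$. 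A direct basis-vector computation shows $a_M^* R\,a_M\le m(\mathcal{N}+c)^m$, while pointwise $R\le m(\mathcal{N}+c+1)^{m-1}$. Splitting the matrix element by Cauchy--Schwarz as $|\langle a_M\varphi_M,R\varphi_M\rangle| \le \|R^{1/2}a_M\varphi_M\|\,\|R^{1/2}\varphi_M\|$ and combining with these bounds yields
\begin{equation*}
\left|\tfrac{d}{dt}Q^{(c,m)}(t)\right|\le 2Jm\,\|\mathcal{N}^{1/2}\varphi_0\|\,\sqrt{Q^{(c,m)}(t)\,Q^{(c+1,m-1)}(t)},
\end{equation*}
equivalent to $\frac{d}{dt}\sqrt{Q^{(c,m)}(t)}\le Jm\,\|\mathcal{N}^{1/2}\varphi_0\|\sqrt{Q^{(c+1,m-1)}(t)}$.

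Iterating this recursion from $(c,m)=(0,k)$ until $m=1$, at which point $Q^{(c',1)}(t) = \|\mathcal{N}^{1/2}\varphi_0\|^2 + c'$ is constant in time by (ii), produces the polynomial bound
\begin{equation*}
\sqrt{Q_k(t)}\le\sum_{j=0}^{k-1}\frac{(Jk\|\mathcal{N}^{1/2}\varphi_0\|t)^j}{j!}\,\sqrt{\langle\varphi_0,(\mathcal{N}+j)^{k-j}\varphi_0\rangle}
\end{equation*}
after using $\frac{k!}{(k-j)!}\le k^j$. Squaring, bounding the resulting cross terms by $\sqrt{\langle A\rangle\langle B\rangle}\le \frac{1}{2}(\langle A\rangle + \langle B\rangle)$, symmetrizing in $j_1\leftrightarrow j_2$, reindexing by $j=j_1+j_2\in\{0,\ldots,2k-2\}$, and applying the monotonicity $(\mathcal{N}+j_1)^{k-j_1}\le(\mathcal{N}+j_1+j_2)^{k-(j_1+j_2)/2}$ (valid for $j_1\ge j_2$ and $j_1+j_2\ge 1$, since then $\mathcal{N}+j_1+j_2\ge 1$ and the exponent is non-decreasing under this condition) yields the stated bound with $C$ absorbing binomial and combinatorial prefactors. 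The main obstacle is the commutator/Cauchy--Schwarz step producing the sharp $\sqrt{Q^{(c,m)}\,Q^{(c+1,m-1)}}$ form rather than higher-degree moments; this hinges on the operator identity $a_M^* R\,a_M\le m(\mathcal{N}+c)^m$, which follows from $a_M^*\mathds{1}_{\mathcal{N}\le M}a_M = \mathds{1}_{\mathcal{N}\le M}\mathcal{N}$ combined with the pointwise bound on $R$. The remaining steps are routine bookkeeping, with the truncation projections $\mathds{1}_{\mathcal{N}\le M}$ dropping out under each inequality used.
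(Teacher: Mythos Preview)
Your treatment of (i)--(iv) is correct and matches the paper. The differential inequality you derive for (v),
\[
\left|\tfrac{d}{dt}Q^{(c,m)}(t)\right|\le 2Jm\,\|\mathcal{N}^{1/2}\varphi_0\|\,\sqrt{Q^{(c,m)}(t)\,Q^{(c+1,m-1)}(t)},
\]
is also correct, and the iterated bound on $\sqrt{Q^{(0,k)}(t)}$ follows. The gap is in the squaring step. After AM--GM and the symmetry of the coefficients you are left with a sum containing $A_{j_1}:=\langle\varphi_0,(\mathcal{N}+j_1)^{k-j_1}\varphi_0\rangle$ for \emph{every} pair $(j_1,j_2)$, and your monotonicity $(\mathcal{N}+j_1)^{k-j_1}\le(\mathcal{N}+j)^{k-j/2}$ with $j=j_1+j_2$ only covers the case $j_1\ge j_2$. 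For $j_1<j_2$ the exponent $k-j_1$ on the left \emph{exceeds} $k-j/2$, and the inequality fails. No universal constant absorbs this: for $k=2$ take $\varphi_0=\epsilon|n\rangle+\sqrt{1-\epsilon^2}\,|0\rangle$ with $\epsilon^2=n^{-3/2}$; then the cross term $\sqrt{\langle\mathcal{N}^2\rangle\langle\mathcal{N}+1\rangle}\sim n^{1/4}$ while the target $\langle(\mathcal{N}+1)^{3/2}\rangle\sim 1$. Even after restoring the prefactor $\|\mathcal{N}^{1/2}\varphi_0\|\sim n^{-1/4}$, your $j=1$ term is of order $t$, whereas the lemma's $j=1$ term is of order $n^{-1/4}t\to 0$. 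So your squared bound is genuinely too weak to recover the stated estimate.

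The paper sidesteps this entirely by iterating the differential inequality for $Q$ itself rather than $\sqrt{Q}$: from
\[
\tfrac{d}{dt}\langle(\mathcal{N}+n)^k\rangle \le 2kJ\|\mathcal{N}^{1/2}\varphi_0\|\,\langle(\mathcal{N}+n+1)^{k-1/2}\rangle
\]
one reduces the exponent by $\tfrac12$ at each step, and after $2k-2$ integrations reaches exponent $1$, which is conserved. This produces the sum over $j=0,\dots,2k-2$ directly, with the shifts $(\mathcal{N}+j)^{k-j/2}$ arising naturally and no squaring needed. Your Cauchy--Schwarz split already contains the ingredients for this: instead of bounding $\|R^{1/2}a_M\varphi_M\|\,\|R^{1/2}\varphi_M\|$ by $m\sqrt{Q^{(c,m)}Q^{(c+1,m-1)}}$, bound it by $m\,Q^{(c+1,m-1/2)}$ (using $a_M^*Ra_M\le m(\mathcal{N}+c+1)^{m-1}\mathcal{N}\le m(\mathcal{N}+c+1)^m$ together with $R\le m(\mathcal{N}+c+1)^{m-1}$), and iterate on $Q$ in half-steps.
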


\begin{proof}
Statement (i) is true by definition of the truncation. For (ii), note that 
\begin{equation}
[h_M^{\alpha_M}, \mathcal{N}]= -J(-\alpha_M a_M^* +\overline{\alpha_M} a_M).
\end{equation}
This gives
\begin{align*}
\frac{d}{dt}\langle \varphi_M ,\mathcal{N} \varphi_M\rangle=i \langle \varphi_M , [h_M^{\alpha_M}, \mathcal{N}] \varphi_M \rangle 
=-iJ  \langle \varphi_M , (-\alpha_M a_M^* +\overline{\alpha_M} a_M) \varphi_M \rangle 
=0. 
\end{align*}
For (iii), we have 
\begin{align*}
\frac{d}{dt}\langle \varphi_M ,h_M^{\alpha_M} \varphi_M\rangle&=i \langle \varphi_M , [h_M^{\alpha_M}, h_M^{\alpha_M}] \varphi_M \rangle + \langle \varphi_M , \partial_t h_M^{\alpha_M} \varphi_M \rangle 
\\&=-J  \left \langle \varphi_M , \left(-\partial_t \alpha_M a_M^* +\partial_t \overline{\alpha_M} a_M -\partial_t \alpha_M \overline{\alpha_M} - \partial_t\overline{\alpha_M} {\alpha_M} \right) \varphi_M  \right \rangle 
=0. 
\end{align*}
Statement (iv) follows directly from Cauchy--Schwarz and (i)-(ii). For (v), note that
\begin{equation}
\begin{aligned}
\frac{d}{dt}\langle \varphi_M ,\mathcal{N}^k \varphi_M \rangle  &= -iJ \alpha_M  \langle \varphi_M ,[a^*_M, \mathcal{N}^k] \varphi_M \rangle-iJ \overline{\alpha_M } \langle \varphi_M ,[a_M, \mathcal{N}^k] \varphi_M \rangle
\\ &  = -iJ \alpha_M  \langle \varphi_M ,\mathds{1}_{\mathcal{N}\leq M} [a^*, \mathcal{N}^k] \varphi_M \rangle-iJ \overline{\alpha_M } \langle \varphi_M ,[a, \mathcal{N}^k] \mathds{1}_{\mathcal{N}\leq M} \varphi_M \rangle
\\ &  = 2J \Im   \left(  \alpha_M  \langle \varphi_M ,\mathds{1}_{\mathcal{N}\leq M} [a^*, \mathcal{N}^k] \varphi_M \rangle\right) 
\\ & \leq 2 k\vert J \vert  \vert \alpha_M \vert \vert \langle  \varphi_M ,\mathds{1}_{\mathcal{N}\leq M} a^* \mathcal{N}^{k-1}  \varphi_M \rangle \vert 
\\ & \leq 2k \vert J\vert \Vert \mathcal{N}^{1/2}\varphi_0 \Vert_{\ell^2} \langle \varphi_M , (\mathcal{N}+1) ^{k-\frac{1}{2}} \varphi_M\rangle . 
\end{aligned}
\end{equation}
Iterating this $(2k-2)$ times leads to 
\begin{equation}\label{mombounds}
\langle \varphi_M ,\mathcal{N}^k \varphi_M \rangle  \leq  \sum_{j=0}^{2k-2}  \left(  C  \vert J\vert k \Vert \mathcal{N}^{1/2}\varphi_0 \Vert_{\ell^2} \right)^j   \left \langle \varphi_0,(\mathcal{N}+j)^{k-\frac{j}{2} }\varphi_0 \right \rangle \frac{t^j}{j!}.
\end{equation}
\end{proof}

\subsection{Global Well-posedness of the Approximated Problem}
For the approximated problem, proving global well-posedness is straightforward due to the use of standard techniques such as fixed-point arguments, particularly because the nonlinearity in this case is Lipschitz. We have the following results.
\begin{lem}
For any  fixed $M>0$, we have the following statements:  
\begin{itemize}
\item [(i)] There exists a unique global strong solution  $\varphi_M(\cdot ) \in \cC(\RR,\mathcal{D}(\mathcal{N}^{2}))$ of the Duhamel formula \eqref{Duhamel}.
\item [(ii)] There exists a unique global strong solution   $\varphi_M(\cdot ) \in \cC(\RR,\mathcal{D}(\mathcal{N}^{2}))\cap \cC^1(\RR,\ell^2(\mathbb{C}))$ of the approximated  problem  \eqref{pdeM}.
\end{itemize}
\end{lem}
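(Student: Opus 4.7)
The plan is to obtain (i) by a standard contraction-mapping argument applied to the Duhamel map on $\cC([0,T], \ell^2(\C))$, then upgrade to global times via the $\ell^2$-conservation law, then propagate $\mathcal{D}(\mathcal{N}^2)$-regularity using the moment bounds of Lemma~\ref{conservationlaws}, and finally deduce (ii) from (i) by reading the Duhamel identity back as a strong equation.

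First I would check that the operator $F_M$ is Lipschitz on bounded sets of $\ell^2(\C)$. Because of the cutoff $\mathds{1}_{\mathcal{N} \leq M}$, both $a_M$ and $a_M^*$ are bounded by $\sqrt{M+1}$, and the map $\varphi \mapsto \alpha_M(\varphi) = \langle \varphi, a_M \varphi \rangle$ is smooth and bounded by $\sqrt{M+1}\,\|\varphi\|_{\ell^2}^2$. Writing out $F_M(\varphi)-F_M(\psi)$ and using Cauchy--Schwarz gives a Lipschitz estimate of the form $\|F_M(\varphi)-F_M(\psi)\|_{\ell^2} \leq C_M(1+\|\varphi\|_{\ell^2}^2 + \|\psi\|_{\ell^2}^2)\,\|\varphi - \psi\|_{\ell^2}$. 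Since $A$ defined by \eqref{linearpart} is self-adjoint and non-negative on $\mathcal{D}(\mathcal{N}^2)$, it generates a unitary group $e^{-itA}$ on $\ell^2(\C)$. Defining the Duhamel map $\Phi(\varphi)(t) := \tilde{\varphi}_0(t) - i\int_0^t e^{-i(t-s)A} F_M(\varphi(s))\,ds$ and working on a closed ball in $\cC([0,T], \ell^2(\C))$, a standard Banach fixed-point argument for small $T = T(M, \|\varphi_0\|_{\ell^2})$ yields a unique local mild solution.

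To extend globally, I would observe that any $\ell^2$-solution of \eqref{Duhamel} conserves the $\ell^2$-norm (by the same computation as in Lemma~\ref{conservationlaws}(i), which only uses the mild formulation and the unitarity of $e^{-itA}$). Hence the Lipschitz constant on the trajectory stays bounded by $C_M(1+\|\varphi_0\|_{\ell^2}^2)$, the local existence time $T$ is bounded below uniformly, and one iterates to build a unique solution $\varphi_M \in \cC(\R, \ell^2(\C))$. To upgrade to $\cC(\R, \mathcal{D}(\mathcal{N}^2))$, I would apply the same contraction argument in the Banach space $\cC([0,T], \mathcal{D}(\mathcal{N}^2))$ equipped with the graph norm of $\mathcal{N}^2$: this requires showing that $F_M$ maps $\mathcal{D}(\mathcal{N}^2)$ into itself with a Lipschitz estimate in the graph norm, which follows because $\mathcal{N}^2 a_M = a_M \mathcal{N}^2 + [\mathcal{N}^2, a_M]$ and the commutator produces only lower-order terms multiplied by the bounded projector $\mathds{1}_{\mathcal{N}\leq M}$. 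Globalization in this norm is then provided directly by the a priori moment bound of Lemma~\ref{conservationlaws}(v) with $k=2$, which gives $\|\mathcal{N}^2 \varphi_M(t)\|_{\ell^2} \leq P(t)$ for some polynomial $P$ independent of the local existence time.

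Finally, (ii) follows from (i) by differentiating the Duhamel formula: since $\varphi_M(t) \in \mathcal{D}(A) = \mathcal{D}(\mathcal{N}^2)$ with $t \mapsto \varphi_M(t)$ continuous in the graph norm, both $A\varphi_M(\cdot)$ and $F_M(\varphi_M(\cdot))$ lie in $\cC(\R, \ell^2(\C))$, so standard semigroup theory (e.g., the cited results in \cite{cazenave1998introduction}) gives $\varphi_M \in \cC^1(\R, \ell^2(\C))$ and $i\partial_t \varphi_M = A\varphi_M + F_M(\varphi_M)$ pointwise. The only delicate step I expect is keeping the moment estimates clean when propagating $\mathcal{D}(\mathcal{N}^2)$-regularity: the commutator identities must be justified on $\mathcal{D}(\mathcal{N}^2)$ rather than formally, but the cutoff $\mathds{1}_{\mathcal{N}\leq M}$ makes every expression finite-rank in the relevant sense, so no real analytic difficulty arises. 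Uniqueness in both parts follows from Gronwall's lemma applied to the difference of two solutions, using the local Lipschitz bound on $F_M$ together with conservation of $\ell^2$-norm.
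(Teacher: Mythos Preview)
Your proposal is correct and follows essentially the same strategy as the paper: a Banach fixed-point argument using that $F_M$ is locally Lipschitz (thanks to the cutoff), followed by globalization via the conservation laws and moment bounds of Lemma~\ref{conservationlaws}. The only cosmetic difference is ordering: the paper runs the contraction directly in $\cC([0,T],\mathcal{D}(\mathcal{N}^2))$ and then globalizes, whereas you first close the argument in $\ell^2$ and then upgrade regularity---an alternative the paper itself notes in the remark following the proof.
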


\begin{proof}
Let \( X = \mathcal{C}([0,T], \mathcal{D}(\mathcal{N}^{2})) \) denote the space of continuous functions from \( [0,T] \) to \( \mathcal{D}(\mathcal{N}^{2}) \), equipped with the norm
\[ \vert \vert \vert \varphi \vert \vert \vert : = \sup_{t\in [0,T]} \Vert \varphi(t) \Vert_{ \mathcal{D}(\mathcal{N}^{2})},\qquad     \Vert \varphi(t) \Vert^2_{ \mathcal{D}(\mathcal{N}^{2})}= \Vert \varphi(t) \Vert_{\ell^2}^2+\Vert \mathcal{N}^{2}  \varphi(t) \Vert_{\ell^2}^2.\]
 Note that $\left(\mathcal{D}(\mathcal{N}^{2}),  \Vert \cdot \Vert_{ \mathcal{D}(\mathcal{N}^{2})} \right) $ is a Banach space.
For a fixed \( M > 0 \), we define the map \( \Gamma_M: X \rightarrow X \) by
\[\Gamma_M(\varphi)(t):=\tilde \varphi_0(t) -i \int_0^t e^{-i(t-s)A} F_M(\varphi(s)) ds,\]
 with $\tilde \varphi_0(t):=e^{-it A}\varphi_0 $ and where $A$ and  $F_M$  are  defined in \eqref{linearpart}and \eqref{appnonlinearpart}.
We can check that, for any  $T>0$, the map $\Gamma_M$ is Lipschitz-continuous. More precisely, we claim that  for all $\varphi_1,\varphi_2\in X$,
\begin{align*}
 \vert \vert \vert  \Gamma_M(\varphi_1) - \Gamma_M (\varphi_2)\vert \vert \vert \leq C(M,J,T) \vert \vert \vert \varphi_1 - \varphi_2 \vert \vert \vert ,
\end{align*}
where $C(M,J,T)>0$ is defined as
\[ C(M,J,T):=MT \vert J\vert (6c^2+6c^2{M}^2 +10c^4),\]
and $c>0$ as
\[c:=\max_{i=1,2} \sup_{[0,T]} \Vert \varphi_i (t) \Vert_{\mathcal{D}(\mathcal{N}^{2})}<\infty.\] 
To prove the claim, we need first to establish some useful estimates. To this end, we denote $ \alpha_{M,i}(s):=\langle \varphi_i(s), a_M\varphi_i(s) \rangle$. We have,  for $k\geq0$ and for $i=1,2$,
\begin{align}
& \Vert \mathcal{N}^k  a_M^\sharp\varphi_i(s)\Vert_{\ell^2} \leq M^{k+\frac{1}{2}} \Vert \varphi_i(s)\Vert_{\ell^2}\leq M^{k+\frac{1}{2}} c, \quad \sharp \in\{\ \  ,* \}, \label{amphi}
\\ & \vert \alpha_{M,i} (s)\vert \leq \sqrt{M} \Vert \varphi_i(s)\Vert_{\ell^2}^2\leq \sqrt{M} c^2,\label{alpha}
\\& \vert \alpha_{M,1}(s)-\alpha_{M,2}(s)\vert   \leq 2\sqrt{M} c \Vert \varphi_1(s) -\varphi_2(s)\Vert_{\ell^2},  \label{diffalpha}
\\& \left \vert \vert \alpha_{M,1}(s)\vert^2-\vert\alpha_{M,2}(s)\vert^2 \right \vert \leq 4M c^3 \Vert \varphi_1(s) -\varphi_2(s)\Vert_{\ell^2}. \label{diffsquarealpha}
\end{align}
We have then for all $t\in [0,T]$,
 \begin{align*}
 \Vert \Gamma_M(\varphi_1)(t) - \Gamma_M (\varphi_2)(t)\Vert_{\ell^2} & = \left \Vert \int_0^t e^{-i(t-s)A} \left( F_M(\varphi_1(s))-F_M(\varphi_2(s))\right) \, ds\right \Vert_{\ell^2}
\\& \leq  \vert J\vert \int_0^t   \bigg(  \left \Vert  \overline{\alpha_{M,1}(s)} a_M \varphi_1(s) -  \overline{\alpha_{M,2}(s)}  a_M \varphi_2(s) \right  \Vert_{\ell^2}
 \\& \qquad \quad\quad \ \ +\left \Vert  \alpha_{M,1}(s)   a_M^* \varphi_1(s) - \alpha_{M,2}(s) a_M^* \varphi_2(s) \right  \Vert_{\ell^2} 
\\ & \qquad \quad \quad \ \ + \left \Vert  \vert {\alpha_{M,1}(s)}  \vert^2  \varphi_1(s) -  \vert  {\alpha_{M,2}(s)}  \vert^2 \varphi_2(s) \right \Vert_{\ell^2}  \bigg) ds.  
\end{align*}
We begin by considering the first term. Using \eqref{amphi} with \( k=0 \), along with \eqref{alpha} and \eqref{diffalpha}, we get 
\begin{align*}
 &\left \Vert  \overline{\alpha_{M,1}(s)} a_M \varphi_1(s) -  \overline{\alpha_{M,2}(s)}  a_M \varphi_2(s) \right  \Vert_{\ell^2}
 \\ &\quad\leq  \vert \alpha_{M,1}(s)-\alpha_{M,2}(s)\vert  \Vert a_M \varphi_1(s)\Vert_{\ell^2} +\vert \alpha_{M,2}(s) \vert  \Vert a_M(\varphi_1(s)-\varphi_2(s))\Vert_{\ell^2}
 \\ &\quad \leq  3 {M} c^2 \Vert \varphi_1(s) -\varphi_2(s)\Vert_{\ell^2} .
\end{align*}
Similarly, for the second term, applying \eqref{amphi} with \( k=0 \), along with \eqref{alpha} and \eqref{diffalpha}, we obtain
\begin{align*}
 \left \Vert  {\alpha_{M,1}(s)} a_M^* \varphi_1(s) -  {\alpha_{M,2}(s)}  a_M^* \varphi_2(s) \right  \Vert_{\ell^2}
\leq  3 {M} c^2 \Vert \varphi_1(s) -\varphi_2(s)\Vert_{\ell^2} .
\end{align*}
Finally, for the last term, by using \eqref{amphi} with \( k=0 \), along with \eqref{alpha} and \eqref{diffsquarealpha}, we obtain 
\begin{align*}
 &\left \Vert  \vert {\alpha_{M,1}(s)}  \vert^2  \varphi_1(s) -  \vert  {\alpha_{M,2}(s)}  \vert^2 \varphi_2(s) \right \Vert_{\ell^2}
 \\ &\quad\leq\left \vert \vert {\alpha_{M,1}(s)}  \vert^2-\vert {\alpha_{M,2}(s)}  \vert^2\right \vert  \Vert \varphi_1(s)\Vert_{\ell^2} + \vert {\alpha_{M,2}(s)}  \vert^2 \Vert \varphi_1(s) -\varphi_2(s)\Vert_{\ell^2}
 \\ &\quad \leq  5 {M} c^4 \Vert \varphi_1(s) -\varphi_2(s)\Vert_{\ell^2} .
\end{align*}
To summarize, we obtain
 \begin{align}\label{l2part}
 \Vert  \Gamma_M(\varphi_1)(t) - \Gamma_M (\varphi_2)(t)\Vert_{\ell^2}  \leq MT \vert J\vert (6c^2 +5c^4)\vert \vert  \vert\varphi_1 -\varphi_2\vert \vert  \vert.
\end{align}
More generally, for any $k\geq 0$, we have 
\begin{align}\label{Nkpart}
 \Vert \mathcal{N}^{k} \left(  \Gamma_M(\varphi_1)(t) - \Gamma_M (\varphi_2)(t) \right) \Vert_{\ell^2}  \leq MT \vert J\vert (6{M}^kc^2 +5c^4) \sup_{[0,T]}  \Vert\varphi_1(s) -\varphi_2(s)\Vert_{\mathcal{D}(\mathcal{N}^k)}.
\end{align}
Specifically, for the other component of the norm, we have
\begin{align}\label{Npart}
 \Vert \mathcal{N}^{2} \left(  \Gamma_M(\varphi_1)(t) - \Gamma_M (\varphi_2)(t) \right) \Vert_{\ell^2}  \leq MT \vert J\vert (6{M}^2c^2 +5c^4)\vert \vert  \vert\varphi_1 -\varphi_2\vert \vert  \vert.
\end{align}
By combining the two estimates \eqref{l2part} and \eqref{Npart} above, we obtain
 \begin{align*}
 \vert \vert \vert   \Gamma_M(\varphi_1) - \Gamma_M (\varphi_2)\vert \vert \vert   \leq MT \vert J\vert (6c^2+6c^2{M}^2 +10c^4)\vert \vert  \vert\varphi_1 -\varphi_2\vert \vert  \vert.
\end{align*}
Considering the semilinear equation of the form \eqref{pdeM}, and noting that our nonlinearity is Lipschitz continuous (or can be made a contraction by choosing \( T \) sufficiently small), we can approach the problem in two ways. First, we can apply the local well-posedness results from \cite{cazenave1998introduction}, specifically \cite[Lemma 4.3.2 and Proposition 4.3.3]{cazenave1998introduction}, to obtain a unique local  solution. Then, we can extend this solution globally using \cite[Theorem 4.3.4]{cazenave1998introduction} by employing conservation laws, including  the norm and the moment bounds (i)-(v) in Lemma \ref{conservationlaws}. On the other hand, we can establish global well-posedness by directly applying the Banach fixed point arguments. To this end, we consider the closed ball on the Banach space $X$ defined by 
\[B_X(\tilde \varphi_0,R]:=\{\varphi \in X; \quad \vert \vert \vert \varphi-\tilde \varphi_0 \vert \vert \vert  \leq R\}.\] 
Then, we check that for $T>0$ small enough and for $R>0$ large enough, the map $\Gamma_M$ satisfies the condition of the Banach fixed point theorem, namely
\begin{itemize}
\item $\Gamma_M$ maps $B_X(\tilde \varphi_0,R]$ into itself,
\item $\Gamma_M:(X,\vert \vert \vert  \cdot \vert \vert \vert ) \to (X,\vert \vert \vert  \cdot \vert \vert \vert )$ is a contraction map, 
\end{itemize}
guaranteeing the existence of a fixed point ($\varphi_M=\Gamma_M(\varphi_M) \in X$). The solution can then be extended globally using the same conservation laws employed for the first approach. 

\begin{remark}
In the above theorem, we can also apply the fixed point theorem in the Banach space \( X = C([0,T], \ell^2(\mathbb{C})) \), which guarantees the existence of a unique local solution \( \varphi_M \in C([0,T], \ell^2(\mathbb{C})) \). Subsequently, we can globalize the solution (which is equivalent to obtaining an estimate of the norm $\Vert \varphi_M (t) \Vert_{\ell^2} $ on  $[0,T]$) for the set of initial data \( \varphi_0 \in \mathcal{D}(\mathcal{N}^2) \), ensuring that 
\[
\Vert \varphi_M (t)\Vert_{\ell^2} = \Vert \varphi_0\Vert_{\ell^2}.
\]
The above estimate  implies that
\[
\lim_{t\uparrow T} \Vert \varphi_M (t)\Vert_{\ell^2} = \Vert \varphi_0\Vert_{\ell^2} < \infty,
\]
which  guarantees that the solution does not blow up in finite time and thus \( T = \infty \).
\end{remark}

\end{proof}

\subsection{Convergence}
 By (i) and (v) from Lemma~\ref{conservationlaws}, we have that $(\varphi_M)_{M\in \mathbb{N}}$ and  $(\mathcal{N}^k\varphi_M)_{M\in \mathbb{N}}$ are   bounded sequences in the Hilbert space $\ell^2(\mathbb{C})$. Then there exist a convergent subsequence still denoted by $(\varphi_{M})_{M\in \mathbb{N}}$ such that 
\begin{itemize}
\item $\varphi_{M}$ converges weakly to $\varphi$ and the limit is unique,
\item $\mathcal{N}^k\varphi_{M}$ converges weakly to $\mathcal{N}^k\varphi$ for all $k\in \mathbb{R}^+$.
\end{itemize}
As a consequence of this convergence, we have 
\begin{align*}
&\Vert \varphi\Vert_{\ell^2} \leq 	\liminf_	{M\rightarrow \infty}  \Vert \varphi_M\Vert_{\ell^2} ,
\\&  \Vert \mathcal{N}^k \varphi\Vert_{\ell^2} \leq 	\liminf_	{M\rightarrow \infty} \Vert \mathcal{N}^k \varphi_M\Vert_{\ell^2}.
\end{align*}
This in fact implies strong convergence.

\begin{lem}\label{strongconv}
Let  $(\varphi_M)_{M}$ be a sequence of  solutions   to  \eqref{pdeM} with $\Vert \varphi_0\Vert_{\ell^2} =1$ and $\varphi$ its associated  weak limit. Then we have for all $k\geq 0$,
\begin{align}
\Vert \mathcal{N}^k( \varphi_{M}(t) -\varphi(t))  \Vert_{\ell^2}  \underset{M\rightarrow \infty}{\longrightarrow} 0.
\end{align}
\end{lem}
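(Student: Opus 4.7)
The plan is to upgrade the already-established weak convergences $\varphi_M(t) \rightharpoonup \varphi(t)$ and $\mathcal{N}^k \varphi_M(t) \rightharpoonup \mathcal{N}^k \varphi(t)$ to strong convergence in $\ell^2(\mathbb{C})$, using a standard tail-vs-head decomposition that exploits the uniform-in-$M$ moment bounds propagated by Lemma~\ref{conservationlaws}(v). The key structural observation is that the spectral projections of $\mathcal{N}$ onto bounded intervals are finite-rank, which turns weak convergence into strong convergence after cutoff, while higher moments force the tail to be small.

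Concretely, fix $t \geq 0$ and $k \geq 0$, and assume $\varphi_0 \in \mathcal{D}(\mathcal{N}^{\kappa})$ for some $\kappa > k$, so that by Lemma~\ref{conservationlaws}(v) the quantity $\Vert \mathcal{N}^{\kappa} \varphi_M(t) \Vert_{\ell^2}$ is bounded uniformly in $M$; the analogous bound for $\varphi(t)$ follows from weak lower semicontinuity of the norm. Introduce for $R \in \mathbb{N}$ the spectral projection $P_R := \mathds{1}_{\mathcal{N} \leq R}$, which is of rank $R+1$ on $\ell^2(\mathbb{C})$. Then I would write
\begin{align*}
\Vert \mathcal{N}^k (\varphi_M(t) - \varphi(t)) \Vert_{\ell^2}
\leq \Vert \mathcal{N}^k P_R (\varphi_M(t) - \varphi(t)) \Vert_{\ell^2}
+ \Vert \mathcal{N}^k (1-P_R)(\varphi_M(t) - \varphi(t)) \Vert_{\ell^2}.
\end{align*}

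For the tail, on the range of $1-P_R$ one has $\mathcal{N}^k \leq R^{k-\kappa} \mathcal{N}^{\kappa}$ as self-adjoint operators, so
\begin{align*}
\Vert \mathcal{N}^k (1-P_R)(\varphi_M(t) - \varphi(t)) \Vert_{\ell^2}
\leq R^{k-\kappa} \bigl( \Vert \mathcal{N}^{\kappa} \varphi_M(t) \Vert_{\ell^2} + \Vert \mathcal{N}^{\kappa} \varphi(t) \Vert_{\ell^2} \bigr),
\end{align*}
which is $O(R^{k-\kappa})$ uniformly in $M$ and hence goes to $0$ as $R \to \infty$. For the head, $P_R$ has finite-dimensional range $\operatorname{span}\{\ket{0},\ldots,\ket{R}\}$, on which weak and strong topology coincide; therefore $P_R(\varphi_M(t) - \varphi(t)) \to 0$ in $\ell^2(\mathbb{C})$ as $M \to \infty$, and consequently
\begin{align*}
\Vert \mathcal{N}^k P_R (\varphi_M(t) - \varphi(t)) \Vert_{\ell^2} \leq R^k \Vert P_R (\varphi_M(t) - \varphi(t)) \Vert_{\ell^2} \xrightarrow[M\to\infty]{} 0.
\end{align*}
Taking first $M \to \infty$ and then $R \to \infty$ yields the claim.

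There is no deep obstacle here: the argument is a compactness-by-tightness scheme, and the only point that must be handled carefully is ensuring that Lemma~\ref{conservationlaws}(v) can be applied at order $\kappa > k$, i.e., that the initial data is regular enough to propagate one more moment than the $k$ at which one wants strong convergence. For $k$ up to which the hypothesis $\varphi_0 \in \mathcal{D}(\mathcal{N}^2)$ permits (and more generally whenever $\varphi_0$ lies in $\mathcal{D}(\mathcal{N}^{\kappa})$ for some $\kappa > k$), the two-scale $R,M$ limit closes the estimate.
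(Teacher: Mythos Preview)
Your argument is correct and takes a genuinely different route from the paper. The paper proceeds operator-algebraically: it passes to the rank-one projections $p_{\varphi_M}$, extracts a weak$^*$-convergent subsequence in $\mathcal{L}^1(\ell^2(\mathbb{C}))$, uses compactness of $(\mathcal{N}+1)^{-1}$ to show $\mathrm{Tr}(\mathcal{N}^k p_{\varphi_M}) \to \mathrm{Tr}(\mathcal{N}^k \nu)$, invokes a result from \cite{simon2005trace} to upgrade to trace-norm convergence, and finally identifies the limit projection with $p_\varphi$; strong convergence of $\mathcal{N}^k\varphi_M$ then follows from weak convergence plus convergence of norms. Your approach is the direct tightness argument: finite-rank spectral cutoffs $P_R$ convert weak convergence to strong on the head, while one extra propagated moment controls the tail uniformly in $M$. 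Your proof is shorter and avoids the trace-class machinery entirely; the paper's approach is more in the density-matrix spirit of the rest of the article but is heavier. Both arguments share the same hidden regularity requirement---one needs control of $\mathcal{N}^\kappa\varphi_M$ for some $\kappa>k$ (the paper uses $\kappa=k+1$ in the step $\mathrm{Tr}(\mathcal{N}^{-1}\mathcal{N}^{k+1}p_{\varphi_M})$)---and you are explicit about this, which is a virtue.
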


\begin{proof}
This follows from the \(\text{weak}^*\) convergence in the Banach space \(\mathcal{L}^1(\ell^2(\mathbb{C})) = (\mathcal{K}(\ell^2(\mathbb{C})))^*\), where $\mathcal{L}^1(\ell^2(\mathbb{C})) $ and  \(\mathcal{K}(\ell^2(\mathbb{C}))\) denote the space of trace-class  and compact operators, respectively. Let \(p_{\varphi_M} = |\varphi_M\rangle \langle \varphi_M|\) be the projection onto the state \(\varphi_M\), so taht in particular \(p_{\varphi_M}^2 = p_{\varphi_M}\). We also have  
\[
\text{Tr}(p_{\varphi_M}) = \|\varphi_M\|_{\ell^2}^2 = 1, \qquad \text{Tr}(\mathcal{N}^k p_{\varphi_M}) = \langle \varphi_M, \mathcal{N}^k \varphi_M \rangle < \infty.
\]
The second bound is a consequence of part (v) of Lemma \ref{conservationlaws}. This ensures the existence of a subsequence, still denoted by \((p_{\varphi_M})_M\), such that  
\[
p_{\varphi_M} \stackrel{*}{\rightharpoonup} \nu \quad \text{as } M \to \infty \quad \text{ weakly * in } \mathcal{L}^1(\ell^2(\mathbb{C})),
\]
\[
\mathcal{N}^k p_{\varphi_M} \stackrel{*}{\rightharpoonup} \mathcal{N}^k \nu \quad \text{as } M \to \infty \quad \text{weakly * in } \mathcal{L}^1(\ell^2(\mathbb{C})).
\]
For any compact operator \(B \in \mathcal{K}(\ell^2(\mathbb{C}))\), this implies
\[
\text{Tr}(p_{\varphi_M} B) \to \text{Tr}(\nu B) \quad \text{and} \quad \text{Tr}(\mathcal{N}^k p_{\varphi_M} B) \to \text{Tr}(\mathcal{N}^k \nu B) \quad \text{as } M \to \infty.
\]
For \(k \geq 0\), this leads to
\begin{align} \label{normconv}
\text{Tr}(\mathcal{N}^k p_{\varphi_M}) = \text{Tr}(\mathcal{N}^{-1} \mathcal{N}^{k+1} p_{\varphi_M}) \to \text{Tr}(\mathcal{N}^{-1} \mathcal{N}^{k+1} \nu) = \text{Tr}(\mathcal{N}^k \nu).
\end{align}
Specifically, for $k=0$ we have
\[
\text{Tr}(p_{\varphi_M}) = \|\varphi_M\|_{\ell^2}^2 \to \text{Tr}(\nu),
\]
which implies \(\text{Tr}(\nu) = 1\).
Now, using results from \cite{simon2005trace}, we obtain strong convergence for all \(k \geq 0\), 
\[
\|\mathcal{N}^k p_{\varphi_M} - \mathcal{N}^k \nu\|_{\mathcal{L}^1} ={\rm Tr} \left( \left\vert \mathcal{N}^k p_{\varphi_M} - \mathcal{N}^k \nu\ \right \vert \right)\to 0 \quad \text{as } M \to \infty.
\]
Since both \(p_{\varphi_M}\) and \(\nu\) are bounded in norm, we conclude that \(p_{\varphi_M}^2 = p_{\varphi_M}\) converges strongly to \(\nu = \nu^2\). Therefore, \(\nu\) is a projection, and \(\nu = P_\chi=|\chi \rangle \langle \chi|\). Additionally, we have
\[
\text{Tr}(p_{\varphi_M} \nu) = |\langle \varphi_M, \chi \rangle|^2 \to \text{Tr}(\nu^2) = \text{Tr}(\nu) = 1.
\]
On the other hand, we also know
\[
\langle \varphi_M, \chi \rangle \to \langle \varphi, \chi \rangle,
\]
which implies \(|\langle \varphi, \chi \rangle|^2 = 1\), leading to \(P_\varphi = P_\chi\). Therefore, by exploiting \eqref{normconv},  we have 
\[
\mathcal{N}^k \varphi_M \rightharpoonup \mathcal{N}^k \varphi \quad \text{and} \quad \|\mathcal{N}^k \varphi_M\|_{\ell^2} \to \|\mathcal{N}^k \varphi\|_{\ell^2}.
\]
Since \(\ell^2\) is a Hilbert space, these results imply strong convergence
\[
\|\mathcal{N}^k (\varphi_M - \varphi)\|_{\ell^2} \to 0 \quad \text{as } M \to \infty.
\]
\end{proof}

Next, we show that the limit indeed satisfies the corresponding mean-field equation.
\begin{lem}
Let  $(\varphi_M)_{M}$ be a sequence of  solutions  to  \eqref{pdeM} and assume $\Vert \varphi_0\Vert_{\ell^2} =1$. Then the limit $\varphi$ satisfies the Duhamel version  of the mean-field dynamics \eqref{pde},
\begin{equation}
  \varphi(t) =   \tilde  \varphi_0(t)-i\int_0^t e^{-i(t-s)A} F(\varphi(s)) \, ds,
\end{equation}
with $\tilde \varphi_0(t):=e^{-itA} \varphi_0$ and where $F$ is defined in \eqref{nonlinearpart}.

\end{lem}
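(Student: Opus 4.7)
The plan is to pass to the limit $M\to\infty$ directly in the Duhamel formula \eqref{Duhamel} satisfied by $\varphi_M$, term by term. The left-hand side is handled by Lemma~\ref{strongconv} with $k=0$, which gives $\varphi_M(t)\to\varphi(t)$ strongly in $\ell^2(\mathbb{C})$. The free evolution $\tilde{\varphi}_0(t)=e^{-itA}\varphi_0$ is independent of $M$. The real work lies in the Duhamel integral, i.e.\ in showing that
\begin{equation*}
\int_0^t e^{-i(t-s)A} F_M(\varphi_M(s))\, ds \;\longrightarrow\; \int_0^t e^{-i(t-s)A} F(\varphi(s))\, ds \quad \text{in }\ell^2(\mathbb{C}).
\end{equation*}
Since $A$ is self-adjoint, $e^{-i(t-s)A}$ is unitary, and it suffices to prove pointwise convergence of the integrand in $\ell^2$ together with a uniform dominating bound so that Lebesgue's dominated convergence theorem applies.

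First I would establish the scalar convergence $\alpha_M(s)\to \alpha_{\varphi(s)}$ for each $s\in[0,t]$. Writing $\alpha_M - \alpha_\varphi = \langle \varphi_M-\varphi, a\mathds{1}_{\mathcal{N}\leq M}\varphi_M\rangle + \langle \varphi, a(\mathds{1}_{\mathcal{N}\leq M}\varphi_M - \varphi)\rangle$, the first piece is controlled by $\|\varphi_M-\varphi\|_{\ell^2}\|\mathcal{N}^{1/2}\varphi_M\|_{\ell^2}$ and vanishes by Lemma~\ref{strongconv} combined with the uniform moment bound from Lemma~\ref{conservationlaws}(iv)--(v). The second piece is bounded via $\|a^*\varphi\|_{\ell^2}(\|\varphi_M-\varphi\|_{\ell^2}+\|(\mathds{1}-\mathds{1}_{\mathcal{N}\leq M})\varphi\|_{\ell^2})$, which tends to $0$ by strong convergence and ordinary dominated convergence in the number basis. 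The same splitting handles the three summands of $F_M(\varphi_M)-F(\varphi)$: for the term $\alpha_M a_M^* \varphi_M - \alpha_\varphi a^*\varphi$ I would add and subtract $\alpha_\varphi a_M^*\varphi_M$, bound the scalar difference using $|\alpha_M-\alpha_\varphi|\,\|(\mathcal{N}+1)^{1/2}\varphi_M\|_{\ell^2}$, and control the vector difference $a_M^*\varphi_M-a^*\varphi$ by $\|(\mathcal{N}+1)^{1/2}(\varphi_M-\varphi)\|_{\ell^2}+\|(\mathds{1}-\mathds{1}_{\mathcal{N}\leq M})a^*\varphi\|_{\ell^2}$, both of which vanish by Lemma~\ref{strongconv} with $k=1$. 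The term involving $a_M$ is symmetric, and the $|\alpha_M|^2\varphi_M$ term is immediate.

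Finally, for the dominated-convergence bound I would use Lemma~\ref{conservationlaws}(iv)--(v) to obtain
\begin{equation*}
\|F_M(\varphi_M(s))\|_{\ell^2} \;\leq\; |J|\bigl(2|\alpha_M|\,\|(\mathcal{N}+1)^{1/2}\varphi_M(s)\|_{\ell^2}+|\alpha_M|^2\bigr) \;\leq\; C(\varphi_0,J),
\end{equation*}
uniformly in $M$ and $s\in[0,t]$, which is integrable over $[0,t]$. Combining pointwise $\ell^2$-convergence of $F_M(\varphi_M(s))\to F(\varphi(s))$ with this uniform dominant, Lebesgue's theorem yields convergence of the Duhamel integral, and collecting the three contributions gives the claimed identity for $\varphi(t)$.

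The main obstacle is the double limit hidden inside $F_M(\varphi_M)$: the cutoff $\mathds{1}_{\mathcal{N}\leq M}$ and the approximating sequence $\varphi_M$ both depend on $M$, and neither $a$ nor $a^*$ is bounded. The resolution is precisely the strong convergence $\mathcal{N}^k\varphi_M\to\mathcal{N}^k\varphi$ from Lemma~\ref{strongconv}, which upgrades weak convergence to the norm convergence in $\mathcal{D}(\mathcal{N}^{1/2})$ (in fact in $\mathcal{D}(\mathcal{N}^k)$ for all $k\geq 0$ allowed by $\varphi_0\in\mathcal{D}(\mathcal{N}^2)$) that is needed to push unbounded operators through the limit.
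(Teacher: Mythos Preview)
Your proposal is correct and follows essentially the same route as the paper: split the difference into the three contributions $\varphi_M(t)-\varphi(t)$, the vanishing Duhamel residual for $\varphi_M$, and the difference of the nonlinear integrals, then control $F_M(\varphi_M)-F(\varphi)$ term by term via the same add-and-subtract manipulations, using Lemma~\ref{strongconv} and the uniform moment bounds of Lemma~\ref{conservationlaws}. The only notable difference is that you explicitly close the argument with dominated convergence and a uniform-in-$(M,s)$ bound on $\|F_M(\varphi_M(s))\|_{\ell^2}$, whereas the paper estimates the integrand pointwise and leaves the passage to the limit under the time integral implicit; your version is in fact slightly more complete on that point.
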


\begin{proof}
Let us start by establishing some useful estimates. Since both $\varphi_{M}$ and $\mathcal{N}^k\varphi_{M}$  converge weakly to $\varphi$ and $\mathcal{N}^k \varphi$, respectively, we have 
\begin{align}
& \Vert \varphi  \Vert_{\ell^2} \leq \liminf_{M \rightarrow \infty} \Vert \varphi_M \Vert_{\ell^2}=\Vert \varphi_0\Vert_{\ell^2} =1,
\\ & \Vert \mathcal{N}^{1/2} \varphi  \Vert_{\ell^2} \leq \liminf_{M \rightarrow \infty} \Vert  \mathcal{N}^{1/2} \varphi_M \Vert_{\ell^2}=\Vert \mathcal{N}^{1/2} \varphi_0\Vert_{\ell^2} ,
\\ & \vert \alpha_{\varphi} \vert =\vert \langle \varphi, a\varphi \rangle \vert \leq \Vert \varphi \Vert_{\ell^2} \Vert a\varphi \Vert_{\ell^2} \leq \Vert \mathcal{N}^{1/2} \varphi_0\Vert_{\ell^2}.
\end{align}
Moreover, we also have  
\begin{align*}
\vert \alpha_M-\alpha_{\varphi}\vert & = \vert \langle \varphi_M, a_M \varphi_M \rangle - \langle \varphi, a \varphi \rangle\vert
\\ & \leq  \vert \langle \varphi_M, (a_M-a) \varphi_M \rangle\vert +\vert \langle \varphi_M, a (\varphi_M-\varphi) \rangle\vert +\vert \langle (\varphi_M-\varphi), a \varphi \rangle\vert
\\ & \leq \Vert \varphi_M \Vert_{\ell^2} \Vert a \mathds{1}_{\mathcal{N}> M} \varphi_M \Vert_{\ell^2} +\Vert a^* \varphi_M \Vert_{\ell^2} \Vert \varphi_M- \varphi \Vert_{\ell^2} +  \Vert a\varphi\Vert_{\ell^2}  \Vert \varphi_M- \varphi \Vert_{\ell^2} 
\\ & \leq 2 \Vert  (\mathcal{N}+1)^{1/2}\varphi_0 \Vert_{\ell^2} \Vert \varphi_M- \varphi \Vert_{\ell^2} +\cE_1(M), 
\end{align*}
where we have introduced $\cE_1(M)$ as 
\begin{equation}\label{firsterror}
\cE_1(M):= \Vert a \mathds{1}_{\mathcal{N}>M} \varphi_M \Vert_{\ell^2} .
\end{equation}
Then we estimate 
\begin{align}
& \bigg\Vert     \varphi(t) -  \tilde  \varphi_0(t)-i\int_0^t e^{-i(t-s)A} F(\varphi(s)) \, ds  \bigg\Vert_{\ell^2} 
\\ &\quad \leq \left \Vert \varphi_M(t) - \varphi(t) \right \Vert_{\ell^2} \label{estappsoln}
\\ &\qquad + \left \Vert \varphi_M(t) -   \tilde  \varphi_0(t)-i\int_0^t e^{-i(t-s)A} F_M(\varphi_M(s)) \, ds \right \Vert_{\ell^2}\label{solnappro} 
\\ &\qquad +  \left \Vert \int_0^t e^{-i(t-s)A} F_M(\varphi_M(s)) \, ds-\int_0^t e^{-i(t-s)A} F(\varphi(s)) \, ds\right \Vert _{\ell^2}\label{estnonlinearterm}
\end{align}
The first term  \eqref{estappsoln} converges to zero by Lemma \ref{strongconv}. The second term \eqref{solnappro} is zero because $\varphi_M$ is a solution to the approximated problem \eqref{Duhamel}. 
It remains to estimate the difference between the nonlinear parts,
\begin{subequations}
\begin{align}
 \eqref{estnonlinearterm} \leq  \vert J\vert \int_0^t  & \Big ( \left \Vert  \overline{\alpha_M (s)}  a_M \varphi_M(s) -  \overline{ \alpha_\varphi (s)} a \varphi(s) \right  \Vert_{\ell^2} \label{annterm} 
\\ & +  \left \Vert  \alpha_M (s) a_M^* \varphi_M(s) -  \alpha_\varphi (s) a^* \varphi(s) \right  \Vert_{\ell^2} \label{crtterm}  
\\ & + \left \Vert  \vert \alpha_M (s)\vert^2  \varphi_M(s) -  \vert  \alpha_\varphi (s)\vert^2  \varphi(s) \right \Vert_{\ell^2}  \Big ) ds. \label{sqrterm} 
\end{align}
\end{subequations}
For \eqref{annterm}, we find
\begin{align*}
\eqref{annterm} &\leq  \vert \alpha_M-\alpha\vert  \Vert a_M \varphi_M \Vert_{\ell^2} + \vert \alpha \vert \Vert (a_M- a) \varphi_M \Vert_{\ell^2} + \vert \alpha \vert \Vert a(\varphi_M-\varphi)\Vert_{\ell^2}
\\ & \leq \Vert \mathcal{N}^{1/2} \varphi_0\Vert_{\ell^2} \left( 2 \Vert  (\mathcal{N}+1)^{1/2}\varphi_0 \Vert_{\ell^2} \Vert \varphi_M- \varphi \Vert_{\ell^2}+\Vert \mathcal{N}^{1/2}(\varphi_M-\varphi)\Vert_{\ell^2} +2\cE_1(M) \right) ,
\end{align*}
with $\cE_1(M)$ from \eqref{firsterror}.  The first and the second term go to zero as $M \rightarrow \infty$. It remains to check $ \cE_1(M) \rightarrow 0$ as $M \rightarrow \infty$. Indeed by \eqref{mombounds},  we have  for some $C>0$ that
\begin{align*}
\cE_1(M)&= \Vert a\mathds{1}_{\mathcal{N}>M} \varphi_M \Vert_{\ell^2}
\\ & \leq \Vert \mathds{1}_{\mathcal{N}+1>M}  (\mathcal{N}+1)^{-1/2}\Vert_{\mathcal{L}} \Vert (\mathcal{N}+1)^{1/2} a \varphi_M \Vert_{\ell^2} \\ &\leq \frac{1}{\sqrt{M}} \underbrace{\left( \sum_{j=0}^{2}  \left(  2C |J|  \Vert \mathcal{N}^{1/2}\varphi_0 \Vert_{\ell^2} \right)^j   \left \langle \varphi_0,(\mathcal{N}+j)^{2-\frac{j}{2} }\varphi_0 \right \rangle \frac{t^j}{j!}\right)^{1/2}}_{<\infty \ \text{since } \varphi_0\in \mathcal{D}(\mathcal{N}^2)}.
\end{align*}
So, the term $\cE_1(M) $ goes to zero as $M \rightarrow \infty$.
Similarly, for \eqref{crtterm} we find that 
\begin{align*}
\eqref{crtterm} &\leq  \vert \alpha_M-\alpha\vert  \Vert a^*_M \varphi_M \Vert_{\ell^2} + \vert \alpha \vert \Vert (a^*_M- a^*) \varphi_M \Vert_{\ell^2} + \vert \alpha \vert \Vert a^*(\varphi_M-\varphi)\Vert_{\ell^2}
\\ & \leq 2  \Vert  (\mathcal{N}+1)^{1/2}\varphi_0 \Vert_{\ell^2}^2  \Vert \varphi_M- \varphi \Vert_{\ell^2}+ \Vert \mathcal{N}^{1/2} \varphi_0\Vert_{\ell^2} \Vert (\mathcal{N}+1)^{1/2}(\varphi_M-\varphi)\Vert_{\ell^2}
\\ &\quad  +\Vert( \mathcal{N}+1)^{1/2} \varphi_0\Vert_{\ell^2}\cE_1(M)+\Vert \mathcal{N}^{1/2} \varphi_0\Vert_{\ell^2}\cE_2(M)    ,
\end{align*}
where we have introduced
\begin{equation}\label{error2}
\cE_2(M):=\Vert \mathds{1}_{\mathcal{N}>M} a^* \varphi_M \Vert_{\ell^2}.
\end{equation}
By the same arguments as for \eqref{annterm}, the term \eqref{crtterm} goes to zero as $M \rightarrow \infty$. It remains to estimate the last term \eqref{sqrterm}. We have 
\begin{align*}
\eqref{sqrterm} &= \left \Vert  \vert \alpha_M \vert^2  \varphi_M -  \vert  \alpha_\varphi \vert^2  \varphi \right \Vert_{\ell^2} 
\\ & \leq \left  \vert  \vert \alpha_M \vert^2 -   \vert  \alpha_\varphi \vert^2 \right \vert  \Vert  \varphi_M  \Vert_{\ell^2}  +   \vert  \alpha_\varphi \vert^2 \Vert \varphi_M -\varphi \Vert_{\ell^2} 
\\ & \leq  \left  \vert  \alpha_M  (\overline{ \alpha_M}-\overline{\alpha_\varphi}) +\overline{\alpha_\varphi} ( \alpha_M-\alpha_\varphi) \right \vert +   \vert  \alpha_\varphi \vert^2 \Vert \varphi_M -\varphi \Vert_{\ell^2}
\\ & \leq    \left( \vert  \alpha_M \vert +\vert  \alpha_\varphi \vert\right)    \vert  \alpha_M-\alpha_\varphi\vert  +   \vert  \alpha_\varphi \vert^2 \Vert \varphi_M -\varphi \Vert_{\ell^2},
\end{align*}
which also converges to zero by the same arguments as for the previous two terms.
\end{proof}

    \section{Preliminaries}\label{sec_preliminaries}

\subsection{Reduced Densities Matrices}
    Given a density matrix $\gamma_d \in \mathcal{L}^1\prth{\mathcal{F}}$ we define a two-lattice-site reduced density matrix
    \begin{align*}
        \gamma_d^{\prth{2}} 
            \coloneq \frac{1}{d \abs{\Lambda}} \sum_{\substack{\braket{x, y}}} \PTr{\Lambda\backslash{\sett{x, y}}}{\gamma_d}
            =\frac{1}{2d \abs{\Lambda}} \sum_{\substack{x, y \in \Lambda \\ x\sim y}} \PTr{\Lambda\backslash{\sett{x, y}}}{\gamma_d}.
    \end{align*}
    Note that this two-lattice-site reduced density matrix is symmetrized over all interacting pairs of sites and not over all pairs of sites. The normalization factor $2d \abs{\Lambda}$ is indeed the number of interacting pairs of sites.

    Note that $\gTwo$ is symmetric, i.e.,
        \begin{align*}
            &\forall A, B \in \mathcal{L}(\ell^2(\C)), \Tr{\gTwo  A\otimes B} = \Tr{\gTwo  B\otimes A},
        \end{align*}    
    and reduces to $\gOne$, i.e., 
        \begin{align*}
            &\PTr{1}{\gTwo} = \PTr{2}{\gTwo} = \gOne.
        \end{align*}  
    Moreover, if $C \in \mathcal{L}(\ell^2(\C))$ and $D \in \mathcal{L}(\ell^2(\C)^{\otimes 2})$, then it follows directly from its definition that
        \begin{align*}
            &\frac{1}{\abs{\Lambda}}\sum_{x\in\Lambda}\Tr{\gamma_d C_x} = \Tr{\gOne C}, \\
            &\frac{1}{2d\abs{\Lambda}}\sum_{\substack{x, y\in \Lambda \\ x \sim y}}  \Tr{\gamma_d D_{x, y}} = \Tr{\gTwo D}.
        \end{align*}

Furthermore, the following standard results hold. If $A \in \mathcal{L}(\ell^2(\C))$ is self adjoint such that $A \ge 0$ or $\gOne A \in \mathcal{L}^1(\ell^2(\C))$, then
        \begin{align}
            \sbra{\sum_{x\in\Lambda} A_x, \gamma_d}^{(1)} = \sbra{A, \gOne }.\label{eq:k=1 partial trace 1} 
        \end{align}
If $B \in \mathcal{L}(\ell^2(\C)^{\otimes 2})$ is self adjoint such that $B \ge 0$ or $\gTwo B \in \mathcal{L}^1(\ell^2(\C)^{\otimes 2})$, then
        \begin{align}
            \dfrac{1}{2d}\sbra{\sum_{\substack{x, y\in \Lambda \\ x \sim y}} B_{x, y}, \gamma_d}^{(1)} = \PTr{1}{\sbra{B, \gTwo}} + \PTr{2}{\sbra{B, \gTwo}}.
             \label{eq:k=1 partial trace 2} 
        \end{align}

\subsection{Energy Bounds}    
    With the definitions of one and two-lattice-site density matrices we can rewrite the energy per lattice site as
    \begin{align}
            \frac{\Tr{\gamma_d H_d}}{\abs{\Lambda}}
                = \Tr{\gOne\prth{\prth{J - \mu}\mathcal{N} + \frac{U}{2}\mathcal{N}\prth{\mathcal{N}-1}}} - J\Tr{\gTwo a^*\otimes a}. \label{eq:RDM energy}
    \end{align}    
    Note that the mean-field energy can be written as
    \begin{align}
             \braket{\varphi, h^\varphi \varphi} = J \prth{\braket{\varphi, \mathcal{N} \varphi} - \abs{\alpha_\varphi}^2} - \mu \braket{\varphi, \mathcal{N} \varphi} + \frac{U}{2}\braket{\varphi, \mathcal{N}(\mathcal{N} - 1) \varphi} \label{eq:E mf}.
    \end{align}

    The following bounds allow us to control the Bose--Hubbard energy and the mean-field energy in terms of moments of the number operator.
\begin{lem}
Let $\gamma_d \in \mathcal{L}^1\prth{\mathcal{F}}$ and $\varphi \in \ell^2(\C)$. Then there exists $C>0$ such that, for $U = 0$,
        \begin{align}
                \left|\braket{\varphi, h^\varphi \varphi}\right| &\le C \braket{\varphi, \mathcal{N} \varphi}, \label{eq:U=0 mf bound} \\
                 \frac{\left|\Tr{\gamma_d H_d}\right|}{\abs{\Lambda}} &\le C \left(1 + \Tr{\gOne \mathcal{N}}\right), \label{eq:U=0 bound}
         \end{align}
        and, for $U \neq 0$,
        \begin{align}
                \left|\braket{\varphi, h^\varphi \varphi} \right| &\le C \left( 1+ \braket{\varphi, \mathcal{N}^2 \varphi}\right), \label{eq:U!=0 mf bound} \\
                \frac{\left|\Tr{\gamma_d H_d}\right|}{\abs{\Lambda}} &\le C \left( 1 + \Tr{\gOne\mathcal{N}^2}\right). \label{eq:U!=0 bound}
         \end{align}
\end{lem}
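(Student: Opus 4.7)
The plan is to prove the four inequalities by direct manipulation of the explicit expressions \eqref{eq:E mf} and \eqref{eq:RDM energy}. The mean-field bounds \eqref{eq:U=0 mf bound} and \eqref{eq:U!=0 mf bound} are essentially immediate from the explicit formula once one observes, via Cauchy–Schwarz, that
\[
|\alpha_\varphi|^2 = |\langle \varphi, a\varphi\rangle|^2 \le \|\varphi\|_{\ell^2}^2\, \|a\varphi\|_{\ell^2}^2 = \|\varphi\|_{\ell^2}^2\, \langle \varphi, \mathcal{N}\varphi\rangle.
\]
Assuming $\|\varphi\|_{\ell^2}\le 1$ (the physically relevant setting, also ensured by Lemma~\ref{conservationlaws}(i)), the triangle inequality applied to \eqref{eq:E mf} gives \eqref{eq:U=0 mf bound} when $U=0$. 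For $U\neq 0$ one additionally estimates $\mathcal{N}(\mathcal{N}-1)\le \mathcal{N}^2$ and the linear term by $\mathcal{N}\le 1+\mathcal{N}^2$, yielding \eqref{eq:U!=0 mf bound}.

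For the many-body bounds \eqref{eq:U=0 bound} and \eqref{eq:U!=0 bound}, I start from \eqref{eq:RDM energy}. The terms $\text{Tr}(\gOne \mathcal{N})$ and $\text{Tr}(\gOne\mathcal{N}(\mathcal{N}-1))$ are handled exactly as in the mean-field case, using $\mathcal{N}(\mathcal{N}-1)\le\mathcal{N}^2$ and $\mathcal{N}\le 1+\mathcal{N}^2$ together with $\text{Tr}(\gOne)=\text{Tr}(\gamma_d)=1$. The genuinely delicate term is the hopping contribution $|\text{Tr}(\gTwo\, a^*\otimes a)|$, and this is where I expect the main work. The idea is to bound it by $\text{Tr}(\gOne\mathcal{N})$ via a Hilbert–Schmidt Cauchy–Schwarz: factor
\[
\text{Tr}\bigl(\gTwo\, a^*\otimes a\bigr) = \text{Tr}\bigl((\gTwo^{1/2}(a^*\otimes 1))\cdot ((1\otimes a)\gTwo^{1/2})\bigr),
\]
so that
\[
\bigl|\text{Tr}(\gTwo\, a^*\otimes a)\bigr| \le \bigl\|\gTwo^{1/2}(a^*\otimes 1)\bigr\|_{\mathrm{HS}} \bigl\|(1\otimes a)\gTwo^{1/2}\bigr\|_{\mathrm{HS}}.
\]
Each factor is then computed by cyclicity:
\[
\bigl\|\gTwo^{1/2}(a^*\otimes 1)\bigr\|_{\mathrm{HS}}^2 = \text{Tr}\bigl((a\otimes 1)\gTwo (a^*\otimes 1)\bigr) = \text{Tr}\bigl(\gTwo(\mathcal{N}\otimes 1)\bigr) = \text{Tr}(\gOne\mathcal{N}),
\]
and analogously for the other factor, giving $|\text{Tr}(\gTwo\,a^*\otimes a)|\le \text{Tr}(\gOne\mathcal{N})$. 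Plugging back into \eqref{eq:RDM energy} and applying the triangle inequality yields \eqref{eq:U=0 bound} with $C = 2|J|+|\mu|$, and \eqref{eq:U!=0 bound} after adding the $U$-term and reabsorbing the linear $\mathcal{N}$ by $\mathcal{N}\le 1+\mathcal{N}^2$.

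The main subtlety, and the only real obstacle, is justifying the Cauchy–Schwarz manipulation when $a^*\otimes a$ is unbounded. I will handle this by noting that all inequalities are trivial when the right-hand side is infinite, so we may restrict to $\gamma_d$ with $\text{Tr}(\gOne\mathcal{N})<\infty$; a standard cutoff $\mathds{1}_{\mathcal{N}\le n}$ on each factor makes the trace and cyclicity manipulations rigorous, and monotone convergence (together with the fact that $\gTwo$ is positive and trace class) allows passing to the limit $n\to\infty$.
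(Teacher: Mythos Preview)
Your proof is correct and follows essentially the same route as the paper: bound $|\alpha_\varphi|^2$ by $\langle\varphi,\mathcal{N}\varphi\rangle$ via Cauchy--Schwarz to obtain the mean-field estimates, and control the hopping term $|\Tr{\gTwo\, a^*\otimes a}|$ by a Cauchy--Schwarz/Hilbert--Schmidt factorization that reduces it to $\Tr{\gOne\mathcal{N}}$. The paper's split for the hopping term gives the slightly looser bound $\Tr{\gOne\mathcal{N}}^{1/2}\Tr{\gOne(\mathcal{N}+1)}^{1/2}\le \Tr{\gOne\mathcal{N}}+1$ and does not spell out the cutoff justification you mention, but the arguments otherwise coincide.
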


    \begin{proof}
 Using Cauchy--Schwarz's inequality we have
         \begin{align}
             \abs{\alpha_\varphi}^2 
                = \abs{\braket{\varphi, a \varphi}}^2
                \le \norm{\varphi}^2 \norm{a \varphi}^2 
                = \braket{a\varphi, a \varphi}
                = \braket{\varphi, a^*a \varphi}
                = \braket{\varphi, \mathcal{N} \varphi}. \label{eq:alpha bound}
         \end{align}
 Recalling \eqref{eq:E mf}, this immediately yields \eqref{eq:U=0 mf bound} and \eqref{eq:U!=0 mf bound}. In order to obtain \eqref{eq:U=0 bound} and \eqref{eq:U!=0 bound}, we estimate the two-site term in \eqref{eq:RDM energy} with Cauchy--Schwarz to obtain
        \begin{align*}
            \abs{\Tr{\gTwo a^*\otimes a}}
                \le \Tr{\gOne \mathcal{N}}^{\frac{1}{2}} \Tr{\gOne \prth{\mathcal{N} + 1}}^{\frac{1}{2}} 
                \le \Tr{\gOne \mathcal{N}} + 1.
        \end{align*}
     \end{proof}

\subsection{Conservation Laws} \label{sec:conversation laws}

    For both the Bose--Hubbard model \eqref{eq:HBH} and the mean-field model \eqref{eq:h mf} the total particle number and the total energy are conserved. Furthermore, one can control higher powers of the total particle number. Let us show this first for the mean-field equation. The total particle number is conserved since
    \begin{align}\label{mf_N_conservation}
            i \partial_t \bk{\varphi}{\mathcal{N}\varphi} = \bk{\varphi}{\sbra{\mathcal{N}, h^\varphi}\varphi}
            &= - J \prth{\alpha_\varphi \bk{\varphi}{\sbra{\mathcal{N}, a^*}\varphi} + \overline{\alpha_\varphi} \bk{\varphi}{\sbra{\mathcal{N}, a}\varphi}} \nonumber\\
            &= - J \prth{\alpha_\varphi \bk{\varphi}{a^* \varphi} - \overline{\alpha_\varphi}\bk{\varphi}{a \varphi}} \nonumber\\
            &= -J\prth{\abs{\alpha_\varphi}^2 - \abs{\alpha_\varphi}^2} \nonumber\\
            &= 0.
    \end{align}
    The energy is conserved since 
    \begin{align*}
            i\partial_t \bk{\varphi}{h^\varphi \varphi} 
                = \bk{\varphi}{\partial_t h^\varphi \varphi}
                &= -J \bk{\varphi}{\prth{\partial_t \alpha_\varphi a^* + \partial_t \overline{\alpha_\varphi} a - \overline{\alpha_\varphi} \partial_t \alpha_\varphi  - \alpha_\varphi \partial_t\overline{\alpha_\varphi}} \varphi} \\
                &= - J\prth{ \overline{\alpha_\varphi} \partial_t \alpha_\varphi + \alpha_\varphi \partial_t\overline{\alpha_\varphi} - \overline{\alpha_\varphi} \partial_t \alpha_\varphi  - \alpha_\varphi \partial_t\overline \alpha_\varphi} \\
                &= 0.
    \end{align*}    
    Moreover, we can prove two different bounds for controlling powers of the number operator, which we will use for our two main theorems.
    
    \begin{proposition}\label{prop:moments bound}
        Let $\varphi$ solve \eqref{eq:MFS} with $\varphi(0) \in \ell^2(\C)$.
        Let $k\in \mathbb{N}/2, k\ge 1$ and $t\in\mathbb{R}_+$. Then
                \begin{align}
                    &\Tr{p_\varphi(t)\mathcal{N}^k} \le \prth{\Tr{p_\varphi(0) \mathcal{N}^k} + e^{-1}k^k}e^{2eJk\Tr{p_\varphi(0) \mathcal{N}}^{\frac{1}{2}}t}, \label{eq:mf moment bound 1}\\
                    &\Tr{p_\varphi(t)\mathcal{N}^k} \le \sum_{l=0}^{2(k-1)} \prth{\matrx{2k \\ l}} \prth{J \Tr{p_\varphi(0)\mathcal{N}}^{\frac{1}{2}} t}^l \Tr{p_\varphi(0)\prth{\mathcal{N}+l}^{k - \frac{l}{2}}}. \label{eq:mf moment bound 2}
                \end{align}
    \end{proposition}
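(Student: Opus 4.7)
The plan is to directly differentiate $M_k(t):=\Tr{p_\varphi(t)\mathcal{N}^k}=\langle\varphi(t),\mathcal{N}^k\varphi(t)\rangle$ along the mean-field flow \eqref{eq:MFS} and to derive a single master differential inequality from which the two stated bounds follow by a Gronwall step and by iteration, respectively. Both statements are trivial when $M_k(0)=+\infty$, so I assume $\varphi(0)\in\mathcal{D}(\mathcal{N}^k)$.

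First, since $\mathcal{N}^k$ commutes with every term of $h^\varphi$ except the hopping part, one finds
\[
\partial_t M_k(t)=-2J\,\Im\!\left(\alpha_{\varphi(t)}\,\langle\varphi(t),\,a^{*}\bigl((\mathcal{N}+1)^k-\mathcal{N}^k\bigr)\varphi(t)\rangle\right),
\]
using $[\mathcal{N}^k,a^{*}]=a^{*}((\mathcal{N}+1)^k-\mathcal{N}^k)$. The mean value theorem gives the operator bound $(\mathcal{N}+1)^k-\mathcal{N}^k\le k(\mathcal{N}+1)^{k-1}$, and the intertwining identity $a^{*}(\mathcal{N}+1)^{k-1}=\mathcal{N}^{k-1}a^{*}$ reduces the task to bounding $|\langle\varphi,a^{*}\mathcal{N}^{k-1}\varphi\rangle|$. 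Expanding this in the number basis and using the entrywise inequality $|\overline{\varphi_n}\varphi_{n-1}|\le\tfrac12(|\varphi_n|^2+|\varphi_{n-1}|^2)$ gains half a power of $\mathcal{N}$ --- a crucial improvement over a naive operator Cauchy--Schwarz --- yielding $|\langle\varphi,a^{*}\mathcal{N}^{k-1}\varphi\rangle|\le\langle\varphi,(\mathcal{N}+1)^{k-1/2}\varphi\rangle$, exactly as in the proof of Lemma~\ref{conservationlaws}(v). Combined with particle-number conservation $|\alpha_\varphi|^2\le M_1(0)$ from \eqref{mf_N_conservation}, this produces the master estimate
\begin{equation}\label{plan_master}
|\partial_t M_k(t)|\;\le\;2k|J|\,M_1(0)^{1/2}\,\langle\varphi(t),(\mathcal{N}+1)^{k-1/2}\varphi(t)\rangle.
\end{equation}

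For \eqref{eq:mf moment bound 1}, I bound the right-hand side of \eqref{plan_master} linearly in $M_k$. Since $(\mathcal{N}+1)^{k-1/2}\le(\mathcal{N}+1)^k$, it suffices to estimate $\langle\varphi,(\mathcal{N}+1)^k\varphi\rangle$. Splitting the spectral sum of $\mathcal{N}$ at $n=k$ and using $(n+1)^k=n^k(1+1/n)^k\le e\,n^k$ for $n\ge k$ together with $(n+1)^k\le k^k$ for $n<k$, I obtain $\langle\varphi,(\mathcal{N}+1)^{k-1/2}\varphi\rangle\le e\,M_k+k^k=e\bigl(M_k+e^{-1}k^k\bigr)$. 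Plugging into \eqref{plan_master} gives $\partial_t\bigl(M_k+e^{-1}k^k\bigr)\le 2ekJ\,M_1(0)^{1/2}\bigl(M_k+e^{-1}k^k\bigr)$, and Gronwall's lemma yields \eqref{eq:mf moment bound 1}.

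For \eqref{eq:mf moment bound 2}, I iterate \eqref{plan_master}. The same derivation applied to the shifted quantity $f_j(t):=\langle\varphi(t),(\mathcal{N}+j)^{k-j/2}\varphi(t)\rangle$ (the commutator with $a^{*}$ is insensitive to adding a constant to $\mathcal{N}$) yields the recursive inequality $\partial_t f_j(t)\le 2(k-j/2)|J|\,M_1(0)^{1/2}\,f_{j+1}(t)$. Integrating and unfolding this recursion $l$ times, the accumulated prefactor is $\prod_{i=0}^{l-1}2(k-i/2)|J|M_1(0)^{1/2}=(2|J|M_1(0)^{1/2})^l\binom{2k}{l}l!\,2^{-l}$ (using $\prod_{i=0}^{l-1}(k-i/2)=\binom{2k}{l}l!/2^l$), and the nested time integrals contribute $t^l/l!$, so the total coefficient of $f_l(0)$ is precisely $\binom{2k}{l}(|J|M_1(0)^{1/2}t)^l$. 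The iteration terminates after $l=2(k-1)$ steps: at that point the exponent $k-l/2$ equals $1$, so $f_{2(k-1)}(s)=M_1(s)+2k-2=M_1(0)+2k-2$ is time-independent by \eqref{mf_N_conservation} and matches $\langle\varphi(0),(\mathcal{N}+l)^{k-l/2}\varphi(0)\rangle$ at $l=2(k-1)$. Summing all contributions gives \eqref{eq:mf moment bound 2}. The main obstacle is securing the half-power gain in \eqref{plan_master}, without which the iteration would fail to terminate at finite $l$ and the Gronwall argument for \eqref{eq:mf moment bound 1} would not close; half-integer $k$ requires no separate treatment, since all the key estimates rely only on functional calculus and the pointwise inequality $(n+1)^k-n^k\le k(n+1)^{k-1}$, both valid for real $k\ge 1$.
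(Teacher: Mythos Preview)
Your proposal is correct and follows essentially the same strategy as the paper: derive the master differential inequality $|\partial_t\langle\varphi,(\mathcal{N}+n)^k\varphi\rangle|\le 2kJ\,\Tr{p_\varphi(0)\mathcal{N}}^{1/2}\langle\varphi,(\mathcal{N}+n+1)^{k-1/2}\varphi\rangle$, then (a) close a Gronwall argument via the spectral cutoff at $n=k$ together with $(n+1)^k\le e\,n^k$ for $n\ge k$ to get \eqref{eq:mf moment bound 1}, and (b) iterate/induct in half-integer steps to get \eqref{eq:mf moment bound 2}. The paper organizes (b) as an induction on $k$ over all shifts $n$, while you unfold the recursion directly; the combinatorics are identical.

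The only real difference is in how the half-power gain is secured. The paper introduces the auxiliary positive operator $\mathcal{A}^{k-1}:=(\mathcal{N}+n+1)^k-(\mathcal{N}+n)^k$ and applies an operator Cauchy--Schwarz splitting $\langle\mathcal{A}^{\frac{k}{2}-\frac14}\varphi,\mathcal{A}^{\frac{k}{2}-\frac34}a\varphi\rangle$; you instead expand in the number basis and use the entrywise AM--GM $|\overline{\varphi_n}\varphi_{n-1}|\le\tfrac12(|\varphi_n|^2+|\varphi_{n-1}|^2)$. Both yield the same bound. One expository wrinkle: your sentence ``the intertwining identity $a^{*}(\mathcal{N}+1)^{k-1}=\mathcal{N}^{k-1}a^{*}$ reduces the task to bounding $|\langle\varphi,a^{*}\mathcal{N}^{k-1}\varphi\rangle|$'' is slightly off, since the operator inequality $(\mathcal{N}+1)^k-\mathcal{N}^k\le k(\mathcal{N}+1)^{k-1}$ cannot be substituted inside the non-sign-definite form $\langle\varphi,a^{*}(\cdot)\varphi\rangle$; the actual reduction only becomes valid once you pass to scalars in the number basis (where the coefficients are nonnegative), which is what you then do. This is a presentation issue, not a gap.
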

    
    \begin{proof}
    Let $n\in \N$. Recalling the mean-field dynamics \eqref{eq:MFS}, we find
        \begin{align}\label{Gronwall_N_n}
            i\partial_t \bk{\varphi}{\prth{\mathcal{N}+n}^k \varphi}
                & = \bk{\varphi}{\sbra{\prth{\mathcal{N}+n}^k, h^\varphi}\varphi} \nonumber\\
                &= - J  \bk{\varphi}{\sbra{\prth{\mathcal{N}+n}^k, \alpha_\varphi \ad + \overline{\alpha_\varphi} a}\varphi} \nonumber\\
                &= J \alpha_\varphi \bk{\varphi}{\sbra{a^*, \prth{\mathcal{N}+n}^k}\varphi} + J \overline{\alpha_\varphi} \bk{\varphi}{\sbra{a, \prth{\mathcal{N}+n}^k}\varphi} \nonumber\\
                &= 2iJ \text{ Im}\sbra{\overline{\alpha_\varphi} \bk{\varphi}{\sbra{a, \prth{\mathcal{N}+n}^k}\varphi}}.
        \end{align}
        Now, let $\mathcal{A} \in \mathcal{L}^1\prth{\ell^2(\C)}$ be the positive operator defined as
        \begin{align}
            \mathcal{A}^{k-1} \coloneq \prth{\mathcal{N}+n+1}^k - \prth{\mathcal{N} + n}^k \le k\prth{\mathcal{N}+n+1}^{k-1} \label{eq:MB trick}.
        \end{align}
        Since $a \prth{\mathcal{N}+n}^k = \prth{\mathcal{N}+n+1}^k a$ we find
        \begin{align*}
            \bk{\varphi}{\sbra{a, \prth{\mathcal{N}+n}^k}\varphi}
                = \bk{\varphi}{\mathcal{A}^{k-1} a \varphi}
                = \bk{\mathcal{A}^{\frac{k}{2} - \frac{1}{4}}\varphi}{ \mathcal{A}^{\frac{k}{2} - \frac{3}{4}}a \varphi}
        \end{align*}
        so with Cauchy--Schwarz's inequality,
        \begin{align}\label{commutator_N_n_a}
             \abs{\bk{\varphi}{\sbra{a, \prth{\mathcal{N}+n}^k}\varphi}} 
                &\le \bk{\varphi}{\mathcal{A}^{k - \frac{1}{2}} \varphi}^{\frac{1}{2}} \bk{\varphi}{a^* \mathcal{A}^{k - \frac{3}{2}}a\varphi}^{\frac{1}{2}} \nonumber\\
                &\le k \bk{\varphi}{\prth{\mathcal{N}+n+1}^{k-\frac{1}{2}}\varphi}^{\frac{1}{2}} \bk{\varphi}{a^* \prth{\mathcal{N}+n+1}^{k-\frac{3}{2}}a \varphi}^{\frac{1}{2}} \nonumber\\
                &= k \bk{\varphi}{\prth{\mathcal{N}+n+1}^{k-\frac{1}{2}}\varphi}^{\frac{1}{2}} \bk{\varphi}{\prth{\mathcal{N}+n}^{k-\frac{3}{2}}\mathcal{N} \varphi}^{\frac{1}{2}} \nonumber\\
                &\le k \bk{\varphi}{\prth{\mathcal{N}+n+1}^{k-\frac{1}{2}}\varphi}.
        \end{align}
        Combining \eqref{Gronwall_N_n} with \eqref{commutator_N_n_a} and also \eqref{eq:alpha bound} we conclude
        \begin{align}
            \abs{\partial_t \bk{\varphi}{\prth{\mathcal{N}+n}^k \varphi}} 
                \le 2J k \bk{\varphi}{\mathcal{N}\varphi}^{\frac{1}{2}} \bk{\varphi}{\prth{\mathcal{N}+n+1}^{k-\frac{1}{2}}\varphi}. \label{eq:N pre gronwall}
        \end{align}
        
    \noindent\textbf{Proof of \eqref{eq:mf moment bound 2}.}
        By induction on $k$, we prove that, $\bk{\varphi(0)}{\mathcal{N}^{k}\varphi(0)} < \infty$ implies that for all $n \in \mathbb{N}$,
        \begin{align}
            \bk{\varphi(t)}{\prth{\mathcal{N}+n}^k \varphi(t)} \le \sum_{l=0}^{2(k-1)} \prth{\matrx{2k \\ l}} \prth{J \bk{\varphi(0)}{\mathcal{N}\varphi(0)}^{\frac{1}{2}} t}^l  \bk{\varphi(0)}{\prth{\mathcal{N}+n+l}^{k - \frac{l}{2}}\varphi(0)} \label{eq:induction moments}.
        \end{align}
        Then \eqref{eq:mf moment bound 2} follows for $n=0$. The inequality is indeed true for $k=1$ since $\bk{\varphi}{\prth{\mathcal{N} + n} \varphi}$ is conserved, see \eqref{mf_N_conservation}. For the induction step, we assume \eqref{eq:induction moments} holds for some $k$ and that
        \begin{align*}
            \bk{\varphi(0)}{\mathcal{N}^{k+\frac{1}{2}}\varphi(0)} < \infty,
        \end{align*}
        and we now prove \eqref{eq:induction moments} for $k+\frac{1}{2}$ instead of $k$. Using \eqref{eq:N pre gronwall} with $k+\frac{1}{2}$ instead of $k$, and using the conservation of $\bk{\varphi}{\mathcal{N} \varphi}$ we find
        \begin{align*}
            \abs{\partial_t \bk{\varphi}{\prth{\mathcal{N}+n}^{k+\frac{1}{2}} \varphi}}
                \le J (2k+1) \bk{\varphi(0)}{\mathcal{N}\varphi(0)}^{\frac{1}{2}} \bk{\varphi}{\prth{\mathcal{N}+n+1}^{k}\varphi}.
        \end{align*}
        Integrating over time and inserting \eqref{eq:induction moments} we conclude
        \begin{align*}
            &\bk{\varphi(t)}{\prth{\mathcal{N}+n}^{k+\frac{1}{2}} \varphi(t)} \\
                &\quad\le \bk{\varphi(0)}{\prth{\mathcal{N}+n}^{k+\frac{1}{2}}\varphi(0)} + J(2k+1) \bk{\varphi(0)}{\mathcal{N}\varphi(0)}^{\frac{1}{2}} \intr_0^t \bk{\varphi(\tau)}{\prth{\mathcal{N}+n+1}^k\varphi(\tau)} d\tau \\
                &\quad= \bk{\varphi(0)}{\prth{\mathcal{N}+n}^{k+\frac{1}{2}}\varphi(0)} +J(2k+1) \bk{\varphi(0)}{\mathcal{N}\varphi(0)}^{\frac{1}{2}}  \\
                    &\qquad \sum_{l=0}^{2(k-1)} \prth{\matrx{2k \\ l}} \prth{J \bk{\varphi(0)}{\mathcal{N}\varphi(0)}^{\frac{1}{2}}}^l \bk{\varphi(0)}{\prth{\mathcal{N}+n+l + 1}^{k - \frac{l}{2}}\varphi(0)} \intr_0^t \tau^l d\tau \\
                &\quad= \bk{\varphi(0)}{\prth{\mathcal{N}+n}^{k+\frac{1}{2}}\varphi(0)} \\
                    &\qquad+ \sum_{l=0}^{2(k-1)} \prth{\matrx{2k +1 \\ l + 1}} \prth{J \bk{\varphi(0)}{\mathcal{N}\varphi(0)}^{\frac{1}{2}} t}^{l+1} \bk{\varphi(0)}{\prth{\mathcal{N}+n+l + 1}^{k - \frac{l}{2}}\varphi(0)} \\
                &\quad= \bk{\varphi(0)}{\prth{\mathcal{N}+n}^{k+\frac{1}{2}}\varphi(0)} \\
                    &\qquad+ \sum_{l=1}^{2(k-1)+1} \prth{\matrx{2k +1 \\ l}} \prth{J \bk{\varphi(0)}{\mathcal{N}\varphi(0)}^{\frac{1}{2}} \tau}^l \bk{\varphi(0)}{\prth{\mathcal{N}+n+l}^{k +\frac{1}{2} - \frac{l}{2}}\varphi(0)} \\
                &\quad\le \sum_{l=0}^{2\prth{k+\frac{1}{2}-1}} \prth{\matrx{2\prth{k+\frac{1}{2}} \\ l}} \prth{J \bk{\varphi(0)}{\mathcal{N}\varphi(0)}^{\frac{1}{2}} t}^l  \bk{\varphi(0)}{\prth{\mathcal{N}+n+l}^{k + \frac{1}{2} - \frac{l}{2}}\varphi(0)}.
        \end{align*}
        which concludes the induction. 

    \noindent\textbf{Proof of \eqref{eq:mf moment bound 1}.}
        Since
        \begin{align*}
            N \ge 1 \implies \prth{N + 1}^k = N^k e^{k \ln\prth{1 + \frac{1}{N}}} \le N^k e^{\frac{k}{N}}
        \end{align*}
        we notice that
        \begin{align*}
            N \ge k \implies  \prth{N + 1}^k \le e N^k.
        \end{align*}
        Next, we continue from \eqref{eq:N pre gronwall} for $n=0$, and introduce a cutoff, to obtain
        \begin{align} \label{eq:tr pNk}
            \abs{\partial_t \Tr{p_\varphi(t)\mathcal{N}^k }} 
                &\le 2J k \Tr{p_\varphi(0) \mathcal{N}}^{\frac{1}{2}} \Tr{p_\varphi(t) \prth{\mathcal{N}+1}^k} \nonumber\\
                &= 2J k \Tr{p_\varphi(0) \mathcal{N}}^{\frac{1}{2}} \Tr{p_\varphi(t) \prth{\mathcal{N}+1}^k\prth{\mathds{1}_{\mathcal{N} < k} + \mathds{1}_{\mathcal{N} \ge k}}} \nonumber\\
                &\le 2J k \Tr{p_\varphi(0) \mathcal{N}}^{\frac{1}{2}} \prth{\Tr{p_\varphi(t) k^k} + e \Tr{p_\varphi(t) \mathcal{N}^k}} \nonumber\\
                &= 2J k \Tr{p_\varphi(0) \mathcal{N}}^{\frac{1}{2}} \prth{k^k + e \Tr{p_\varphi(t) \mathcal{N}^k}}.
        \end{align}
        With Gronwall's lemma we conclude that
        \begin{align*}
            \Tr{p_\varphi(t) \mathcal{N}^k} 
                &\le \prth{\Tr{p_\varphi(0) \mathcal{N}^k} + e^{-1}k^k}e^{2eJk\Tr{p_\varphi(0) \mathcal{N}}^{\frac{1}{2}}t}.
        \end{align*}
    \end{proof}

    For the Bose--Hubbard model \eqref{eq:HBH} the total energy $\Tr{\gamma_d H_d}$ and the total particle number $\text{Tr}(\gOne \mathcal{N})$ are conserved as well. Moreover, we can prove bounds analogous to the mean-field dynamics for powers of the total number of particles. Note first that we can rewrite the Hamiltonian $H_d$ as
    \begin{align}
            H_d &= \sum_{x\in \Lambda} h^{\varphi}_x - \frac{J}{2d} \sum_{\substack{x, y\in \Lambda \\ x \sim y}}\prth{a^*_x - \overline{\alpha}_\varphi}\prth{a_y - \alpha_\varphi}, \label{eq:H decomposition}
    \end{align}
    with $\alpha_\varphi = \langle \varphi, a \varphi \rangle$. Then, by using \eqref{eq:k=1 partial trace 1} and \eqref{eq:k=1 partial trace 2} we find that the one-lattice-site reduced density matrix satisfies
        \begin{align}
            i \partial_t \gOne 
                = \sbra{H, \gamma_d}^{(1)}
                &= \sbra{\sum_{x\in\Lambda} h_x^\varphi, \gamma_d}^{(1)} - \frac{J}{2d} \sbra{\sum_{\substack{x, y\in \Lambda \\ x \sim y}}\prth{a^*_x - \overline{\alpha}}\prth{a_y - \alpha}, \gamma_d}^{(1)} \nonumber\\
                &= \sbra{h^{\varphi}, \gOne } 
                    - J \PTr{2}{\sbra{\prth{a^* - \overline{\alpha}}\otimes\prth{a - \alpha} + \prth{a - \alpha}\otimes\prth{a^* - \overline{\alpha}}, \gTwo}}. \label{eq:gamma1 dynamics}
        \end{align}
    We have the following propagation bounds.
    \begin{proposition}\label{prop:gamma_d conservation laws}
        Let $\gOne$ solve \eqref{eq:gamma1 dynamics}, let $k\in \mathbb{N}/2, k\ge 1$, $t\in\mathbb{R}_+$, and $ \gOne(0)\mathcal{N}^k\in \mathcal{L}^1(\ell^2(\C))$. Then
        \begin{align}
            \Tr{\gOne(t)  \mathcal{N}^k}
                \le \prth{\Tr{\gOne (0) \mathcal{N}^k} + e^{-1}k^k}e^{2eJk t} \label{eq:BH moment bound}.
        \end{align}
    \end{proposition}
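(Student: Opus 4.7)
My plan follows the template of the proof of the mean-field bound \eqref{eq:mf moment bound 1}, adapted to the many-body dynamics. I would start from the von Neumann equation for $\gamma_d$ to write $\partial_t \Tr{\gamma_d \mathcal{N}_x^k} = -i \Tr{\gamma_d [\mathcal{N}_x^k, H_d]}$, noting that only the hopping piece of $H_d$ contributes since $\mathcal{N}_x^k$ commutes with $\mathcal{N}_y$ and $\mathcal{N}_y(\mathcal{N}_y-1)$. Using the identities $[\mathcal{N}^k, a^*] = a^* \mathcal{A}^{k-1}$ and $[\mathcal{N}^k, a] = -\mathcal{A}^{k-1} a$ with $\mathcal{A}^{k-1} = (\mathcal{N}+1)^k - \mathcal{N}^k$, together with the self-adjointness of $\gamma_d$ and of $\mathcal{A}_x^{k-1}$, averaging over $x \in \Lambda$ would give
\begin{equation*}
\partial_t \Tr{\gOne \mathcal{N}^k} = -\frac{J}{d\abs{\Lambda}} \sum_{x \in \Lambda} \sum_{z \sim x} \text{Im}\Tr{\gamma_d \, a_x^* \mathcal{A}_x^{k-1} a_z}.
\end{equation*}

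The heart of the argument is an operator inequality applied termwise. I would use $\pm i(A^*B - B^*A) \leq A^*A + B^*B$ (obtained from $(A \mp iB)^*(A \mp iB) \geq 0$) with the choices $A = \mathcal{A}_x^{(k-1)/2} a_x$ and $B = \mathcal{A}_x^{(k-1)/2} a_z$, and trace it against $\gamma_d \geq 0$. This reduces the estimate to bounding $\Tr{\gamma_d \, a_x^* \mathcal{A}_x^{k-1} a_x}$ and $\Tr{\gamma_d \, a_z^* \mathcal{A}_x^{k-1} a_z}$. The first is at most $k \Tr{\gamma_d \mathcal{N}_x^k}$, using $\mathcal{A}^{k-1} \leq k(\mathcal{N}+1)^{k-1}$ from \eqref{eq:MB trick} together with $a^*(\mathcal{N}+1)^{k-1} a = \mathcal{N}^k$. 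For the second, $a_z$ commutes with $\mathcal{A}_x^{k-1}$ for $x \neq z$, so it equals $\Tr{\gamma_d \mathcal{A}_x^{k-1} \mathcal{N}_z} \leq k \Tr{\gamma_d (\mathcal{N}_x+1)^{k-1} \mathcal{N}_z}$; Young's inequality at the operator level, legitimate because $\mathcal{N}_x$ and $\mathcal{N}_z$ commute, then gives $(\mathcal{N}_x+1)^{k-1} \mathcal{N}_z \leq \tfrac{k-1}{k}(\mathcal{N}_x+1)^k + \tfrac{1}{k}\mathcal{N}_z^k$. Summing over $x \in \Lambda$ and $z \sim x$, the combinatorial identities $\sum_{x, z \sim x} \Tr{\gamma_d \mathcal{N}_x^k} = \sum_{x, z \sim x} \Tr{\gamma_d \mathcal{N}_z^k} = 2d \abs{\Lambda} \Tr{\gOne \mathcal{N}^k}$ would collapse the double sum back to the one-lattice-site density, producing
\begin{equation*}
\abs{\partial_t \Tr{\gOne \mathcal{N}^k}} \leq 2 J k \Tr{\gOne (\mathcal{N}+1)^k}.
\end{equation*}

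To close the Gronwall loop, I would replay verbatim the cut-off step from the proof of \eqref{eq:mf moment bound 1}: split $(\mathcal{N}+1)^k = (\mathcal{N}+1)^k (\mathds{1}_{\mathcal{N} < k} + \mathds{1}_{\mathcal{N} \geq k})$, bounding the first region by $k^k$ and the second by $e \mathcal{N}^k$ (since $(1+1/n)^k \leq e$ for $n \geq k$), and use $\Tr{\gOne} = 1$ to obtain $\abs{\partial_t \Tr{\gOne \mathcal{N}^k}} \leq 2 J k \bigl( k^k + e \Tr{\gOne \mathcal{N}^k} \bigr)$. Gronwall's lemma then delivers exactly the constants $2eJk$ in the exponent and $e^{-1} k^k$ in the prefactor.

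The main obstacle I anticipate is the genuine coupling between lattice sites induced by the hopping: the correlator $\Tr{\gamma_d \, a_x^* \mathcal{A}_x^{k-1} a_z}$ cannot be estimated without joint information about sites $x$ and $z$, so one cannot hope to close a Gronwall estimate in $\Tr{\gOne \mathcal{N}^k}$ alone by a naive factorization. The crucial step that resolves this is the operator-level Young inequality applied to $(\mathcal{N}_x+1)^{k-1} \mathcal{N}_z$, which, combined with the symmetrization $x \leftrightarrow z$ in the sum over nearest-neighbour pairs, absorbs the cross term back into a pure one-lattice-site quantity and, incidentally, explains why the Bose--Hubbard bound avoids the extra factor $\Tr{\gOne(0) \mathcal{N}}^{1/2}$ that appears in the mean-field bound \eqref{eq:mf moment bound 1}.
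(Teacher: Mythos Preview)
Your proposal is correct and follows essentially the same route as the paper: isolate the hopping contribution to $\partial_t \Tr{\gOne \mathcal{N}^k}$, use a Cauchy--Schwarz/AM--GM type bound together with $\mathcal{A}^{k-1}\le k(\mathcal{N}+1)^{k-1}$ and Young's inequality on the commuting pair $(\mathcal{N}_x+1)^{k-1}\mathcal{N}_z$ to reduce to one-site quantities, and then apply the cutoff $\mathds{1}_{\mathcal{N}<k}+\mathds{1}_{\mathcal{N}\ge k}$ and Gronwall. The only stylistic differences are that the paper packages the two-site correlator via $\gTwo$ and uses trace Cauchy--Schwarz (geometric mean, then bounded by the arithmetic mean), whereas you work at the level of $\gamma_d$ with explicit lattice sums and invoke the operator inequality $\pm i(A^*B-B^*A)\le A^*A+B^*B$ directly; both yield the same differential inequality $|\partial_t \Tr{\gOne \mathcal{N}^k}|\le 2Jk\,\Tr{\gOne(\mathcal{N}+1)^k}$.
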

    \begin{proof} 
        Similarly to \eqref{eq:MB trick}, let us define
        \begin{align*}
            \mathcal{A}^{k-1} \coloneq \prth{\mathcal{N}+1}^k - \mathcal{N}^k \le k\prth{\mathcal{N}+1}^{k-1}.
        \end{align*}
        Then Cauchy--Schwarz yields
        \begin{align*}
            \abs{\partial_t \Tr{\gOne \mathcal{N}^k}}
                &\le  2J \abs{\Tr{\gTwo \sbra{a_1, \mathcal{N}_1^k} a_2}} \\
                &= 2J \abs{\Tr{\gTwo a_2 \mathcal{A}_1^{k-1} a_1}} \\
                &\le 2J \Tr{\gTwo  a_2 \mathcal{A}_1^{k-1} a_2^*}^{\frac{1}{2}} \Tr{\gTwo  a_1^* \mathcal{A}_1^{k-1} a_1}^{\frac{1}{2}} \\
                &\leq 2J k \Tr{\gTwo \prth{\mathcal{N}_1+1}^{k-1}\prth{\mathcal{N}_2 + 1}}^{\frac{1}{2}} \Tr{\gOne \mathcal{N}^k}^{\frac{1}{2}}
        \end{align*}
        Since $\sbra{\prth{\mathcal{N}_1+1}^{k-1}, \prth{\mathcal{N}_2 + 1}} = 0$, by Young's inequality,
        \begin{align*}
            \prth{\mathcal{N}_1+1}^{k-1} \prth{\mathcal{N}_2 + 1}
                &\le \prth{1-\frac{1}{k}}\prth{\mathcal{N}_1+1}^{k} + \frac{1}{k} \prth{\mathcal{N}_2 + 1}^k.
        \end{align*}
        Introducing a cutoff similarly to \eqref{eq:tr pNk}, we conclude that
        \begin{align*}
            \abs{\partial_t \Tr{\gOne \mathcal{N}^k}}
                &\le 2J k \prth{\prth{1-\frac{1}{k}}\Tr{\gTwo \prth{\mathcal{N}_1+1}^{k}} + \frac{1}{k} \Tr{\gTwo \prth{\mathcal{N}_2 + 1}^k}}^{\frac{1}{2}} \Tr{\gOne \mathcal{N}^k}^{\frac{1}{2}} \\
                &= 2Jk \Tr{\gOne \prth{\mathcal{N}+1}^k}^{\frac{1}{2}} \Tr{\gOne \mathcal{N}^k}^{\frac{1}{2}} \\
                &\le  2Jk \Tr{\gOne \mathcal{N}^k}^{\frac{1}{2}} \prth{k^k + e \Tr{\gOne  \mathcal{N}^k}}^{\frac{1}{2}} \\
                &\le 2Jk \prth{k^k + e \Tr{\gOne  \mathcal{N}^k}}.
        \end{align*}

        With Gronwall's lemma we conclude that
        \begin{align*}
            \Tr{\gOne (t) \mathcal{N}^k} 
                \le \prth{\Tr{\gOne (0) \mathcal{N}^k} + e^{-1}k^k}e^{2eJk t}.
        \end{align*}
    
    \end{proof}

\subsection{Gronwall Estimate}

    Both Theorems~\ref{th:moment method} and \ref{thm:excitation method} are proven via a Gronwall estimate for the quantity $\Tr{\gOne q}$. This is directly related to the trace norm difference of reduced density matrices, analogous to the case of the weak coupling limit \cite{knowles2010}, as the following Lemma shows.
    
    \begin{lem}\label{lem_q_dens_mat}
        Let $p$ be a rank one projection and $\gamma$ a positive trace $1$ operator on $\ell^2(\C)$ and $q \coloneq 1 - p$. Then
        \begin{align}
            &2 \Tr{\gamma q} \le \norm{\gamma - p}_{\mathcal{L}^1} \le 2\sqrt{2} \sqrt{\Tr{\gamma q}}. \label{eq:Tr norm estimate}
        \end{align}
    \end{lem}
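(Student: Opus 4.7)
The plan is to prove both inequalities by carefully exploiting the block decomposition of $\gamma - p$ with respect to the orthogonal projections $p$ and $q=1-p$, together with the fact that $p$ is rank one (so all off-diagonal blocks have rank at most one).

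For the lower bound $2\Tr(\gamma q) \le \|\gamma - p\|_{\mathcal{L}^1}$, I would use the duality characterization of the trace norm: for any bounded self-adjoint $A$ with $\|A\|_{\mathcal{L}^\infty} \le 1$ we have $|\Tr((\gamma - p)A)| \le \|\gamma - p\|_{\mathcal{L}^1}$. The key observation is that $A := q - p = \mathds{1} - 2p$ is self-adjoint with $\|A\|_{\mathcal{L}^\infty} = 1$. Using $pq = 0$, $p^2 = p$, $\Tr p = 1$ and $\Tr(\gamma p) = 1 - \Tr(\gamma q)$, a direct computation gives
\begin{align*}
    \Tr\bigl((\gamma - p)(q - p)\bigr) = \Tr(\gamma q) - \Tr(\gamma p) + 1 = 2\Tr(\gamma q),
\end{align*}
which yields the claimed inequality.

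For the upper bound, I would decompose
\begin{align*}
    \gamma - p = (p\gamma p - p) + p\gamma q + q\gamma p + q\gamma q
\end{align*}
and estimate each summand separately. Since $p$ is rank one, $p\gamma p = \Tr(\gamma p)\,p$, so $p\gamma p - p = -\Tr(\gamma q)\,p$ and $\|p\gamma p - p\|_{\mathcal{L}^1} = \Tr(\gamma q)$. The positive block $q\gamma q$ contributes $\|q\gamma q\|_{\mathcal{L}^1} = \Tr(\gamma q)$. For the off-diagonal blocks, which are rank at most one, the trace norm coincides with the Hilbert--Schmidt norm, and factorizing $\gamma = \sqrt{\gamma}\sqrt{\gamma}$ and applying Cauchy--Schwarz gives
\begin{align*}
    \|p\gamma q\|_{\mathcal{L}^1} \le \|p\sqrt{\gamma}\|_{\mathcal{L}^2}\,\|\sqrt{\gamma}q\|_{\mathcal{L}^2} = \sqrt{\Tr(\gamma p)\,\Tr(\gamma q)} = \sqrt{(1-\Tr(\gamma q))\,\Tr(\gamma q)},
\end{align*}
and similarly for $q\gamma p$. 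Summing everything yields
\begin{align*}
    \|\gamma - p\|_{\mathcal{L}^1} \le 2\Tr(\gamma q) + 2\sqrt{\Tr(\gamma q)(1-\Tr(\gamma q))} = 2\sqrt{\Tr(\gamma q)}\,\bigl(\sqrt{\Tr(\gamma q)} + \sqrt{1-\Tr(\gamma q)}\bigr).
\end{align*}

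The final step is the elementary inequality $\sqrt{x} + \sqrt{1-x} \le \sqrt{2}$ for $x \in [0,1]$, which follows from Cauchy--Schwarz $(\sqrt{x} + \sqrt{1-x})^2 \le 2(x + (1-x)) = 2$, and applies here since $\Tr(\gamma q) \in [0,1]$. Combining this with the previous bound produces the desired $2\sqrt{2}\sqrt{\Tr(\gamma q)}$. There is no real obstacle here; the only point requiring a small amount of care is recognizing that the off-diagonal blocks are rank one so that the trace norm can be reduced to a Hilbert--Schmidt computation, and using the sharp form of Cauchy--Schwarz at the end to get the constant $2\sqrt{2}$ rather than the cruder $4$.
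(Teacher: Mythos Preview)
Your proof is correct and follows essentially the same approach as the paper for the upper bound: the same block decomposition, the same Cauchy--Schwarz estimate on the off-diagonal pieces, and the same elementary inequality $\sqrt{x}+\sqrt{1-x}\le\sqrt{2}$ at the end.

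For the lower bound, your argument is slightly different and in fact cleaner. The paper tests against $p$ alone, writing $\Tr(\gamma q)=\Tr((p-\gamma)p)\le\|\gamma-p\|_{\mathcal{L}^1}$, which as stated only yields the bound without the factor $2$ (the factor $2$ requires the additional observation that $\gamma-p$ has trace zero, so $\Tr((p-\gamma)p)\le\Tr((p-\gamma)_+)=\tfrac12\|\gamma-p\|_{\mathcal{L}^1}$). Your choice of test operator $A=q-p=\mathds{1}-2p$ produces the factor $2$ directly in one line, with no need to invoke the positive/negative part decomposition. Both routes are short, but yours is more transparent.
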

    \begin{proof}
        In order to get the upper bound in \eqref{eq:Tr norm estimate}, we first notice that since $\gamma \le 1$ and $\Tr{\gamma } = \Tr{p} = 1$,
        \begin{align*}
            \norm{p\gamma p - p}_{\mathcal{L}^1} = \Tr{\prth{1 - \gamma }p} = 1 - \Tr{\gamma p} = \Tr{\gamma (1-p)} =\Tr{\gamma q},
        \end{align*}
        so
        \begin{align*}
            \norm{\gamma - p}_{\mathcal{L}^1} 
                &= \norm{(p+q)\gamma (p+q) - p}_{\mathcal{L}^1}  \\
                &\le 2 \Tr{\gamma q} + 2 \norm{q\gamma p}_{\mathcal{L}^1} \\
                &\le 2 \Tr{\gamma q} + 2\sqrt{\Tr{\gamma q}}\sqrt{\Tr{\gamma p}} \\
                &= 2 \sqrt{\Tr{\gamma q}}\prth{\sqrt{\Tr{\gamma q}} + \sqrt{1 - \Tr{\gamma q}}} \\
                &\le 2\sqrt{2}\sqrt{\Tr{\gamma q}},
        \end{align*}
        where we used $\sqrt{x} + \sqrt{1 - x} \le \sqrt{2}$ for $0 \le x \le 1$. 
        The lower bound follows directly from
        \begin{equation*}
            \Tr{\gamma q} = \Tr{\prth{p - \gamma }p} \leq \norm{\gamma - p}_{\mathcal{L}^1}.
        \end{equation*}
    \end{proof}

    Next, we compute the time derivative of $\Tr{\gOne q}$ and estimate some of the appearing terms. This is analogous to the estimates in the weak coupling limit, see, e.g., \cite[Lemma~3.2]{pickl2011}. The only term that causes technical difficulties is $\Tr{\gOne q\prth{\mathcal{N}+1}q}$, and Sections~\ref{sec_proof_thm_1} and \ref{sec_proof_thm_2} are devoted to controlling this term in different ways, leading to our two main theorems.

    \begin{proposition}\label{prop:gronwall start}
        Let $\gamma_d$ solve \eqref{eq:gammad dynamics} with normalized initial data $\gamma_d(0) \in \mathcal{L}^1\prth{\mathcal{F}}$ and $\varphi$ solve \eqref{eq:MFS} with normalized initial data $\varphi(0) \in \ell^2(\C)$. We define $p \coloneq \ket{\varphi}\bra{\varphi}$ and $q \coloneq 1 - p$. Then
        \begin{align}
            &\abs{\partial_t \Tr{\gOne q}} \nonumber\\
                &\le J\prth{\Tr{p \mathcal{N}} + 1}^{\frac{1}{2}}\prth{8 \Tr{p \mathcal{N}}^{\frac{1}{2}} \Tr{\gOne q}
                    + 4\Tr{\gOne q}^{\frac{1}{2}}\Tr{\gOne q\prth{\mathcal{N}+1}q}^{\frac{1}{2}}
                    + \frac{\Tr{p \mathcal{N}}^{\frac{1}{2}}}{d}}.
        \end{align}
    \end{proposition}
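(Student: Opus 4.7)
The approach is to differentiate $\Tr{\gOne(t) q_\varphi(t)}$ in time, exploit an exact cancellation of the mean-field contributions, and then estimate a single correlation-type trace. By the product rule together with \eqref{eq:gamma1 dynamics} for $\partial_t \gOne$ and the identity $i\partial_t q = [h^\varphi, q]$ (obtained by differentiating $p_\varphi = |\varphi\rangle\langle\varphi|$ along the mean-field flow), one gets
\[
i\partial_t \Tr{\gOne q} = \Tr{[h^\varphi, \gOne] q} + \Tr{\gOne [h^\varphi, q]} - J\, \Tr{\gTwo [q_1, B]},
\]
where $B = (a^*-\bar\alpha)_1 (a-\alpha)_2 + (a-\alpha)_1 (a^*-\bar\alpha)_2$. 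The two mean-field commutator contributions cancel exactly by cyclicity of the trace. Since $q_1$ commutes with all site-$2$ operators and with scalars, $[q_1, B] = [q, a^*]_1 (a-\alpha)_2 + [q, a]_1 (a^*-\bar\alpha)_2$, and the symmetry of $\gTwo$ under site swap combined with complex conjugation shows that the second trace equals minus the complex conjugate of the first. Hence $|\partial_t \Tr{\gOne q}| \le 2J\, |T_1|$ with $T_1 := \Tr{\gTwo [q, a^*]_1 (a-\alpha)_2}$.

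Next I would bound $|T_1|$ by decomposing both factors via $p+q = 1$. Using $[q, a^*] = q a^* p - p a^* q$ together with $(a-\alpha)p = qap$, $p(a-\alpha) = paq$, and $p(a-\alpha)p = 0$ (all of which follow from $p\varphi = \varphi$ and $\alpha = \langle\varphi, a\varphi\rangle$), one expands $(a-\alpha) = qap + paq + q(a-\alpha)q$ and obtains six individual summands. Each is controlled by a Cauchy--Schwarz in Hilbert--Schmidt or trace/operator norms, using three key ingredients: (i) $\|a^*\varphi\|^2 = \Tr{p\mathcal{N}}+1$ and $|\alpha|^2 \le \Tr{p\mathcal{N}}$ (cf.\ \eqref{eq:alpha bound}); (ii) the operator inequality $q(a-\alpha)^*(a-\alpha)q \le 2(\Tr{p\mathcal{N}}+1)\, q(\mathcal{N}+1)q$, which follows from $(a-\alpha)^*(a-\alpha) \le 2\mathcal{N}+2|\alpha|^2$; (iii) the identity $\Tr{\gTwo q_1 p_2} = \Tr{\gOne q} - \Tr{\gTwo q_1 q_2} \le \Tr{\gOne q}$, which uses $\PTr{2}{\gTwo} = \gOne$ and $p_2+q_2=1$.

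The six terms split into three contributions. The terms containing $q(a-\alpha)q$ on site $2$ produce the $\Tr{\gOne q}^{1/2}\Tr{\gOne q(\mathcal{N}+1)q}^{1/2}$ piece via ingredient (ii) combined with Hilbert--Schmidt Cauchy--Schwarz. The crossed terms $\Tr{\gTwo (qa^*p)_1 (paq)_2}$ and $\Tr{\gTwo (pa^*q)_1 (qap)_2}$ give the linear-in-$\Tr{\gOne q}$ contribution: writing the first as $\Tr{K M}$ with $K = (p_1 q_2)\gTwo(q_1 p_2)$ and $M = (a^*p)_1 (pa)_2$, one has $\|K\|_{\mathcal{L}^1} \le \Tr{\gOne q}$ by ingredient (iii) and $\|M\|_{\mathrm{op}} \le \Tr{p\mathcal{N}}+1$. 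The remaining same-block contributions, including $\Tr{\gTwo (qa^*p)_1 (qap)_2}$ and the scalar piece $-\alpha \Tr{\gTwo [q, a^*]_1 q_2}$, vanish identically for the product ansatz $\gTwo = p \otimes p$ and are the source of the $1/d$ remainder; this factor emerges by returning to the Fock-space level, where the normalization $\frac{1}{2d|\Lambda|}\sum_{x\sim y}$ of $\gTwo$ together with the nearest-neighbor constraint produces the $1/d$ suppression relative to the $\Tr{\gOne q}$-sized terms already accounted for. The main technical obstacle is the careful bookkeeping of all six summands with the specific numerical coefficients matching those in the statement; structurally, however, the argument is a Pickl-style scheme adapted to the mean-field-in-dimension setting.
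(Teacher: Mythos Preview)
Your overall strategy---differentiate, cancel the mean-field commutators, reduce to $2J|T_1|$, then decompose via $p+q$---is exactly the paper's scheme, and your treatment of the three-$q$ terms (your (C), (F), i.e.\ those carrying $q(a-\alpha)q$ on site~2) is correct and matches the paper's $pq$--$qq$ estimate. There is, however, a genuine gap in how you handle the $pp$--$qq$ contribution (your terms $(qa^*p)_1(qap)_2$ and $-(pa^*q)_1(paq)_2$).

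Your description ``this factor emerges by returning to the Fock-space level, where the normalization $\tfrac{1}{2d|\Lambda|}\sum_{x\sim y}$ together with the nearest-neighbor constraint produces the $1/d$ suppression'' is not a proof. A direct Cauchy--Schwarz on $\Tr{\gTwo p_1p_2\,a_1^*a_2\,q_1q_2}$ at the reduced level gives only $\big(\Tr{p\mathcal{N}}\Tr{p(\mathcal{N}+1)}\big)^{1/2}\Tr{\gTwo q_1q_2}^{1/2}$, i.e.\ a bound of size $\Tr{\gOne q}^{1/2}$, which is \emph{worse} than both $\Tr{\gOne q}$ and $1/d$ and would not close the Gronwall argument. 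The paper's mechanism is specific: one writes the term as $\tfrac{1}{d|\Lambda|}\sum_x\Tr{q_x\gamma_d^{1/2}\cdot\gamma_d^{1/2}\sum_{y\sim x}p_xp_ya_x^*a_yq_y}$, applies Cauchy--Schwarz with this particular splitting (isolating a \emph{single} $q_x$), and then squares the neighbor sum. The diagonal $y=z$ contributes $O(d)$ terms each of order $\Tr{p\mathcal{N}}\Tr{p(\mathcal{N}+1)}$, which after the $\tfrac{1}{d|\Lambda|}$ normalization yields the $1/d$; the off-diagonal $y\neq z$ carries an extra $q_yq_z$, bounded by $\Tr{\gamma_d q_y}^{1/2}\Tr{\gamma_d q_z}^{1/2}$, and after the double sum and Cauchy--Schwarz produces a term \emph{linear} in $\Tr{\gOne q}$. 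Optimizing the free parameter $\epsilon$ then gives \eqref{eq:ppqq term}. In particular, this single term contributes \emph{both} the $1/d$ \emph{and} part of the $8\Tr{p\mathcal{N}}^{1/2}\Tr{\gOne q}$ coefficient---not just the former as you suggest.

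Two smaller remarks: (i) your crossed terms $\Tr{\gTwo(qa^*p)_1(paq)_2}$ and $\Tr{\gTwo(pa^*q)_1(qap)_2}$ are in fact real (by site-swap symmetry of $\gTwo$ combined with adjoint), so they do not contribute to $\text{Im}(T_1)$; bounding $|T_1|$ rather than $|\text{Im}(T_1)|$ is harmless but costs you a clean match with the stated constants. (ii) The ``scalar piece $-\alpha\Tr{\gTwo[q,a^*]_1q_2}$'' you mention is inconsistent with your own decomposition $(a-\alpha)=qap+paq+q(a-\alpha)q$, which has no separate scalar term.
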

    \begin{proof}
    
    \textbf{Computation of the time derivative.} We introduce the self-adjoint operator
        \begin{align*}
            A \coloneq \prth{a^* - \overline{\alpha_\varphi}}\otimes\prth{a - \alpha_\varphi} + \prth{a - \alpha_\varphi}\otimes\prth{a^* - \overline{\alpha_\varphi}}.
        \end{align*}
        With \eqref{eq:gamma1 dynamics}, we start by computing
        \begin{align}
            i\partial_t \Tr{\gOne q}
            &= \Tr{\sbra{h^{\varphi}, \gOne }q} 
                - J \Tr{\sbra{A, \gTwo}q_1}
                + \Tr{\gOne \sbra{h^{\varphi}, q}} \nonumber\\
            &= J \Tr{\gTwo  \sbra{A, q_1}} \nonumber \\
            &= 2i J \text{Im}\sbra{\Tr{\gTwo  A q_1}}. \label{gamma q computation}
        \end{align}
        Inserting resolution of identities $1=p+q$, we get
        \begin{align*}
            &\Tr{\gTwo  A q_1} \\ &= 
            \Tr{\gTwo  p_1 p_2 A q_1 p_2}
                + \Tr{\gTwo  p_1 p_2 A q_1 q_2}
                + \Tr{\gTwo  p_1 q_2 A q_1 p_2}
                + \Tr{\gTwo  p_1 q_2 A q_1 q_2} \\
                &\quad+ \Tr{\gTwo  q_1 p_2 A q_1 p_2}
                + \Tr{\gTwo  q_1 p_2 A q_1 q_2}
                + \Tr{\gTwo  q_1 q_2 A q_1 p_2}
                + \Tr{\gTwo  q_1 q_2 A q_1 q_2}.
        \end{align*}
        Note that $q_1 p_2 A q_1 p_2$ and $q_1 q_2 A q_1 q_2$ are self adjoint and hence do not contribute to \ref{gamma q computation}. This is also the case for $q_1 p_2 A q_1 q_2$ and $q_1 q_2 A q_1 p_2$ which are each others complex conjugate. Furthermore, $p_1 p_2 A q_1 p_2 = 0$ by definition of $A$. Then, by symmetry, we see that $p_1 q_2 A q_1 p_2$ is also not contributing, since
        \begin{align*}
            \Tr{\gTwo  p_1 q_2 A q_1 p_2} = \Tr{\gTwo  q_1 p_2 A p_1 q_2} = \overline{\Tr{\gTwo  p_1 q_2 A q_1 p_2}}.
        \end{align*}
        Thus, we are left with
        \begin{align}
            i\partial_t \Tr{\gOne q} &=
                2i J \text{Im}\sbra{\Tr{\gTwo  p_1 p_2 A q_1 q_2}}
                + 2i J \text{Im}\sbra{\Tr{\gTwo  p_1 q_2 A q_1 q_2}}. \label{eq:gamma q decomposition}
        \end{align}
        
    \noindent \textbf{Estimation of the $p_1 p_2 A q_1 q_2$ term.}   
        Since $pq = 0$,
        \begin{align*}
            \Tr{\gTwo  p_1 p_2 A q_1 q_2} = \Tr{\gTwo  p_1 p_2 (a^*_1 a_2 + a_1 a^*_2) q_1 q_2},
        \end{align*}
        and by symmetry of $\gTwo $,
        \begin{align*}
            \Tr{\gTwo  p_1 p_2 a^*_1 a_2 q_1 q_2} = \Tr{\gTwo  p_1 p_2 a_1 a^*_2 q_1 q_2}.
        \end{align*}
        Then, we use Cauchy--Schwarz to estimate, for any $\epsilon >0$,
        \begin{align*}
            \abs{\Tr{\gTwo  p_1 p_2 A q_1 q_2}}
                &= \dfrac{1}{d \abs{\Lambda}} \abs{\sum_{\substack{x, y\in \Lambda \\ x \sim y}} \Tr{\gamma_d p_x p_y a^*_x a_y q_x q_y}} \\
                &\le \dfrac{1}{d \abs{\Lambda}} \sum_{x \in \Lambda} \abs{\Tr{q_x \gamma_d^{\frac{1}{2}} \cdot \gamma_d^{\frac{1}{2}} \sum_{y \in \Lambda, x \sim y}p_x p_y a^*_x a_y q_y}} \\
                &\le \dfrac{1}{2d \epsilon \abs{\Lambda}} \sum_{x \in \Lambda} \Tr{q_x \gamma_d} 
                    + \dfrac{\epsilon}{2 d \abs{\Lambda}} \sum_{x \in \Lambda}\Tr{\gamma_d \sum_{\substack{y \in \Lambda , x \sim y\\ z \in \Lambda , x \sim z}}  p_x p_y a^*_x a_y q_y q_z a^*_z a_x p_z p_x} \\
                &=\frac{1}{2d \epsilon} \Tr{\gOne q} + \epsilon \dfrac{\Tr{p \mathcal{N}}}{2 d \abs{\Lambda}} \sum_{\substack{x, y\in \Lambda \\ x \sim y}} \sum_{z\in\Lambda,x \sim z}\Tr{\gamma_d  p_x p_y a_y q_y q_z a^*_z p_z} \\
                &= \frac{1}{2d \epsilon} \Tr{\gOne q} 
                    + \epsilon\dfrac{\Tr{p \mathcal{N}}}{2 d \abs{\Lambda}} \sum_{\substack{x, y\in \Lambda \\ x \sim y}}\Tr{\gamma_d  p_x p_y a_y q_y a^*_y p_y} \\
                    &\quad + \epsilon\dfrac{\Tr{p \mathcal{N}}}{2 d \abs{\Lambda}} \sum_{\substack{x, y\in \Lambda \\ x \sim y}} \sum_{\substack{z\in\Lambda,x \sim z \\ z \neq y}}\Tr{q_y \gamma_d q_z p_x p_y a_y a^*_z p_z}.
        \end{align*}
        The last two summands can be estimated as
        \begin{align*}
            \sum_{\substack{x, y\in \Lambda \\ x \sim y}} \Tr{\gamma_d p_x p_y a_y q_y a^*_y p_y}
                &\le \sum_{\substack{x, y\in \Lambda \\ x \sim y}} \Tr{\gamma_d p_x p_y a_y a^*_y p_y}
                = \Tr{p(\mathcal{N} + 1)} \sum_{\substack{x, y\in \Lambda \\ x \sim y}} \Tr{\gamma_d p_x p_y} \\
                &\le 2d \abs{\Lambda} \prth{\Tr{p\mathcal{N}}+1},
        \end{align*}
        and
        \begin{align*}
             \sum_{\substack{x, y\in \Lambda \\ x \sim y}} \sum_{\substack{z\in\Lambda,x \sim z \\ z \neq y}}\Tr{q_y \gamma_d q_z p_x p_y a_y a^*_z p_z} 
                &\le \sum_{\substack{x, y\in \Lambda \\ x \sim y}} \sum_{\substack{z\in\Lambda,x \sim z \\ z \neq y}} \Tr{\gamma_d q_z p_x p_y a_y a^*_y p_y}^{\frac{1}{2}} \Tr{\gamma_d q_y p_z a_z a^*_z p_z}^{\frac{1}{2}} \\
                &= \Tr{p(\mathcal{N} + 1)} \sum_{\substack{x, y\in \Lambda \\ x \sim y}} \sum_{\substack{z\in\Lambda,x \sim z \\ z \neq y}} \Tr{\gamma_d q_z p_x p_y}^{\frac{1}{2}} \Tr{\gamma_d q_y p_z}^{\frac{1}{2}} \\
                &\le \prth{\Tr{p\mathcal{N}} +1} \sum_{\substack{x, y\in \Lambda \\ x \sim y}} \sum_{z\in\Lambda,x \sim z} \Tr{\gamma_d q_z}^{\frac{1}{2}} \Tr{\gamma_d q_y}^{\frac{1}{2}} \\
                &= \prth{\Tr{p\mathcal{N}} +1} \sum_{x\in\Lambda}\prth{\sum_{y\in\Lambda,x \sim y} \Tr{\gamma_d q_y}^{\frac{1}{2}}}^2 \\
                &\le 2d \prth{\Tr{p\mathcal{N}} +1}\sum_{\substack{x, y\in \Lambda \\ x \sim y}} \Tr{\gamma_d q_y} \\
                &= 4d^2\abs{\Lambda} \prth{\Tr{p\mathcal{N}} +1} \Tr{\gOne q}.
        \end{align*}
        Using these two estimates and then choosing $\epsilon^{-1} \coloneq 2d\Tr{p \mathcal{N}}^{\frac{1}{2}}\prth{\Tr{p \mathcal{N}} + 1}^{\frac{1}{2}}$, we obtain 
        \begin{align}
            \abs{\Tr{\gTwo  p_1 p_2 A q_1 q_2}}
                &\le \frac{1}{2d \epsilon} \Tr{\gOne q} 
                    + \epsilon \Tr{p \mathcal{N}}\prth{\Tr{p \mathcal{N}} + 1} \nonumber\\
                    &\quad+ 2d \epsilon \Tr{p \mathcal{N}}\prth{\Tr{p\mathcal{N}} +1} \Tr{\gOne q} \nonumber\\
                &=\Tr{p \mathcal{N}}^{\frac{1}{2}}\prth{\Tr{p \mathcal{N}} + 1}^{\frac{1}{2}}\prth{2 \Tr{\gOne q} + \dfrac{1}{2d}}.\label{eq:ppqq term}
        \end{align}
    
    \noindent \textbf{Estimation of the $p_1 q_2 A q_1 q_2$ term.}   
        Since $pq = 0$,
        \begin{align*}
            \Tr{\gTwo  p_1 q_2 A q_1 q_2} 
                &= \Tr{\gTwo  p_1 q_2 a^*_1 a_2 q_1 q_2}
                    + \Tr{\gTwo  p_1 q_2 a_1 a^*_2 q_1 q_2}
                    - \alpha_\varphi \Tr{\gTwo  p_1 q_2 a^*_1 q_1 q_2}\\
                    &\quad- \overline{\alpha_\varphi} \Tr{\gTwo  p_1 q_2 a_1 q_1 q_2}.
        \end{align*}
        We estimate
        \begin{align*}
            \abs{\Tr{\gTwo  p_1 q_2 a^*_1 a_2 q_1 q_2}}
                &\le \Tr{\gTwo  p_1 q_2 \mathcal{N}_1 p_1}^{\frac{1}{2}} \Tr{\gTwo  q_2q_1 a_2 a^*_2 q_2 }^{\frac{1}{2}} \\
                &= \Tr{p \mathcal{N}}^{\frac{1}{2}} \Tr{\gTwo  p_1 q_2}^{\frac{1}{2}} \Tr{\gTwo  q_1 q_2 \prth{\mathcal{N}_2 + 1}q_2}^{\frac{1}{2}} \\
                &\le \Tr{p \mathcal{N}}^{\frac{1}{2}} \Tr{\gOne q}^{\frac{1}{2}} \Tr{\gOne q\prth{\mathcal{N}+1}q}^{\frac{1}{2}}, \\
            \abs{\Tr{\gTwo  p_1 q_2 a^*_1 q_1 q_2}}
                &\le \Tr{\gTwo  p_1 q_2 \mathcal{N}_1 p_1}^{\frac{1}{2}} \Tr{\gTwo q_1 q_2} ^{\frac{1}{2}} \\
                &= \Tr{p \mathcal{N}}^{\frac{1}{2}}\Tr{\gTwo p_1 q_2} ^{\frac{1}{2}}  \Tr{\gTwo q_1 q_2} ^{\frac{1}{2}}
                \le \Tr{p \mathcal{N}}^{\frac{1}{2}} \Tr{\gOne q},
        \end{align*}
        and similarly
        \begin{align*}
            \abs{\Tr{\gTwo  p_1 q_2 a_1 a^*_2 q_1 q_2}}
                &\le \prth{\Tr{p \mathcal{N}}+1}^{\frac{1}{2}} \Tr{\gOne q}^{\frac{1}{2}} \Tr{\gOne q \mathcal{N}q}^{\frac{1}{2}}, \\
            \abs{\Tr{\gTwo  p_1 q_2 a_1 q_1 q_2}}
                &\le \prth{\Tr{p \mathcal{N}}+1}^{\frac{1}{2}} \Tr{\gOne q}.
        \end{align*}
        Inserting these estimates yields
        \begin{align}
            &\abs{\Tr{\gTwo  p_1 q_2 A q_1 q_2}} \nonumber\\
                &\quad\le 2 \prth{\Tr{p\mathcal{N}} + 1}^{\frac{1}{2}}\prth{\abs{\alpha_\varphi}\Tr{\gOne q} + \Tr{\gOne q}^{\frac{1}{2}}\Tr{\gOne q\prth{\mathcal{N}+1}q}^{\frac{1}{2}}} \nonumber\\
                &\quad\le 2 \prth{\Tr{p\mathcal{N}} + 1}^{\frac{1}{2}}\prth{\Tr{p \mathcal{N}}^{\frac{1}{2}}\Tr{\gOne q} + \Tr{\gOne q}^{\frac{1}{2}}\Tr{\gOne q\prth{\mathcal{N}+1}q}^{\frac{1}{2}}}.\label{eq:pqqq term}
        \end{align}
    
    \noindent \textbf{Conclusion.}    
        Inserting \eqref{eq:ppqq term} and \eqref{eq:pqqq term} into \eqref{eq:gamma q decomposition} we obtain
        \begin{align*}
            &\abs{\partial_t \Tr{\gOne q}} \\
                &\quad\le 2J \Tr{p \mathcal{N}}^{\frac{1}{2}}\prth{\Tr{p \mathcal{N}} + 1}^{\frac{1}{2}}\prth{2 \Tr{\gOne q} + \dfrac{1}{2d}} \\
                    &\qquad+  4J \prth{\Tr{p\mathcal{N}} + 1}^{\frac{1}{2}}\prth{\Tr{p\mathcal{N}}^{\frac{1}{2}}\Tr{\gOne q} + \Tr{\gOne q}^{\frac{1}{2}}\Tr{\gOne q\prth{\mathcal{N}+1}q}^{\frac{1}{2}}} \\
                &\quad= J\prth{\Tr{p \mathcal{N}} + 1}^{\frac{1}{2}}\prth{8 \Tr{p \mathcal{N}}^{\frac{1}{2}} \Tr{\gOne q}
                    + 4\Tr{\gOne q}^{\frac{1}{2}}\Tr{\gOne q\prth{\mathcal{N}+1}q}^{\frac{1}{2}}
                    + \frac{\Tr{p \mathcal{N}}^{\frac{1}{2}}}{d}}.
        \end{align*}
    \end{proof}

    \section{Proof of Theorem~\ref{th:moment method}}\label{sec_proof_thm_1}

    In this Section we prove Theorem~\ref{th:moment method} by estimating the term $\Tr{\gOne q\prth{\mathcal{N}+1}q}$ from the Gronwall estimate in Proposition~\ref{prop:gronwall start} using a moment method.

    \begin{lem} \label{lem:iterated CS}
        Let $k\in \N$ and $\gamma, p \in \mathcal{L}^1(\ell^2(\C))$. We assume that $0\le \gamma \le 1$, that $p$ is a rank one projection and $p\mathcal{N}^k, \gamma \mathcal{N}^k \in \mathcal{L}^1(\ell^2(\C))$.
        Then
       \begin{align}
            \Tr{\gamma  q \mathcal{N}^k q}
                \le 2\Tr{\gamma  \mathcal{N}^k}
                    + 2\Tr{p \mathcal{N}^k}. \label{eq:removing q from moment}
        \end{align}
    \end{lem}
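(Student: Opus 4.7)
The plan is a direct expansion of $q\mathcal{N}^k q$ into four terms followed by elementary operator-theoretic bounds, with no Gronwall or dynamical input needed. Since $q = 1 - p$, we have $q\mathcal{N}^k q = \mathcal{N}^k - p\mathcal{N}^k - \mathcal{N}^k p + p\mathcal{N}^k p$, and taking the trace against $\gamma$ yields
\begin{align*}
\Tr{\gamma q \mathcal{N}^k q} = \Tr{\gamma \mathcal{N}^k} - 2\,\mathrm{Re}\,\Tr{\gamma p\mathcal{N}^k} + \Tr{\gamma p\mathcal{N}^k p}.
\end{align*}
It then suffices to bound the two non-trivial pieces by $\Tr{\gamma\mathcal{N}^k} + 2\Tr{p\mathcal{N}^k}$.

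For the diagonal term I would exploit that $p = \ketbra{\varphi}$ is a rank-one projection, so $p\mathcal{N}^k p = \scp{\varphi}{\mathcal{N}^k \varphi}\, p = \Tr{p\mathcal{N}^k}\, p$, and consequently
\begin{align*}
\Tr{\gamma p\mathcal{N}^k p} = \Tr{p\mathcal{N}^k}\,\Tr{\gamma p} \le \Tr{p\mathcal{N}^k},
\end{align*}
using $0 \le \gamma \le 1$ and $\Tr{p} = 1$. For the cross term I would apply Cauchy--Schwarz in the Hilbert--Schmidt pairing, writing
\begin{align*}
\left|\Tr{\gamma p\mathcal{N}^k}\right| = \left|\Tr{\mathcal{N}^{k/2}\gamma^{1/2}\cdot \gamma^{1/2} p\mathcal{N}^{k/2}}\right| \le \Tr{\gamma\mathcal{N}^k}^{1/2} \Tr{\mathcal{N}^{k/2} p\gamma p\mathcal{N}^{k/2}}^{1/2}.
\end{align*}
Since $\gamma \le 1$ implies $p\gamma p \le p$, the second factor is at most $\Tr{p\mathcal{N}^k}^{1/2}$, and an AM--GM step turns the resulting geometric mean into $2|\Tr{\gamma p\mathcal{N}^k}| \le \Tr{\gamma \mathcal{N}^k} + \Tr{p\mathcal{N}^k}$. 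Combining with the diagonal bound immediately gives \eqref{eq:removing q from moment}.

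There is essentially no conceptual obstacle here; the only point requiring minor care is the unboundedness of $\mathcal{N}^k$, which forces one to interpret expressions such as $\mathcal{N}^{k/2}\gamma^{1/2}$ correctly. The assumptions $\gamma\mathcal{N}^k,\, p\mathcal{N}^k \in \mathcal{L}^1(\ell^2(\C))$ guarantee that all appearing traces are finite, and one can justify the manipulations either by working on the dense domain $\mathcal{D}(\mathcal{N}^k)$ or by first truncating $\mathcal{N}^k$ by $\mathcal{N}^k \mathds{1}_{\mathcal{N}\le M}$ and passing to the limit $M \to \infty$ via monotone/dominated convergence.
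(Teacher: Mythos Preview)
Your proof is correct and follows essentially the same strategy as the paper: expand $q\mathcal{N}^k q$, control the cross terms by Cauchy--Schwarz, and finish with an AM--GM step. The only difference is in the Cauchy--Schwarz split: the paper bounds $\Tr{\gamma p\mathcal{N}^k q}$ by $\sqrt{\Tr{\gamma p\mathcal{N}^k p}}\sqrt{\Tr{\gamma q\mathcal{N}^k q}}$ and then absorbs the resulting $\tfrac{1}{2}\Tr{\gamma q\mathcal{N}^k q}$ back into the left-hand side, whereas your split $\left|\Tr{\gamma p\mathcal{N}^k}\right|\le \sqrt{\Tr{\gamma\mathcal{N}^k}}\sqrt{\Tr{p\mathcal{N}^k}}$ yields the bound directly without the absorption step.
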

    \begin{proof}
        With the Cauchy--Schwarz inequality,
        \begin{align*}
            \Tr{\gamma  q \mathcal{N}^k q}
                &= \Tr{\gamma  \mathcal{N}^k} 
                    - \Tr{\gamma  p \mathcal{N}^k p}
                    - \Tr{\gamma  p \mathcal{N}^k q}
                    - \Tr{\gamma  q \mathcal{N}^k p} \\
                &= \Tr{\gamma  \mathcal{N}^k}
                    - \Tr{\gamma  p \mathcal{N}^k p}
                    + 2\sqrt{\Tr{\gamma  p \mathcal{N}^k p}}\sqrt{\Tr{\gamma  q \mathcal{N}^k q}} \\
                &\le \Tr{\gamma  \mathcal{N}^k}
                    + \Tr{\gamma  p \mathcal{N}^k p}
                    + \dfrac{1}{2} \Tr{\gamma  q \mathcal{N}^k q},
        \end{align*}
        so
        \begin{align*}
            \Tr{\gamma  q \mathcal{N}^k q}
                \le 2\Tr{\gamma  \mathcal{N}^k}
                    + 2\Tr{\gamma  p \mathcal{N}^k p}
                \le 2\Tr{\gamma  \mathcal{N}^k}
                    + 2\Tr{p \mathcal{N}^k}.
        \end{align*}
    \end{proof}

\subsection{The Moment Method}

We will prove Theorem~\ref{th:moment method} by showing that the probability of having a large lattice site occupation outside the product state structure is small. We use the following basic Calculus estimates.

    \begin{lem}\label{Calculus_lemma}
        Let $(u_n)_{n\in\N} \subset \R_+$. Then
        \begin{align}
            \exists a > 0 ~\text{s.t.}~ \forall n \in \N, u_n \le e^{-\frac{n}{a}} \implies \forall k \in \N, \sum_{n \in \N} n^k u_n \le (1+a) a^{k}k!, \label{eq:decay =>}
        \end{align}
        and conversely, 
        \begin{align}
            \exists b > 0 ~\text{s.t.}~ \forall k \in \N, \sum_{n\in\N} n^k u_n \le b^k k! \implies \forall M \in \N, \sum_{n=M}^\infty (n+1) u_n \le (2 + 4b) e^{-\frac{M}{2b}}.
 \label{eq:decay <=}
        \end{align}
    \end{lem}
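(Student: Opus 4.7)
The plan is to prove the two implications separately, each by a different technique.

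For the first implication, since $u_n \le e^{-n/a}$ it suffices to bound $S_k := \sum_{n=0}^{\infty} n^k e^{-n/a}$ by $(1+a)a^k k!$. I would proceed by induction on $k$. The key input is the recursive identity
\[
S_k = \frac{e^{-1/a}}{1-e^{-1/a}}\sum_{j=0}^{k-1}\binom{k}{j} S_j,
\]
which follows by shifting $n = m+1$ and expanding $(m+1)^k$ via the binomial theorem (note $0^k = 0$ for $k \ge 1$). The base case $k=0$ reduces to $\frac{1}{1-e^{-1/a}} \le 1+a$, which follows from the elementary inequality $e^{1/a} \ge 1 + 1/a$. For the inductive step, substituting $S_j \le (1+a)a^j j!$ into the recursion and evaluating the resulting finite sum gives $\sum_{j=0}^{k-1}\binom{k}{j} a^j j! = k!\, a^k\sum_{i=1}^{k}\frac{(1/a)^i}{i!} \le k!\, a^k(e^{1/a}-1)$. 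The prefactor then collapses via $\frac{e^{-1/a}(e^{1/a}-1)}{1-e^{-1/a}}=1$, yielding $S_k \le (1+a)a^k k!$ and closing the induction.

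For the second implication, I would use a Chernoff-style exponential moment bound. For any $\lambda > 0$ we have $\mathds{1}_{n \ge M} \le e^{\lambda(n-M)}$, so
\[
\sum_{n=M}^{\infty}(n+1)u_n \le e^{-\lambda M}\sum_{n=0}^{\infty}(n+1)e^{\lambda n}u_n.
\]
Expanding $(n+1)e^{\lambda n} = \sum_{k\ge 0}\frac{\lambda^k}{k!}(n^{k+1}+n^k)$ and applying the hypothesis $\sum_n n^k u_n \le b^k k!$ to each moment term yields
\[
\sum_{n=0}^{\infty}(n+1)e^{\lambda n}u_n \le \sum_{k=0}^{\infty}\frac{\lambda^k}{k!}\bigl(b^{k+1}(k+1)! + b^k k!\bigr) = \frac{b}{(1-\lambda b)^2} + \frac{1}{1-\lambda b},
\]
valid as long as $\lambda b < 1$. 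Choosing $\lambda = 1/(2b)$ gives $\lambda b = 1/2$, so the right-hand side equals $4b + 2$, producing exactly the claimed bound $(2+4b)e^{-M/(2b)}$.

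The main obstacle is securing the sharp constants in the first implication. A naive Abel summation using $n^k - (n-1)^k \le k n^{k-1}$ yields only $S_k \le \frac{k}{1-e^{-1/a}} S_{k-1}$, which iterates to a factor of $(1+a)^{k}$ instead of the desired $a^k$. The binomial-theorem identity above is the right structural manipulation: it assembles the Taylor series of $e^{1/a}-1$, whose combination with $\frac{e^{-1/a}}{1-e^{-1/a}}$ produces the exact cancellation needed to recover the correct power of $a$. The second implication, by contrast, is more routine: the Chernoff trick is standard once the hypothesis is written as a bound on exponential moments $\sum_n e^{\lambda n} u_n$.
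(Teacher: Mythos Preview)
Your proof is correct. For the second implication your argument is essentially identical to the paper's: the paper also multiplies by $e^{\lambda n}$ (writing $a$ for your $\lambda$), expands the exponential, sums the resulting moment bounds to $\frac{b}{(1-\lambda b)^2}+\frac{1}{1-\lambda b}$, and sets $\lambda=1/(2b)$.

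For the first implication the approaches differ. The paper bounds $\sum_n n^k e^{-n/a}$ by a series--integral comparison: since $x\mapsto x^k e^{-x/a}$ is unimodal with maximum at $ak$, the sum is at most $\int_0^\infty x^k e^{-x/a}\,dx + 2\max = a^{k+1}k! + 2(k/e)^k$, and then Stirling's lower bound $\sqrt{2\pi k}(k/e)^k\le k!$ absorbs the pointwise term into $(1+a)a^k k!$. Your inductive argument via the shift identity $S_k=\frac{e^{-1/a}}{1-e^{-1/a}}\sum_{j<k}\binom{k}{j}S_j$ is more elementary---no integrals, no Stirling---and produces the sharp constant directly through the exact cancellation $\frac{e^{-1/a}(e^{1/a}-1)}{1-e^{-1/a}}=1$. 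The paper's route, on the other hand, makes the origin of the $a^k k!$ factor transparent as the Gamma integral $\int_0^\infty x^k e^{-x/a}\,dx$, at the cost of a separate treatment of $k=0$ and an appeal to Stirling.
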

    \begin{proof}

    \textbf{Proof of \eqref{eq:decay =>}.}
        The function 
        \begin{align*}
            f_a:\matrx{\R_+ &\to&\R_+, x&\mapsto & x^k e^{-\frac{x}{a}}}
        \end{align*}
        is increasing up to $ak$ and decreasing afterwards. Thus, by series-integral comparison,
        \begin{align*}
            \sum_{n\in\N} f_a(n)  
                &\le \intr_{\R_+} f_a(x) dx + f_a\prth{\floor{ak}} + f_a\prth{\ceil{ak}}
                = a^k\prth{ak! + f_1\prth{a^{-1}\floor{ak}} + f_1\prth{a^{-1}\ceil{ak}}} \\
                &\le a^k\prth{ak! + 2f_1(k)}
                = a^k\prth{ak! + 2\prth{\frac{k}{e}}^k}.
        \end{align*}
        If $k\ge 1$, inserting the Stirling lower approximation
        \begin{align}
            \sqrt{2\pi k}\prth{\frac{k}{e}}^k \le k! \label{eq:Stirling}
        \end{align}
        yields
        \begin{align*}
            \sum_{n\in\N} n^k e^{-\frac{n}{a}} 
                \le a^k k! \prth{a + \sqrt{\frac{2}{\pi k}}}
                \le  \prth{1 + a} a^k k!.
        \end{align*}
        The statement also holds for $k=0$ since
        \begin{align*}
            \sum_{n\in\N} e^{-\frac{n}{a}} \le 1 +\intr_{\R_+}e^{-\frac{x}{a}}dx = 1 + a.
        \end{align*}

    \noindent \textbf{Proof of \eqref{eq:decay <=}.} 
        If $0 < a < \frac{1}{b}$ and $M \in \N$ we find
        \begin{align*}
            \sum_{n=M}^\infty (n+1) u_n e^{aM}
                &\le \sum_{n\in\N} (n+1) u_n e^{an}
                =\sum_{n, k \in \N}(n+1) \dfrac{(an)^k}{k!} u_n
                = \sum_{k\in\N} \dfrac{a^k}{k!} \prth{\sum_{n\in\N}n^{k+1}u_n + \sum_{n\in\N}n^k u_n}\\
                &\le \sum_{k\in\N} \dfrac{a^k}{k!}\prth{b^{k+1}(k+1)! + b^k k!} 
                = \sum_{k\in\N} \prth{b(k+1)(ab)^k + (ab)^k} \\
                &= \dfrac{b}{(1-ab)^2} + \frac{1}{1 - ab}.
        \end{align*}
        Choosing $a = \frac{1}{2b}$ yields
        \begin{align*}
            \sum_{n=M}^\infty (n+1) u_n 
                \le \frac{1-ab+b}{(1-ab)^2}e^{-aM} 
                = (2 + 4b) e^{-\frac{M}{2b}}.
        \end{align*}
    \end{proof}

    With this we can prove our first main theorem.

    \begin{proof}[Proof of Theorem~\ref{th:moment method}.]
    \textbf{Controlling $\Tr{\gOne q \prth{\mathcal{N}+1} \mathds{1}_{\mathcal{N} \ge M}q}$ with moments.}
        Let $k \in \N$. Applying \eqref{eq:decay =>} from Lemma~\ref{Calculus_lemma} first to $u_n \coloneq \Tr{p(0) \mathds{1}_{\mathcal{N} = n}}$ and then to $u_n \coloneq\Tr{\gamma_d^{(1)}(0) \mathds{1}_{\mathcal{N} = n}}$ while using the assumption \eqref{eq:decay hypo} from Theorem~\ref{th:moment method}, we obtain directly
        \begin{align}
            &\Tr{p(0) \mathcal{N}^k} \le c(1+a)a^k k!,\label{p_0_N_k_estimate} \\
            &\Tr{\gamma_d^{(1)}(0) \mathcal{N}^k} \le c(1+a)a^k k!.\label{gamma_0_N_k_estimate}
        \end{align}
        For $k \ge 1$, we use first \eqref{eq:removing q from moment} from Lemma~\ref{lem:iterated CS}, then the moment bounds \eqref{eq:mf moment bound 1} from Proposition~\ref{prop:moments bound} and \eqref{eq:BH moment bound} from Proposition~\ref{prop:gamma_d conservation laws}, then \eqref{p_0_N_k_estimate} and \eqref{gamma_0_N_k_estimate}, and Stirling's approximation \eqref{eq:Stirling}, and find
        \begin{align*}
            \sum_{n\in\N} n^k \Tr{\gOne(t)q\mathds{1}_{\mathcal{N} = n}q}
                &= \Tr{\gOne(t)q(t) \mathcal{N}^k q(t)} \\
                &\le 2\Tr{\gOne(t) \mathcal{N}^k}
                    + 2\Tr{p(t) \mathcal{N}^k} \\
                &\le 2\prth{\Tr{p(0) \mathcal{N}^k} + e^{-1}k^k}e^{2eJk\Tr{p(0) \mathcal{N}}^{\frac{1}{2}}t} \\
                    &\quad+ 2\prth{\Tr{\gOne(0) \mathcal{N}^k} + e^{-1}k^k}e^{2eJk t} \\
                &\le 2\prth{\Tr{p(0) \mathcal{N}^k} + \Tr{\gOne(0) \mathcal{N}^k} + 2e^{-1}k^k}e^{C_1 kt} \\
                &\le 4\prth{c(1+a)a^k k! + e^{-1}k^k}e^{C_1 kt} \\
                &\le 4\prth{c(1+a)a^k + \frac{e^{k-1}}{\sqrt{2\pi k}}} k!e^{C_1 kt} \\
                &\le 4 \prth{c (1+a) + e^{-1}}\prth{a^k + e^k} k!e^{C_1 kt} \\
                &\le 4 \prth{c (1+a) + e^{-1}} \prth{(a+e)e^{C_1 t}}^k k!.
        \end{align*}
        This is also valid for $k=0$, so \eqref{eq:decay <=} from Lemma~\ref{Calculus_lemma} implies
        \begin{align}
            \Tr{\gOne(t) q(t) \prth{\mathcal{N}+1} \mathds{1}_{\mathcal{N} \ge M}q(t)}
                &=\sum_{n=M}^\infty(n+1)\Tr{\gOne(t) q(t)\mathds{1}_{\mathcal{N} = n}q(t)} \nonumber \\
                &\le  4\prth{c (1+a) + e^{-1}}\prth{2+4(a+e)e^{C_1 t}} e^{-\frac{M}{2(a+e)}e^{-C_1 t}} \nonumber\\
                &\le C_2 e^{C_1 t -\frac{M}{2(a+e)}e^{-C_1 t}}. \label{eq:decay lem}
        \end{align}

    \noindent \textbf{Conclusion of the proof.}    
        Let $M \in \N^*$. We use the beginning of the Gronwall estimate from Proposition \ref{prop:gronwall start} while introducing a cutoff on $\mathcal{N}$, and then Proposition~\ref{prop:moments bound} to find, for any $\epsilon>0$,
        \begin{align*}
            &\abs{\partial_t \Tr{\gOne q}} \\
                &\le J C_3\left( 8 \Tr{p \mathcal{N}}^{\frac{1}{2}} \Tr{\gOne q}
                    + 4\Tr{\gOne q}^{\frac{1}{2}}\Tr{\gOne q\prth{\mathcal{N}+1}\prth{\mathds{1}_{\mathcal{N} < M} + \mathds{1}_{\mathcal{N} \ge M}}q}^{\frac{1}{2}}  \right. \\
                    &\qquad\qquad \left.+ \frac{\Tr{p \mathcal{N}}^{\frac{1}{2}}}{d} \right) \\
                &\le J C_3\prth{\prth{8 \Tr{p \mathcal{N}}^{\frac{1}{2}} + 4\sqrt{M}} \Tr{\gOne q} 
                    + 4\Tr{\gOne q}^{\frac{1}{2}}\Tr{\gOne q\prth{\mathcal{N}+1}\mathds{1}_{\mathcal{N} \ge M}q}^{\frac{1}{2}}
                    + \frac{\Tr{p \mathcal{N}}^{\frac{1}{2}}}{d}} \\
                &\le J C_3\prth{\prth{8\Tr{p \mathcal{N}}^{\frac{1}{2}} + 4\sqrt{M} + 4\epsilon^{-1}} \Tr{\gOne q} 
                    + \epsilon\Tr{\gOne q\prth{\mathcal{N}+1}\mathds{1}_{\mathcal{N} \ge M}q}
                    + \frac{\Tr{p \mathcal{N}}^{\frac{1}{2}}}{d}}.
        \end{align*}
        Next, we insert \eqref{eq:decay lem} and use the conservation of the mean-field number of particles (see \eqref{mf_N_conservation}). Then the choice $\epsilon \coloneq d^{-1}\, e^{\frac{M}{2(a+e)}e^{-C_1 t}}$ yields
        \begin{align*}
            &\abs{\partial_t \Tr{\gOne(t) q(t)}}  \\
                &\quad\le J C_3\prth{\prth{8\Tr{p(0) \mathcal{N}}^{\frac{1}{2}} + 4\sqrt{M} + 4\epsilon^{-1}} \Tr{\gOne(t) q(t)} + \epsilon C_2 e^{C_1 t -\frac{M}{2(a+e)}e^{-C_1 t}}
                    + \frac{\Tr{p(0) \mathcal{N}}^{\frac{1}{2}}}{d}} \\
                &\quad\le J C_3\prth{\prth{8\Tr{p(0) \mathcal{N}}^{\frac{1}{2}} + 4\sqrt{M} + 4de^{-\frac{M}{2(a+e)}e^{-C_1 t}}} \Tr{\gOne(t) q(t)} 
                    + \frac{C_2 e^{C_1 t} + \Tr{p(0) \mathcal{N}}^{\frac{1}{2}}}{d}}.
        \end{align*}
        Then we choose\footnote{
        Let us comment on the choice of the cutoff parameter. Optimizing in $M$ requires to solve, for $x\ge 0$,
        \begin{align*}
            \sqrt{x} = de^{-\frac{x}{2(a+e)}e^{-C_1 t}} 
              &\iff e^{\frac{x}{a+e}e^{-C_1 t}} x =d^2 
                     \iff e^{\frac{x}{a+e}e^{-C_1 t}} \frac{x}{a+e}e^{-C_1 t} = \frac{d^2}{a+e}e^{-C_1 t} \\
             &\iff \frac{x}{a+e}e^{-C_1 t} = W_0\prth{\frac{d^2}{a+e}e^{-C_1 t}}
                        \iff x = (a+e)e^{C_1t} W_0\prth{\frac{d^2}{a+e}e^{-C_1 t}},
        \end{align*}
        where $W_0$ is the principal branch of the Lambert $W$ function. Our choice of $M$ comes from the fact that
        \begin{align*}
            W_0(x) \eq_{x\to\infty} \ln\prth{\frac{x}{\ln(x)}} + o(1).
        \end{align*}
        }
        \begin{align*}
            M \coloneq \ceil{2(a+e)e^{C_1 t} \ln\prth{\frac{d}{\sqrt{\ln\prth{d+1}}}}}.
        \end{align*}
        Observing that for $d\ge 1 $ we have
        \begin{align*}
            \ln\prth{\frac{d}{\sqrt{\ln\prth{d+1}}}} \le \ln(d+1),
        \end{align*}
        this choice implies
        \begin{align*}
            \sqrt{M} + de^{-\frac{M}{2(a+e)}e^{-C_1 t}} 
                &\le \prth{2(a+e)e^{C_1 t} \ln\prth{\frac{d}{\sqrt{\ln\prth{d+1}}}} + 1}^{\frac{1}{2}} + \sqrt{\ln\prth{d+1}} \\
                &\le \prth{\sqrt{2(a+e)}e^{\frac{C_1}{2}t} +1} \sqrt{\ln(d+1)} + 1.
        \end{align*}
        Consequently, 
        \begin{align*}
            \abs{\partial_t \Tr{\gOne(t) q(t)}}
                &\le J C_3\Bigg(\prth{2C_4 + 4\prth{\sqrt{2(a+e)}e^{\frac{C_1}{2}t} +1} \sqrt{\ln(d+1)}} \Tr{\gOne(t) q(t)} \\
                    &\qquad\qquad + \frac{C_2 e^{C_1 t} + \Tr{p(0) \mathcal{N}}^{\frac{1}{2}}}{d}\Bigg).
        \end{align*}
        Noticing that the time dependent coefficients in the above expression are non-decreasing in time, we can use Gronwall's lemma to obtain
        \begin{align*}
            \Tr{\gOne(t) q(t)} 
                &\le \prth{\Tr{\gOne(0) q(0)} + \frac{C_2 e^{C_1 t} + \Tr{p(0) \mathcal{N}}^{\frac{1}{2}}}{d\prth{2C_4 + 4\prth{\sqrt{2(a+e)}e^{\frac{C_1}{2}t} +1} \sqrt{\ln(d+1)}}}} \\
                    &\quad e^{JC_3\prth{2C_4 + 4\prth{\sqrt{2(a+e)}e^{\frac{C_1}{2}t} +1} \sqrt{\ln(d+1)}}t}.
        \end{align*}
        Finally, using \eqref{eq:Tr norm estimate} from Lemma~\ref{lem_q_dens_mat} proves Theorem~\ref{th:moment method}.
    \end{proof}

    \section{Proof of Theorem~\ref{thm:excitation method}}\label{sec_proof_thm_2}

In this section we prove Theorem~\ref{thm:excitation method} using an energy estimate. Recall that the Bose--Hubbard Hamiltonian $H_d$ can be written as a sum of two time-dependent quantities,
 \[ H_d= \sum_{x\in \Lambda}h^{\alpha_\varphi}_x(t) + \tilde H(t),\]
 where $\alpha_\varphi(t):=\langle \varphi(t), a\varphi(t)\rangle$. Here, $h^{\alpha_\varphi}_x(t)$ is the mean-field operator from \eqref{eq:h mf}, i.e.,
\begin{align}
& h_x^{\alpha_\varphi}(t):= -J\Big[\alpha_\varphi(t) a_x^* +\overline{ \alpha_\varphi(t)} a_x -\vert \alpha_\varphi(t) \vert^2 \Big]+(J -\mu) \mathcal{N}_x +\frac{U}{2} \mathcal{N}_x(\mathcal{N}_x-1),
 \end{align}
and $\tilde H (t)$ can be computed as 
\begin{align}
& \begin{aligned} \tilde H(t) :&= -\frac{J}{2d}  \sum_{<x,y>} \Big( p_x(t)p_y(t) K^{(2)}_{x,y} q_x(t)q_y(t) +p_x(t)q_y(t)  K^{(2)}_{x,y} q_x(t)p_y(t)\Big) +h.c. 
\\ & \quad  -\frac{J}{d}  \sum_{<x,y>}p_x(t)q_y(t)  K^{(3)}_{x,y}(t) q_x(t)q_y(t) +h.c.
\\ &  \quad   -\frac{J}{2d} \sum_{<x,y>}q_x(t)q_y(t)  K^{(4)}_{x,y}(t) q_x(t)q_y(t),\end{aligned}\label{expE}
\end{align}
where
\begin{align}
& K_{x,y}^{(2)}:=a^*_x a_y +a^*_ya_x,\label{kxy}
\\& K_{x,y}^{(3)}(t):=K^{(2)}_{x,y} -\alpha_\varphi(t) a^*_x -\overline{\alpha_\varphi(t)} a_x,\label{overlinekxy}
\\&{K}_{x,y}^{(4)}(t):= K_{x,y}^{(3)}(t) -\alpha_\varphi(t) a^*_y -\overline{\alpha_\varphi(t)} a_y +2 \vert \alpha_\varphi(t) \vert^2.\label{tildekxy}
\end{align}
Here, the superscript $i$ in the expression $K^{(i)}_{x,y}$ refers to the number of $q$'s that  accompany it in the expression of $\tilde H$ in \eqref{expE}. Note that $K^{(2)}_{x,y}$ does not depend on $t$ whereas the other terms $K^{(3)}_{x,y}$ and $K^{(4)}_{x,y}$ do through the term $\alpha_\varphi(t)$. 

For our proof we define the quantities
\begin{equation}\label{defoff}
f(t):= \frac{1}{\vert \Lambda \vert}  \left \langle \Psi_d(t) , \left(  H_d + \sum_{x\in \Lambda} \left (q_x(t)h^{\alpha_\varphi}_x(t)q_x(t)- h^{\alpha_\varphi}_x(t)+c q_x(t)\right) \right)  \Psi_d(t) \right \rangle
 \end{equation}
with $c>0$, and 
\begin{equation}\label{defg}
g(t):=  \frac{1}{\vert \Lambda \vert}  \sum_{x\in \Lambda} \left \langle \Psi_d(t), \left(  q_x(t)\mathcal{N}_x^2 q_x(t)  +q_x(t)\right)\Psi_d(t) \right \rangle.
\end{equation}
The idea of the proof is the following. In the Gronwall estimate from Proposition~\ref{prop:gronwall start}, the problematic term was $\frac{1}{\vert \Lambda \vert}  \sum_{x\in \Lambda} \left \langle \Psi_d(t), q_x(t)\mathcal{N}_x^2 q_x(t) \Psi_d(t) \right \rangle$. Hence, one might want to attempt to do a joint Gronwall argument for this and the original quantity $\frac{1}{\vert \Lambda \vert}  \sum_{x\in \Lambda} \left \langle \Psi_d(t),q_x(t)\Psi_d(t) \right \rangle$ that we want to estimate, i.e., a Gronwall argument for $g$. However, if one computes the time derivative of $g$, one finds higher and higher powers $q\mathcal{N}^kq$ that need to be controlled, so the Gronwall argument cannot be closed. The trick is to instead do a Gronwall argument for $f$. Except for the $cq$ term, $f$ represents the energy of deviations from the lattice product state structure. The technical advantage for the Gronwall argument is that $\langle \Psi_d(t), H_d \Psi_d(t) \rangle$ is conserved, and 
\begin{equation*}
qh^{\alpha_\varphi}q - h^{\alpha_\varphi} = -h^{\alpha_\varphi}p - ph^{\alpha_\varphi} + ph^{\alpha_\varphi}p,
\end{equation*}
so the $\mathcal{N}^2$ term from the interaction appears always together with at least one $p$ projection. And all powers of $\mathcal{N}$ can be controlled when traced out against $p$ due to Proposition~\ref{prop:moments bound}. Hence, we can close a Gronwall estimate for $f$. Finally, one can prove that $Cg-d^{-1} \leq f \leq Cg+d^{-1}$. Hence, $g$ can be estimated in terms of its initial data and an error $d^{-1}$, which, together with Lemma~\ref{lem_q_dens_mat}, implies Theorem~\ref{thm:excitation method}. 

In the following, we start by proving the equivalence of $f$ and $g$ up to an error $d^{-1}$ in Section~\ref{sec_f_g_equivalence}. Then, in Section~\ref{sec_Gronwall_for_f}, we prove the Gronwall estimate for $f$. We conclude with the proof of Theorem~\ref{thm:excitation method} in Section~\ref{sec_proof_conclusion_excitation_method}.

{\bf Notation}. In the following estimates, we use the quantities $ C >0$, $ C(J,\mu,U) >0 $, and $ \tilde{C}(t)>0$ with the following definitions:
\begin{itemize}
\item  $C$ is a positive constant that is independent of all parameters of the model.
\item $ C(J,\mu,U)$ is a positive constant that depends on the parameters $J,\mu,U$ only polynomially, and is independent of the initial conditions and time $t$.
\item $ \tilde{C}(t)$ is a positive quantity that may depend on $ C(J,\mu,U)$, the initial data $\langle \varphi(0), \mathcal{N}^j \varphi(0) \rangle$ for $j\leq 4$, and polynomially on time $t$.
\end{itemize}
For convenience, these quantities may change from one line to the next in the subsequent estimates.

\subsection{Equivalence of \texorpdfstring{$f$}{} and \texorpdfstring{$g$}{} }\label{sec_f_g_equivalence}

We start by presenting an estimate for a slightly modified $f$.
 
\begin{proposition}\label{epsestimate}
There exist $C>0$ such that for all $\epsilon>0$ we have 
\begin{equation}\label{esttildeH-N2} 
\begin{aligned} 
& \frac{1}{\vert \Lambda \vert} \left \vert  \left \langle \Psi_d, \left( \tilde H+\sum_{x\in \Lambda}  q_x \left( h^{\alpha_\varphi}_x-\frac{U}{2} \mathcal{N}_x^2   \right) q_x \right) \Psi_d \right  \rangle  \right \vert 
 \\ &\quad \leq C\left(  1+J^2 + \left(J-\mu-\frac{U}{2}\right)^2\right) \left( 1+\frac{1}{\epsilon}+\langle \varphi(0), \mathcal{N} \varphi(0)\rangle^2 \right) \frac{1}{\vert \Lambda \vert}  \sum_{x\in \Lambda}\langle \Psi_d,  q_x \Psi_d\rangle
 \\ &\qquad +\epsilon \frac{1}{\vert \Lambda \vert}  \sum_{x\in \Lambda}\langle \Psi_d,  q_x\mathcal{N}_x^2 q_x \Psi_d\rangle   +\frac{1}{d}, 
\end{aligned}   
\end{equation}
\end{proposition}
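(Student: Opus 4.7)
The plan is to expand the operator inside the expectation into its explicit constituents from \eqref{expE} and from the definition of $h^{\alpha_\varphi}$, and to bound each piece via Cauchy--Schwarz. The bounds must land in one of three places: the coefficient of $\frac{1}{|\Lambda|}\sum_x\langle q_x\rangle$ (allowed to depend on $\langle\varphi(0),\mathcal{N}\varphi(0)\rangle^2$, on $J$, $\mu$, $U$, and on $\epsilon^{-1}$), the $\epsilon \frac{1}{|\Lambda|}\sum_x\langle q_x\mathcal{N}_x^2 q_x\rangle$ term, or the $1/d$ residual. The workhorse is the weighted Young inequality $\langle\Psi_d, q\mathcal{N}q\Psi_d\rangle \leq (2\epsilon)^{-1}\langle q\rangle + (\epsilon/2)\langle q\mathcal{N}^2 q\rangle$, applied whenever a $\langle q\mathcal{N}q\rangle$ appears. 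Mean-field particle number conservation from Section~\ref{sec:conversation laws}, together with the Cauchy--Schwarz bound $|\alpha_\varphi(t)|^2 \leq \langle\varphi(t),\mathcal{N}\varphi(t)\rangle = \langle\varphi(0),\mathcal{N}\varphi(0)\rangle$, allows every prefactor to be expressed in terms of the initial moment.

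For the on-site term, substitute $h^{\alpha_\varphi}_x - \frac{U}{2}\mathcal{N}_x^2 = -J(\alpha_\varphi a_x^* + \overline{\alpha_\varphi} a_x - |\alpha_\varphi|^2) + (J-\mu-\tfrac{U}{2})\mathcal{N}_x$. The $|\alpha_\varphi|^2 q_x$ piece feeds directly into the $\langle q\rangle$ coefficient, the $(J-\mu-\tfrac{U}{2}) q_x\mathcal{N}_x q_x$ piece is split by Young, and the linear $q_x a_x^\sharp q_x$ pieces are bounded via $|\langle\Psi_d, q a q \Psi_d\rangle| \leq \langle q\rangle^{1/2}\langle q\mathcal{N} q\rangle^{1/2}$ followed by Young again. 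For the hopping part $\tilde H$, the main algebraic identity is $pa^\sharp q = p(a^\sharp - \alpha^\sharp) q$ (valid because $pq = 0$) combined with $p X p = \langle\varphi, X\varphi\rangle p$; this yields the one-site operator bounds $(pa^*q)(qap) \leq \langle\varphi,\mathcal{N}\varphi\rangle\, p$ and $(qa^*p)(paq) \leq \langle\varphi,(\mathcal{N}+1)\varphi\rangle\, q$. For the four-$q$ term built on $K^{(4)}$, introducing $Q_x := q_x(a_x - \alpha_\varphi)q_x$ and using the factorization $K^{(4)}_{x,y} = (a_x^* - \overline{\alpha_\varphi})(a_y - \alpha_\varphi) + (a_y^*-\overline{\alpha_\varphi})(a_x - \alpha_\varphi)$ reduces the contribution to $\sum_{<x,y>}\langle Q_x\Psi_d, Q_y\Psi_d\rangle$. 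Applying $|\langle u,v\rangle|\leq (\|u\|^2+\|v\|^2)/2$, the operator bound $\|Q_x\Psi_d\|^2 \leq 2\langle q_x\mathcal{N}_x q_x\rangle + 2|\alpha_\varphi|^2\langle q_x\rangle$, and the neighbour-counting identity $\sum_{<x,y>}f(x) = d\sum_x f(x)$, delivers the target form without any $1/d$ loss. The three-$q$ $K^{(3)}$ contribution decomposes analogously as $(pa^*q)_x Q_y + (paq)_x Q_y^*$, handled by Cauchy--Schwarz with the $p$-site bounded via the one-site identity above and the $q$-side absorbed into $\epsilon \langle q\mathcal{N}^2 q\rangle$ and $\langle q\rangle$.

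The main obstacle is the two-$q$ part of $\tilde H$, where $p_x p_y K^{(2)}_{x,y} q_x q_y$ and $p_x q_y K^{(2)}_{x,y} q_x p_y$ decompose into products of the form $(pa^*q)_x (paq)_y$ and $(pa^*q)_x(qap)_y$. A naive Cauchy--Schwarz separating the two sites yields a $p$-weighted factor $\sum_x \langle p_x\rangle \leq |\Lambda|$ that does not vanish even for Gutzwiller product states, so some additional structure must be used to produce the $\frac{1}{d}$ residual. The plan is to apply Young's inequality with an auxiliary parameter $\eta$ so that the $q$-side is controlled by $\eta \cdot \frac{1}{|\Lambda|}\sum_x \langle q_x\rangle$ (absorbable into the stated coefficient) while the $p$-side becomes $\eta^{-1}\cdot C \langle\varphi(0),\mathcal{N}\varphi(0)\rangle^2$; tuning $\eta$ with the $\frac{J}{2d}$ hopping prefactor and the adjacency factor from $\sum_{<x,y>} f(x) = d\sum_x f(x)$ produces the $O(1/d)$ standalone contribution. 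Careful bookkeeping is required so that all $d$-dependencies assemble correctly into an $O(1)$ coefficient on $\frac{1}{|\Lambda|}\sum_x \langle q_x\rangle$ and an $O(1/d)$ additive residual. Combining every estimate with the mean-field moment bounds of Proposition~\ref{prop:moments bound} then yields the proposition with polynomial dependence on $J$, $(J-\mu-\tfrac{U}{2})$, and $\langle\varphi(0),\mathcal{N}\varphi(0)\rangle^2$.
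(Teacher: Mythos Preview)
Your treatment of the on-site term, the four-$q$ term via the factorisation of $K^{(4)}$, the three-$q$ term, and the $pq$--$qp$ two-$q$ term is essentially the paper's own approach and is fine. The gap is in the $pp$--$qq$ two-$q$ term, and it is a genuine obstruction, not just bookkeeping.

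Your plan there is to apply Young's inequality with a parameter $\eta$ to $\langle\Psi_d,(p_x a_x^* q_x)(p_y a_y q_y)\Psi_d\rangle$, putting the $p$-side into the $\eta^{-1}$ slot and the $q$-side into the $\eta$ slot. After multiplying by the prefactor $\frac{J}{2d|\Lambda|}$ and summing over $\langle x,y\rangle$, the neighbour-counting identity $\sum_{\langle x,y\rangle} 1 = d|\Lambda|$ cancels the $d^{-1}$ \emph{on both sides equally}. Concretely one obtains
\[
\frac{J}{2d|\Lambda|}\sum_{\langle x,y\rangle}\Big[\tfrac{1}{2\eta}\,C\,\langle\Psi_d,p_xp_y\Psi_d\rangle+\tfrac{\eta}{2}\,\langle\Psi_d,q_xq_y\Psi_d\rangle\Big]
\ \le\ \frac{J C}{4\eta}\ +\ \frac{J\eta}{4}\,\frac{1}{|\Lambda|}\sum_x\langle\Psi_d,q_x\Psi_d\rangle,
\]
so that an $O(1)$ coefficient in front of $\frac{1}{|\Lambda|}\sum_x\langle q_x\rangle$ forces $\eta=O(1)$, while an $O(1/d)$ residual forces $\eta^{-1}=O(1/d)$; these are incompatible. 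No termwise Cauchy--Schwarz, however the parameter is tuned, can produce both the $O(1)$ coefficient and the $O(1/d)$ residual required by \eqref{esttildeH-N2}.

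The missing idea, used in the paper (see the $p_1p_2Aq_1q_2$ estimate inside the proof of Proposition~\ref{prop:gronwall start}, which the present proof invokes), is to perform the sum over neighbours $y$ of $x$ \emph{before} applying Cauchy--Schwarz:
\[
\Big|\mathrm{Tr}\Big[q_x\gamma_d^{1/2}\cdot\gamma_d^{1/2}\!\!\sum_{y:\,y\sim x}p_xp_ya_x^*a_yq_y\Big]\Big|
\ \le\ \tfrac{1}{2\eta}\mathrm{Tr}[\gamma_d q_x]\ +\ \tfrac{\eta}{2}\,\mathrm{Tr}\Big[\gamma_d\!\!\sum_{y,z:\,y,z\sim x}(\cdots)_{y,z}\Big].
\]
Squaring the block of neighbours produces a double sum over $y,z\sim x$. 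The $2d$ diagonal terms $y=z$ have no extra $q$ and, after the global prefactor, give the $O(1/d)$ residual; the $O(d^2)$ off-diagonal terms each carry an additional $q_yq_z$, which can be bounded by $\langle q_y\rangle^{1/2}\langle q_z\rangle^{1/2}$ and hence collapse, after Cauchy--Schwarz on the neighbour sums, to $d^2\,\frac{1}{|\Lambda|}\sum_x\langle q_x\rangle$. Choosing $\eta\sim d^{-1}$ then balances all three contributions correctly. Without this ``sum-then-square'' step the argument cannot close.
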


\begin{proof}
Recalling the definition of $ \tilde H$ in \eqref{expE}, we find  
\begin{align}
 \frac{1}{\vert \Lambda \vert} \langle \Psi_d, \tilde H\Psi_d\rangle 
  & =  -\frac{J}{2d} \frac{1}{\vert \Lambda \vert}  \sum_{<x,y>} \langle  \Psi_d , p_xp_y K^{(2)}_{x,y} q_xq_y \Psi_d \rangle +h.c.\label{2qsymest} 
\\ &\quad  -\frac{J}{2d} \frac{1}{\vert \Lambda \vert} \sum_{<x,y>} \langle  \Psi_d ,p_xq_y  K^{(2)}_{x,y} q_xp_y \Psi_d \rangle +h.c.\label{2qest}  
\\ &\quad  -\frac{J}{d}  \frac{1}{\vert \Lambda \vert}   \sum_{<x,y>}  \langle  \Psi_d, p_xq_y  K^{(3)}_{x,y} q_xq_y \Psi_d \rangle   +h.c.\label{3qest}
\\ &\quad   -\frac{J}{2d} \frac{1}{\vert \Lambda \vert} \sum_{<x,y>}  \langle  \Psi_d, q_xq_y  K^{(4)}_{x,y} q_xq_y\Psi_d \rangle , \label{4qest}
\end{align}
where the terms $K^{(2)}_{x,y}$, $ K^{(3)}_{x,y}$ and $ K^{(4)}_{x,y}$ are defined in \eqref{kxy}, \eqref{overlinekxy} and \eqref{tildekxy}.
Let us estimate the above equation term by term. The $pp$-$qq$ term of \eqref{2qsymest} has already been estimated in the proof of Proposition~\ref{prop:gronwall start}. Here, we find it slightly more convenient to choose $\epsilon^{-1} \coloneq 2d\Tr{p \mathcal{N}}\prth{\Tr{p \mathcal{N}} + 1}$, so instead of \eqref{eq:ppqq term} we arrive at

\begin{equation}
\begin{aligned}
 \vert \eqref{2qsymest}\vert \leq \left( 1+ J^2\langle \varphi(0), (\mathcal{N}+1) \varphi(0)\rangle ^{2} \right)  \frac{1}{\vert \Lambda \vert}  \sum_{x\in \Lambda}\langle \Psi_d,  q_x \Psi_d\rangle+\frac{1}{d}.
\end{aligned}
\end{equation}
For the $pq$-$qp$ term of \eqref{2qest} we find, using Cauchy--Schwarz,
\[\begin{aligned}
| \eqref{2qest} | 
 &\leq 2 \left \vert  -\frac{J}{2d} \frac{1}{\vert \Lambda \vert} \sum_{x\in \Lambda} \sum_{\underset{x\sim y}{y\in \Lambda}}   \langle  a_x p_x q_y\Psi_d ,a_y p_y q_x\Psi_d \rangle \right \vert \\ 
 &  \leq \frac{J}{d} \frac{1}{\vert \Lambda \vert} \sum_{x\in \Lambda} \sum_{\underset{x\sim y}{y\in \Lambda}} \langle \varphi(0), \mathcal{N} \varphi(0)\rangle \Vert  q_x\Psi_d \Vert \Vert q_y \Psi_d\Vert \\
& \leq 2{J} \langle \varphi(0), \mathcal{N} \varphi(0)\rangle \frac{1}{\vert \Lambda \vert} \sum_{x\in \Lambda}   \langle \Psi_d,  q_x \Psi_d\rangle.
\end{aligned}
\]

The $pq$-$qq$ terms of \eqref{3qest} have already been estimated in the proof of Proposition~\ref{prop:gronwall start}. We find it convenient to introduce $\epsilon >0$, so continuing from \eqref{eq:pqqq term} and using Cauchy--Schwarz, we find

\begin{equation}
\begin{aligned}
\vert \eqref{3qest} \vert  \leq 2 \left( 3J+\frac{J^2}{\epsilon} +4 J\langle \varphi(0), \mathcal{N} \varphi(0)\rangle  \right) \frac{1}{\vert \Lambda \vert} \sum_{x\in \Lambda}  \langle \Psi_d,  q_x \Psi_d\rangle +2 \epsilon \frac{1}{\vert \Lambda \vert} \sum_{x\in \Lambda}  \langle \Psi_d,  q_x \mathcal{N}^2_x q_x \Psi_d\rangle    .
\end{aligned}
\end{equation}

Finally, the terms involving $a^*a$ in \eqref{4qest} can be estimated directly with Cauchy--Schwarz. We find
 \[\begin{aligned}
&\left \vert  -\frac{J}{2d} \frac{1}{\vert \Lambda \vert} \sum_{x\in \Lambda} \sum_{\underset{x\sim y}{y\in \Lambda}}  \langle  \Psi_d, q_xq_y a^*_x a_y q_xq_y\Psi_d \rangle \right \vert
\\ &\quad \leq  \frac{J}{2d} \frac{1}{\vert \Lambda \vert} \sum_{x\in \Lambda} \sum_{\underset{x\sim y}{y\in \Lambda}}  \sqrt{ \langle \Psi_d, q_y q_x \mathcal{N}_x q_x \Psi_d \rangle} \sqrt{\langle \Psi_d, q_x q_y \mathcal{N}_y q_y \Psi_d \rangle} 
\\ &\quad \leq  \frac{1}{2d} \frac{1}{\vert \Lambda \vert} \sum_{x\in \Lambda} \sum_{\underset{x\sim y}{y\in \Lambda}}  \sqrt{J \Vert q_x \Psi_d\Vert \Vert   \mathcal{N}_x q_x \Psi_d \Vert } \sqrt{J\Vert q_y  \Psi_d \Vert \Vert  \mathcal{N}_y q_y \Psi_d \Vert }
\\ &\quad \leq  \frac{1}{2d} \frac{1}{\vert \Lambda \vert} \sum_{x\in \Lambda} \sum_{\underset{x\sim y}{y\in \Lambda}}  \left( \frac{J^2}{2 \epsilon }\Vert q_x \Psi_d\Vert^2 + \frac{ \epsilon }{2} \Vert   \mathcal{N}_x q_x \Psi_d \Vert^2 \right)^{1/2} \left( \frac{J^2}{2 \epsilon }\Vert q_y  \Psi_d\Vert^2+ \frac{ \epsilon}{2} \Vert  \mathcal{N}_y q_y \Psi_d \Vert^2 \right)^{1/2}
\\ &\quad \leq  \left(  \frac{J^2}{2 \epsilon } \frac{1}{\vert \Lambda \vert} \sum_{x\in \Lambda} \Vert q_x \Psi_d\Vert^2 + \frac{ \epsilon }{2}\frac{1}{\vert \Lambda \vert} \sum_{x\in \Lambda}  \Vert   \mathcal{N}_x q_x \Psi_d \Vert^2 \right)^{1/2} \left(  \frac{J^2}{2 \epsilon } \frac{1}{\vert \Lambda \vert} \sum_{x\in \Lambda} \Vert q_x \Psi_d\Vert^2 + \frac{ \epsilon }{2}\frac{1}{\vert \Lambda \vert} \sum_{x\in \Lambda}  \Vert   \mathcal{N}_x q_x \Psi_d \Vert^2 \right)^{1/2}
\\ &\quad \leq   \frac{J^2}{2 \epsilon } \frac{1}{\vert \Lambda \vert} \sum_{x\in \Lambda} \langle \Psi_d,  q_x \Psi_d\rangle +  \frac{ \epsilon }{2}\frac{1}{\vert \Lambda \vert} \sum_{x\in \Lambda} \langle \Psi_d,q_x\mathcal{N}_x^2 q_x \Psi_d \rangle.
\end{aligned}\]
The terms involving $\alpha_\varphi$ can be estimated in the same way, using additionally that
\begin{equation}\label{alphabound}
\vert  {\alpha_\varphi}(t)\vert =\vert \langle \varphi(t), a \varphi(t) \rangle \vert \leq \Vert a \varphi(t)\Vert=\sqrt{\langle  \varphi(0), \mathcal{ N} \varphi(0)\rangle } .
\end{equation}
Combining these bounds yields
\begin{equation}
\begin{aligned}
\vert \eqref{4qest} \vert  \leq C \left( \frac{1}{\epsilon}+\frac{J^2}{\epsilon}+ J^2 \langle \varphi(0), \mathcal{N} \varphi(0)\rangle \right) \frac{1}{\vert \Lambda \vert} \sum_{x\in \Lambda}   \langle \Psi_d,  q_x \Psi_d\rangle  +C  \epsilon    \frac{1}{\vert \Lambda \vert} \sum_{x\in \Lambda}  \langle \Psi_d,q_x\mathcal{N}_x^2 q_x \Psi_d \rangle  .
\end{aligned}
\end{equation}
Thus, altogether, we get for some $C>0$ that
\begin{equation}\label{estE} 
\begin{aligned}\left \vert \frac{1}{\vert \Lambda \vert} \langle \Psi_d, \tilde H\Psi_d \rangle  \right \vert & \leq C \epsilon \frac{1}{\vert \Lambda \vert}  \sum_{x\in \Lambda}\langle \Psi_d,  q_x\mathcal{N}_x^2 q_x \Psi_d\rangle +\frac{1}{d} 
\\ &\quad +C\left( 1+J^2\right)\left( 1+\frac{1}{\epsilon}+\langle \varphi(0), \mathcal{N} \varphi(0)\rangle+\langle \varphi(0), \mathcal{N} \varphi(0)\rangle^2 \right) \frac{1}{\vert \Lambda \vert}  \sum_{x\in \Lambda}\langle \Psi_d,  q_x \Psi_d\rangle.
\end{aligned}   
\end{equation} 

Similarly, we can use Cauchy--Schwarz and again \eqref{alphabound} to show that for some $C>0$,
\begin{equation}
\begin{aligned}
&\left  \vert \frac{1}{\vert \Lambda \vert}  \sum_{x\in \Lambda}\langle \Psi_d,  q_x \left( h^{\alpha_\varphi}_x-\frac{U}{2} \mathcal{N}_x^2   \right) q_x \Psi_d\rangle  \right \vert
\\&\quad=
\left \vert \frac{1}{\vert \Lambda \vert}  \sum_{x\in \Lambda}\left \langle \Psi_d,  q_x \left( -J {\alpha_\varphi} a^*_x -J\overline  {\alpha_\varphi} a_x +J\vert  {\alpha_\varphi}\vert^2+ \left(J- \mu  -\frac{U}{2}\right)\mathcal{N}_x    \right) q_x \Psi_d \right \rangle \right \vert 
\\ &\quad \leq C\left(   1+J^2+\left(J-\mu-\frac{U}{2}\right)^2\right) \left( \frac{1}{\epsilon}+\langle \varphi(0), \mathcal{N} \varphi(0)\rangle \right) \frac{1}{\vert \Lambda \vert}  \sum_{x\in \Lambda}\langle \Psi_d,  q_x \Psi_d\rangle 
  \\ &\qquad +\epsilon  \frac{1}{\vert \Lambda \vert}  \sum_{x\in \Lambda}\langle \Psi_d,  q_x\mathcal{N}_x^2 q_x \Psi_d\rangle .
\end{aligned}
\end{equation}
Combining both bounds yields \eqref{esttildeH-N2}.
\end{proof}

This proposition allows us to prove the equivalence of $f$ and $g$ up to an error $d^{-1}$.

\begin{proposition}\label{equiv}
Let $f$ and $g$ be defined as in \eqref{defoff} and \eqref{defg}.
Then, for $U>0$ and for some $C>0$, we have the equivalence
\begin{equation}\label{equivfandgammabeta}
\begin{aligned}
\frac{U}{4}  g -\frac{1}{d}\leq  f
 &\leq C\left( 1+J^2+U+\left(J-\mu-\frac{U}{2}\right)^2\right)\left( 1+\frac{1}{U}+\langle \varphi(0), \mathcal{N} \varphi(0)\rangle^2 \right)
g+\frac{1}{d}.
\end{aligned}
\end{equation}  
\end{proposition}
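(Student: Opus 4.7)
The plan is to rewrite $f$ so that it directly exposes the $q_x \mathcal{N}_x^2 q_x$ and $q_x$ terms that define $g$, and then to control the remaining piece using Proposition~\ref{epsestimate}. Using the decomposition $H_d = \sum_{x\in\Lambda} h^{\alpha_\varphi}_x + \tilde H$ from \eqref{expE} directly inside the definition \eqref{defoff}, the linear terms $\sum_x h^{\alpha_\varphi}_x$ cancel, yielding
\begin{equation*}
f(t) = \frac{1}{|\Lambda|}\left\langle \Psi_d,\left(\tilde H + \sum_{x\in\Lambda}q_x h^{\alpha_\varphi}_x q_x + c\sum_{x\in\Lambda} q_x\right)\Psi_d\right\rangle.
\end{equation*}
Since the interaction is the only part of $h^{\alpha_\varphi}_x$ that contains $\mathcal{N}_x^2$, I would split $q_x h^{\alpha_\varphi}_x q_x = \tfrac{U}{2} q_x \mathcal{N}_x^2 q_x + q_x\bigl(h^{\alpha_\varphi}_x - \tfrac{U}{2}\mathcal{N}_x^2\bigr)q_x$, so that
\begin{equation*}
f(t) = R(t) + \frac{U}{2}\frac{1}{|\Lambda|}\sum_{x\in\Lambda}\langle \Psi_d, q_x \mathcal{N}_x^2 q_x \Psi_d\rangle + \frac{c}{|\Lambda|}\sum_{x\in\Lambda}\langle \Psi_d, q_x\Psi_d\rangle,
\end{equation*}
where $R(t)$ denotes the quantity controlled by the left-hand side of \eqref{esttildeH-N2}.

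For the lower bound, I would apply Proposition~\ref{epsestimate} with the choice $\epsilon = U/4$. This bounds $|R(t)|$ by $\tfrac{U}{4}\cdot\tfrac{1}{|\Lambda|}\sum_x\langle\Psi_d, q_x\mathcal{N}_x^2 q_x\Psi_d\rangle$ plus a constant $K(J,\mu,U)(1+\langle\varphi(0),\mathcal{N}\varphi(0)\rangle^2)$ times $\tfrac{1}{|\Lambda|}\sum_x\langle\Psi_d, q_x\Psi_d\rangle$, plus $1/d$. Hence, after subtracting, one is left with at least $\tfrac{U}{4}\cdot\tfrac{1}{|\Lambda|}\sum_x\langle q_x \mathcal{N}_x^2 q_x\rangle + (c-K(J,\mu,U)(\cdots))\cdot\tfrac{1}{|\Lambda|}\sum_x\langle q_x\rangle - 1/d$. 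By fixing the free parameter $c$ large enough, more precisely $c \geq U/4 + K(J,\mu,U)(1+\langle\varphi(0),\mathcal{N}\varphi(0)\rangle^2)$, both coefficients become at least $U/4$, giving the lower bound $f \geq \tfrac{U}{4}\,g - 1/d$. Here $U>0$ is essential to make $\tfrac{U}{4}\,g$ a genuine lower bound on both pieces of $g$.

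For the upper bound, the same splitting is used, but now one applies Proposition~\ref{epsestimate} with any fixed $\epsilon$ of order $1$ (say $\epsilon = 1$), so that $|R(t)|$ contributes only an $O(1)\,g + 1/d$ term. The two remaining positive contributions are $\tfrac{U}{2}\cdot\tfrac{1}{|\Lambda|}\sum_x \langle q_x\mathcal{N}_x^2 q_x\rangle \leq \tfrac{U}{2}\,g$ and $c\cdot\tfrac{1}{|\Lambda|}\sum_x\langle q_x\rangle \leq c\,g$, which combine with the bound on $|R(t)|$ to produce a prefactor of the form $C(1+J^2+U+(J-\mu-U/2)^2)(1+1/U+\langle\varphi(0),\mathcal{N}\varphi(0)\rangle^2)$.

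The only real obstacle is bookkeeping: one has to choose $c$ and $\epsilon$ consistently so that the lower bound has exactly coefficient $U/4$, and then verify that the resulting constant absorbs into the form stated in \eqref{equivfandgammabeta}. The $1/\epsilon = 4/U$ contribution from Proposition~\ref{epsestimate} is precisely the source of the $1/U$ factor in the upper bound, which is what will eventually force the restriction to $U$ bounded away from zero in Theorem~\ref{thm:excitation method}.
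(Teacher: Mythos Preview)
Your proof is correct and follows essentially the same approach as the paper: rewrite $f$ via $H_d=\sum_x h^{\alpha_\varphi}_x+\tilde H$, isolate the $\tfrac{U}{2}q_x\mathcal{N}_x^2q_x$ and $cq_x$ pieces, and control the remainder with Proposition~\ref{epsestimate} using $\epsilon=U/4$ for the lower bound. The only cosmetic difference is in the upper bound: the paper bounds $\langle\tilde H\rangle$ and $\langle q_x h^{\alpha_\varphi}_x q_x\rangle$ separately (via the intermediate estimate \eqref{estE} and the direct bound \eqref{estqhq}), whereas you reuse Proposition~\ref{epsestimate} with $\epsilon=1$ on the same combination $R(t)$ and handle the leftover $\tfrac{U}{2}q_x\mathcal{N}_x^2q_x$ and $cq_x$ by absorbing them into $g$; both routes yield the same constants.
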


\begin{proof}
We start by proving the lower bound on $f$ from \eqref{equivfandgammabeta}. 
From Proposition \ref{epsestimate} we know that
\begin{equation}\label{geqestE}
\begin{aligned}
&  \frac{1}{\vert \Lambda \vert} \left \langle \Psi_d,  \left( \tilde H+ \sum_{x\in \Lambda} q_x \left(  h^{\alpha_\varphi}_x-\frac{U}{2} \mathcal{N}_x^2   \right)  q_x        \right)  \Psi_d \right \rangle  \\
&\quad \geq -C\left(  1+J^2 + \left(J-\mu-\frac{U}{2}\right)^2\right) \left( 1+\frac{1}{\epsilon}+\langle \varphi(0), \mathcal{N} \varphi(0)\rangle^2 \right)  \frac{1}{\vert \Lambda \vert}  \sum_{x\in \Lambda}\langle \Psi_d,  q_x \Psi_d\rangle -\frac{1}{d} \\
&\qquad  -\epsilon \frac{1}{\vert \Lambda \vert}  \sum_{x\in \Lambda}\langle \Psi_d,  q_x\mathcal{N}_x^2 q_x \Psi_d\rangle.
\end{aligned}
\end{equation}
Hence, 
\[\begin{aligned}
f&= \frac{1}{\vert \Lambda \vert} \left \langle \Psi_d,  \left( \tilde H+ \sum_{x\in \Lambda} q_x \left(  h^{\alpha_\varphi}_x-\frac{U}{2} \mathcal{N}_x^2   \right)  q_x        \right)  \Psi_d \right \rangle \\ 
&\quad +\frac{U}{2} \frac{1}{\vert \Lambda \vert}  \sum_{x\in \Lambda}\langle \Psi_d,  q_x \mathcal{N}_x^2    q_x \Psi_d\rangle 
+\frac{c}{\vert \Lambda \vert}  \sum_{x\in \Lambda}\langle \Psi_d,  q_x \Psi_d\rangle \\ 
& \geq \bigg(\frac{U}{2} -\epsilon \bigg) \frac{1}{\vert \Lambda \vert}  \sum_{x\in \Lambda}\langle \Psi_d,  q_x\mathcal{N}_x^2 q_x \Psi_d\rangle \\ 
&\quad  +\left(c-C\left( 1+J^2 + \left(J-\mu-\frac{U}{2}\right)^2\right)\left( 1+\frac{1}{\epsilon}+ \langle \varphi(0), \mathcal{N} \varphi(0)\rangle^2\right)\right) \frac{1}{\vert \Lambda \vert}  \sum_{x\in \Lambda}\langle \Psi_d,  q_x \Psi_d\rangle -\frac{1}{d}.
\end{aligned}\]
Then the lower bound on $f$ from \eqref{equivfandgammabeta} follows by choosing
\begin{equation}\label{cchoice}
 c=C\left(  1+J^2 + \left(J-\mu-\frac{U}{2}\right)^2\right)\left( 1+\frac{1}{\epsilon}+ \langle \varphi(0), \mathcal{N} \varphi(0)\rangle^2\right) +\frac{U}{4}, \qquad \epsilon=\frac{U}{4}.  
\end{equation}

For the upper bound on $f$ from \eqref{equivfandgammabeta}, note that
\begin{equation}\label{estqhq}
\begin{aligned}
& \left \vert \frac{1}{\vert \Lambda \vert}  \sum_{x\in \Lambda}\langle \Psi_d,  q_xh^{\alpha_\varphi}_x q_x \Psi_d\rangle \right \vert
\\ &\quad =  \left \vert \frac{1}{\vert \Lambda \vert}  \sum_{x\in \Lambda}\left \langle \Psi_d,  q_x \left( -J {\alpha_\varphi} a^*_x -J\overline  {\alpha_\varphi} a_x +J\vert  {\alpha_\varphi}\vert^2 +(J-\mu) \mathcal{N}_x +\frac{U}{2} \mathcal{N}_x(\mathcal{N}_x-1)   \right) q_x \Psi_d \right \rangle \right \vert
 \\ &\quad \leq C \left(1+\vert J\vert +U+\left \vert J-\mu-\frac{U}{2}\right \vert  \right)  \left(1+ \langle \varphi(0), \mathcal{N} \varphi(0)\rangle\right) \frac{1}{\vert \Lambda \vert}  \sum_{x\in \Lambda}\langle \Psi_d,  ( q_x\mathcal{N}_x^2 q_x+q_x) \Psi_d\rangle .
 \end{aligned}
\end{equation}
Using this and the bound \eqref{estE} from the proof of Proposition~\ref{epsestimate} for $\epsilon=1$, the choice \eqref{cchoice} for the constant $c$ yields

\[\begin{aligned}
\vert f \vert &=
\left \vert \frac{1}{\vert \Lambda \vert} \langle \Psi_d, \tilde H\Psi_d\rangle + \frac{1}{\vert \Lambda \vert}  \sum_{x\in \Lambda}\langle \Psi_d,  q_xh^{\alpha_\varphi}_x q_x \Psi_d\rangle + \frac{c}{\vert \Lambda \vert}  \sum_{x\in \Lambda}\langle \Psi_d,   q_x \Psi_d\rangle\right \vert  
\\ &\quad \leq C\left( 1+J^2+U+\left(J-\mu-\frac{U}{2}\right)^2\right) \left(1+\frac{1}{U}+ \langle \varphi(0), \mathcal{N} \varphi(0)\rangle^2\right) \frac{1}{\vert \Lambda \vert}  \sum_{x\in \Lambda}\langle \Psi_d,  ( q_x\mathcal{N}_x^2 q_x+q_x) \Psi_d\rangle
\\ &\qquad +\frac{1}{d}.
\end{aligned}\]
\end{proof}

\subsection{Proof of Gronwall Estimate for \texorpdfstring{$f$}{}}\label{sec_Gronwall_for_f}

In the computation of the time derivative of $f$ we need to control in particular $\dot {\tilde H}$, the time derivative of $\tilde{H}$. Its computation is straightforward but a bit lengthy. The key point is to write this time derivative in such a way that it contains the commutator $[\tilde H, q_xh^{\alpha_\varphi}_xq_x-h^{\alpha_\varphi}_x]$, which we will later use for cancelations.
\begin{proposition}\label{expdottildeH}
The expectation of  $\dot {\tilde H} $ can be written as 
\begin{equation}\label{derivativeE}
\begin{aligned}
\frac{1}{\vert \Lambda \vert}\langle \Psi_d , \dot {\tilde H}  \Psi_d \rangle&= -\frac{i}{\vert \Lambda \vert}  \sum_{x\in \Lambda} \langle \Psi_d , [\tilde H, q_xh^{\alpha_\varphi}_xq_x-h^{\alpha_\varphi}_x] \Psi_d \rangle+\mathcal{R}
\\&=\frac{J}{2d}\frac{i}{\vert \Lambda \vert}  \sum_{<x,y>} \langle \Psi_d , [\tilde H_{x,y}, q_xh^{\alpha_\varphi}_xq_x-h^{\alpha_\varphi}_x+q_yh^{\alpha_\varphi}_yq_y-h^{\alpha_\varphi}_y] \Psi_d \rangle+\mathcal{R},
\end{aligned}
\end{equation}
with $\tilde H_{x,y} $ refers to the terms in \eqref{expE} such that  $\tilde H =\sum_{<x,y>} \tilde H_{x,y}   $ and  where the rest term  $\mathcal{R}\equiv \mathcal{R}(t)$ is given by
\begin{align}
\mathcal{R}&:= -\frac{J}{d}\frac{i}{\vert \Lambda \vert}  \sum_{<x,y>} \left \langle \Psi_d , \Big( p_x h^{\alpha_\varphi}_x p_y  K^{(2)}_{x,y}q_x q_y +q_xh^{\alpha_\varphi}_xp_x q_y K^{(2)}_{x,y}p_x p_y\Big) \Psi_d \right \rangle +h.c. \label{sym} 
\\ &\quad   - \frac{J}{d}\frac{i}{\vert \Lambda \vert}  \sum_{<x,y>} \left \langle \Psi_d , q_y p_x\Big( h^{\alpha_\varphi}_x +h^{\alpha_\varphi}_y p_y\Big)  K^{(2)}_{x,y}q_x p_y  \Psi_d \right \rangle +h.c. \label{est2} 
 \\ 
&\quad 
 \begin{aligned}+ \frac{J}{d}\frac{i}{\vert \Lambda \vert}  \sum_{<x,y>} \Big \langle \Psi_d , q_y\Big(& (h^{\alpha_\varphi}_xp_x  -p_x h^{\alpha_\varphi}_x  - p_x h^{\alpha_\varphi}_y p_y ) K^{(3)}_{x,y}
 \\ & +p_x K^{(3)} _{x,y}(p_x h^{\alpha_\varphi}_x + p_y h^{\alpha_\varphi}_y) \Big)q_x q_y  \Psi_d \Big  \rangle +h.c.  \end{aligned}\label{est3}
 \\ 
&\quad 
 + \frac{J}{d}\frac{i}{\vert \Lambda \vert}  \sum_{<x,y>} \langle \Psi_d , q_x q_y  K^{(4)}_{x,y} p_x h^{\alpha_\varphi}_x q_x q_y \Psi_d \rangle +h.c.\label{est4}
 \\ 
 &\quad  - \frac{J}{2d}\frac{1}{\vert \Lambda \vert}  \sum_{<x,y>} \left \langle \Psi_d , \left( p_x q_y \dot{ K}^{(3)}_{x,y} q_x q_y+\frac{1}{2}q_x q_y\dot{ K}^{(4)}  _{x,y} q_x q_y \right) \Psi_d \right \rangle +h.c. . \label{est5}
\end{align}
\end{proposition}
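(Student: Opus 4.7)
The approach is a direct computation of $\partial_t \tilde{H}(t)$ via the product rule, organized to reveal the commutator structure asserted in \eqref{derivativeE}. Since $\varphi(t)$ satisfies the mean-field equation \eqref{eq:MFS}, the projections evolve according to
\begin{equation*}
i \partial_t p_x(t) = [h^{\alpha_\varphi(t)}_x, p_x(t)], \qquad i \partial_t q_x(t) = [h^{\alpha_\varphi(t)}_x, q_x(t)],
\end{equation*}
while $K^{(2)}_{x,y}$ is time-independent and $K^{(3)}_{x,y}(t)$, $K^{(4)}_{x,y}(t)$ depend on $t$ only through $\alpha_\varphi(t) = \langle \varphi(t), a \varphi(t)\rangle$.

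The plan is to differentiate \eqref{expE} term by term. For each of the four kinds of summands in $\tilde H$, the product rule produces four projection-derivative contributions (from $p_x, p_y, q_x, q_y$) plus, in the case of the $K^{(3)}$ and $K^{(4)}$ terms, one contribution in which the derivative falls on $K$. The latter $\dot K$ contributions collect directly into \eqref{est5}, up to Hermitian conjugation. The remaining task is to show that the projection-derivative contributions reorganize into the commutator $[\tilde H_{x,y}, q_x h^{\alpha_\varphi}_x q_x - h^{\alpha_\varphi}_x + q_y h^{\alpha_\varphi}_y q_y - h^{\alpha_\varphi}_y]$ plus the rest of $\mathcal{R}$.

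The key algebraic input is the identity
\begin{equation*}
q_x h^{\alpha_\varphi}_x q_x - h^{\alpha_\varphi}_x = -p_x h^{\alpha_\varphi}_x - h^{\alpha_\varphi}_x p_x + p_x h^{\alpha_\varphi}_x p_x,
\end{equation*}
combined with the fact that $h^{\alpha_\varphi}_x$ is a single-site operator, so it commutes with every factor at sites $z \neq x$. Consequently, $[\tilde H_{x,y}, h^{\alpha_\varphi}_x + h^{\alpha_\varphi}_y]$ can be expanded as a sum in which the single-site commutator $[\,\cdot\,, h^{\alpha_\varphi}_x]$ lands on each factor of $\tilde H_{x,y}$ at site $x$ one at a time. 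The contributions in which the commutator lands on a projection $p_x$ or $q_x$ match exactly the projection-derivative terms produced by the product rule, via $\frac{1}{i}[h^{\alpha_\varphi}_x, p_x] = \dot p_x$ and analogously for $q_x$. The contributions in which the commutator lands on $K^{(2)}_{x,y}$, $K^{(3)}_{x,y}$, or $K^{(4)}_{x,y}$ are then \emph{spurious}: they are not produced by the projection derivatives. Transferring them to the other side of the equation and substituting the explicit form of $h^{\alpha_\varphi}_x$ from \eqref{eq:h mf} gives precisely the four groups \eqref{sym}, \eqref{est2}, \eqref{est3}, \eqref{est4}, ordered by the number of accompanying $q$'s (i.e., by which of the four lines of \eqref{expE} the spurious term originates from), together with their Hermitian conjugates.

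The main obstacle is bookkeeping. There are on the order of twenty sub-terms to track, with careful attention required for: (i) the pairwise symmetrization $x \leftrightarrow y$ that is implicit in the $\sum_{\langle x,y\rangle}$ convention; (ii) the Hermitian conjugates, since some terms in \eqref{expE} are self-adjoint by construction while others come with an explicit ``$+h.c.$''; (iii) cancellations arising from $p_x q_x = 0$, which eliminate many of the a priori possible sub-terms; and (iv) ensuring that the spurious contributions are grouped exactly as in \eqref{sym}--\eqref{est4}, in particular that the extra $p_x h^{\alpha_\varphi}_y p_y$-type factors in \eqref{est3} indeed arise from the commutator landing on the $p_y$ or $q_y$ factors. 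Once this bookkeeping is done, taking the expectation value against $\Psi_d(t)$ yields \eqref{derivativeE}.
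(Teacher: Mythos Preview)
Your overall approach---differentiate $\tilde H$ via the product rule, substitute $i\dot p_x=[h^{\alpha_\varphi}_x,p_x]$, then match against the commutator and collect the remainder---is exactly what the paper does. The paper simply writes out the full product-rule expansion of $\dot{\tilde H}$ (about twenty terms plus the $\dot K$ pieces) and then \emph{defines} $\mathcal{R}$ as the difference between that expression and the commutator in \eqref{derivativeE}, leaving the explicit identification of \eqref{sym}--\eqref{est5} as bookkeeping.

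One point in your sketch is inaccurate, though. You describe \eqref{sym}--\eqref{est4} as the ``spurious'' contributions where the Leibniz expansion of $[\tilde H_{x,y},h^{\alpha_\varphi}_x+h^{\alpha_\varphi}_y]$ lands on $K^{(i)}_{x,y}$. But the commutator in \eqref{derivativeE} is with $q_xh^{\alpha_\varphi}_xq_x-h^{\alpha_\varphi}_x$, not with $-h^{\alpha_\varphi}_x$ alone; the $q_xh^{\alpha_\varphi}_xq_x$ piece (equivalently $-p_xh^{\alpha_\varphi}_x-h^{\alpha_\varphi}_xp_x+p_xh^{\alpha_\varphi}_xp_x$ from your identity) also contributes to the mismatch. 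Inspecting \eqref{sym}--\eqref{est4} one sees factors like $p_xh^{\alpha_\varphi}_x$, $h^{\alpha_\varphi}_xp_x$, $p_xh^{\alpha_\varphi}_yp_y$ multiplying an \emph{undisturbed} $K^{(i)}_{x,y}$---none of these terms contain a bare commutator $[h^{\alpha_\varphi}_x,K^{(i)}_{x,y}]$. So your stated mechanism for generating $\mathcal{R}$ does not actually produce the displayed terms; the remainder comes rather from the interplay between the product-rule derivatives and the full $q_xh^{\alpha_\varphi}_xq_x-h^{\alpha_\varphi}_x$ structure, after exploiting $p_xq_x=0$. The computation still closes once the bookkeeping is done correctly, but your heuristic for which terms survive needs to be revised.
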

\begin{proof}
We start by gathering some useful computations,
\begin{align}
& \dot{ \alpha}_\varphi = i \mu  \alpha_\varphi -i U \left \langle \varphi,\mathcal{N} a \varphi \right \rangle,\label{derivalpha}
\\ & \overline \alpha_\varphi \dot \alpha_\varphi  + \alpha_\varphi \dot{ \overline{\alpha_\varphi}}  = 2 U\Im   \left( \langle \varphi, \mathcal{N}a \varphi \rangle \overline {\alpha_\varphi}\right),
\\ & \dot h^{\alpha_\varphi}_x = -J  \dot \alpha_\varphi a^*_x  -J \dot{ \overline {\alpha_\varphi}} a_x + 2 JU\Im    \left( \langle \varphi,\mathcal{N}a \varphi \rangle \overline {\alpha_\varphi}\right),
\\ &  \dot{K}^{(2)}_{x,y} =0, \\
& \dot{ K}^{(3)}_{x,y}= -\dot  \alpha_\varphi a^*_x - \dot{ \overline {\alpha_\varphi}} a_x, \label{derivKoverlineK}
\\ &  \dot K^{(4)}_{x,y}=-\dot \alpha_\varphi (a^*_x+a^*_y) -\dot{ \overline {\alpha_\varphi}} (a_x+a_y) +4 U\Im    \left( \langle \varphi, \mathcal{N}a\varphi \rangle \overline  {\alpha_\varphi}\right) \label{derivtildeK}.
\end{align}
Starting from the definition \eqref{expE} of $\tilde H$, using these relations and $i\dot  p_x=[h^{\alpha_\varphi}_x,p_x] $,  $i\dot  q_x=[h^{\alpha_\varphi}_x,q_x] $, we arrive at

\[\begin{aligned} \dot{ \tilde H} &= \frac{iJ}{2d}  \sum_{<x,y>} 
\bigg( [h^{\alpha_\varphi}_x,p_x]p_y K^{(2)}_{x,y} q_xq_y + p_x [h^{\alpha_\varphi}_y,p_y] K^{(2)}_{x,y} q_xq_y + p_xp_y K^{(2)}_{x,y} [h^{\alpha_\varphi}_x,q_x] q_y + p_xp_y K^{(2)}_{x,y} q_x [h^{\alpha_\varphi}_y,q_y] 
\\ &\qquad  + [h^{\alpha_\varphi}_x,p_x]q_y  K^{(2)}_{x,y} q_xp_y +p_x[h^{\alpha_\varphi}_y,q_y]   K^{(2)}_{x,y} q_xp_y+p_xq_y  K^{(2)}_{x,y} [h^{\alpha_\varphi}_x,q_x]p_y+p_xq_y  K^{(2)}_{x,y} q_x [h^{\alpha_\varphi}_y,p_y]
\\ &\qquad  +[h^{\alpha_\varphi}_x,q_x]q_y  K^{(2)}_{x,y} p_xp_y+q_x[h^{\alpha_\varphi}_y,q_y]   K^{(2)}_{x,y} p_xp_y+q_xq_y  K^{(2)}_{x,y} [h^{\alpha_\varphi}_x,p_x]p_y+q_xq_y  K^{(2)}_{x,y} p_x [h^{\alpha_\varphi}_y,p_y]
\\ &\qquad + [h^{\alpha_\varphi}_x,q_x]p_y K^{(2)}_{x,y}   p_xq_y +q_x [h^{\alpha_\varphi}_y,p_y] K^{(2)}_{x,y}   p_xq_y+q_xp_y K^{(2)}_{x,y}   [h^{\alpha_\varphi}_x,p_x]q_y+q_xp_y K^{(2)}_{x,y}   p_x [h^{\alpha_\varphi}_y,q_y]  \bigg)
\\ & +\frac{iJ}{d}  \sum_{<x,y>} \bigg( [h^{\alpha_\varphi}_x,p_x]q_y K^{(3)}_{x,y} q_xq_y+p_x[h^{\alpha_\varphi}_y,q_y]  K^{(3)}_{x,y} q_xq_y+p_xq_y K^{(3)}_{x,y} [h^{\alpha_\varphi}_x,q_x]q_y+p_xq_y K^{(3)}_{x,y} q_x[h^{\alpha_\varphi}_y,q_y] 
\\ &\qquad  +[h^{\alpha_\varphi}_x,q_x]q_y K^{(3)}_{x,y}  p_xq_y+ q_x[h^{\alpha_\varphi}_y,q_y]  K^{(3)}_{x,y}  p_xq_y+ q_xq_y K^{(3)}_{x,y} [h^{\alpha_\varphi}_x,p_x]q_y+ q_xq_y K^{(3)}_{x,y}  p_x [h^{\alpha_\varphi}_y,q_y]  \bigg)   
\\ &  +\frac{iJ}{2d} \sum_{<x,y>}\bigg([h^{\alpha_\varphi}_x,q_x]q_y K^{(4)}_{x,y} q_xq_y+ q_x[h^{\alpha_\varphi}_y,q_y]  K^{(4)}_{x,y} q_xq_y+ q_xq_y K^{(4)}_{x,y} [h^{\alpha_\varphi}_x,q_x]q_y+ q_xq_y K^{(4)}_{x,y} q_x [h^{\alpha_\varphi}_y,q_y] \bigg)
\\ & + \frac{J}{d}  \sum_{<x,y>} \bigg[  p_xq_y \big(\dot  \alpha_\varphi a^*_x -\dot{  \overline{ {\alpha_\varphi}}} a_x \big) q_xq_y+  q_xq_y \big(\dot  \alpha_\varphi a^*_x -\dot { \overline {\alpha_\varphi} }a_x \big)  p_xq_y\bigg] 
\\ & +\frac{J}{2d} \sum_{<x,y>} q_xq_y \bigg( \dot  \alpha_\varphi (a^*_x+a^*_y) +\dot{  \overline {\alpha_\varphi}} (a_x+a_y) -4 U\Im    ( \langle \varphi, \mathcal N a\varphi \rangle \overline  {\alpha_\varphi})\bigg)  q_xq_y.
\end{aligned}\] 
To obtain \eqref{derivativeE},  we isolate the first part on the right-hand side of \eqref{derivativeE} and define the rest as the remainder term $\mathcal{R}$.
\end{proof}

Next, we estimate the rest term in Proposition~\ref{expdottildeH}.

\begin{proposition}\label{estrest}
The rest term $\mathcal{R}(t)$ in Proposition \ref{expdottildeH}  satisfies the bound
\[\vert\mathcal{R}(t) \vert \leq \tilde C(t) \left( \frac{1}{\vert \Lambda \vert} \sum_{x\in \Lambda} \left \langle \Psi_d(t) , (q_x(t) \mathcal{N}_x q_x(t)+q_x(t) ) \Psi_d(t) \right  \rangle+\frac{1}{d}\right) , \]
where 
\begin{equation}\label{C(t)}
\begin{aligned}
 \tilde C(t)& = C(J,\mu,U)\left(  1+\langle \varphi(0), \mathcal{N} \varphi(0)\rangle ^2 \right) \\  
 &\quad \Big( 1  +
\sum_{j=0}^{6}  \left(  8J {\langle \varphi(0), \mathcal{N} \varphi(0)\rangle }^{1/2}  \right)^j \left \langle \varphi(0),(\mathcal{N}+j)^{4-\frac{j}{2} }\varphi(0) \right \rangle \frac{t^j}{j!}  \Big),
\end{aligned}
\end{equation}
with  $C(J,\mu,U) > 0$ depending polynomially on the model parameters $J$, $\mu$ and  $U$.
\end{proposition}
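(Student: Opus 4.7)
\textbf{Proof sketch for Proposition~\ref{estrest}.} The plan is to estimate the five groups of terms \eqref{sym}--\eqref{est5} one by one, each time using Cauchy--Schwarz to split the operator into a part carrying a $q_x$ (which will produce the $\langle \Psi_d, (q_x\mathcal{N}_x q_x + q_x)\Psi_d\rangle$ on the right-hand side) and a part containing the projector $p_x = \ket{\varphi(t)}\bra{\varphi(t)}$ together with factors of $h^{\alpha_\varphi}$, $K^{(i)}$, and $\dot K^{(i)}$. Because each such factor, once reduced to an expectation in the one-site state $\varphi(t)$, only involves moments of $\mathcal{N}$ of order at most $4$ (the interaction $\mathcal{N}(\mathcal{N}-1)$ in $h^{\alpha_\varphi}$ contributes $\mathcal{N}^2$, and $\|h^{\alpha_\varphi}\varphi\|$ therefore needs $\langle\varphi,\mathcal{N}^4\varphi\rangle$), Proposition~\ref{prop:moments bound} applied with $k=4$ via bound \eqref{eq:mf moment bound 2} yields precisely the polynomial-in-$t$ factor appearing in $\tilde C(t)$, while the dependence on the model parameters $J,\mu,U$ is polynomial and absorbed in $C(J,\mu,U)$. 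The factor $\langle\varphi(0),\mathcal{N}\varphi(0)\rangle$ appearing squared comes from bounding $|\alpha_\varphi|^2$ and $|\dot\alpha_\varphi|$ via \eqref{derivalpha} and $|\alpha_\varphi|\le \langle\varphi(0),\mathcal{N}\varphi(0)\rangle^{1/2}$ from the conservation of $\langle\varphi,\mathcal{N}\varphi\rangle$.

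Concretely, for \eqref{sym} I would mimic the estimate of the $p_1 p_2 A q_1 q_2$ term in the proof of Proposition~\ref{prop:gronwall start}: after inserting a splitting $\gamma_d = \gamma_d^{1/2}\cdot\gamma_d^{1/2}$ between $p_x p_y h^{\alpha_\varphi}_x K^{(2)}_{x,y}$ and $q_x q_y$, Cauchy--Schwarz and the fact that $\sum_{y\sim x}$ has only $2d$ neighbors produces a $\frac{1}{d}$ times $\|h^{\alpha_\varphi}\varphi\|^2\langle\varphi,(\mathcal{N}+1)^2\varphi\rangle$ plus the $\langle \Psi_d, q_x\Psi_d\rangle$ contribution, exactly as in the derivation of \eqref{eq:ppqq term}. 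The terms \eqref{est2}, \eqref{est3} are of the ``$pq$-$qq$'' or ``$pq$-$qp$'' form: here one does not gain the $1/d$, but the extra $q$ on one of the two sites gives directly a contribution proportional to $\langle \Psi_d,(q_x\mathcal{N}_x q_x+q_x)\Psi_d\rangle$ (possibly via another Young-type inequality to absorb the $\mathcal{N}^2_x$ against the $p_x h^{\alpha_\varphi}_x p_x$ factor which is bounded). For \eqref{est4}, the symmetry argument used in Proposition~\ref{prop:gronwall start} shows that many of the $q_xq_y\ldots q_xq_y$ pieces are self-adjoint and only the mixed terms survive; again Cauchy--Schwarz between the factor containing $h^{\alpha_\varphi}_x p_x$ and the factor containing $q_x q_y$ produces the desired bound.

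For \eqref{est5}, I use the formulas \eqref{derivKoverlineK}--\eqref{derivtildeK} for $\dot K^{(3)}$ and $\dot K^{(4)}$. By \eqref{derivalpha}, $|\dot\alpha_\varphi|\le |\mu||\alpha_\varphi|+U|\langle\varphi,\mathcal{N}a\varphi\rangle|\le C(J,\mu,U)\langle\varphi,\mathcal{N}^3\varphi\rangle^{1/2}$, again controlled via \eqref{eq:mf moment bound 2}. Pairing one $a^\sharp$ from $\dot K^{(i)}$ with the adjacent $q$ gives $\|a^\sharp q_x \Psi_d\|\le \langle\Psi_d, q_x(\mathcal{N}_x+1)q_x\Psi_d\rangle^{1/2}$, and a further Young inequality converts this into the sum $\langle \Psi_d,(q_x\mathcal{N}_x q_x+q_x)\Psi_d\rangle$ asserted on the right-hand side. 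Summing all five contributions and absorbing all prefactors into $\tilde C(t)$ of the form \eqref{C(t)} gives the statement.

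The main obstacle is purely bookkeeping: there are many terms, and for each one must identify correctly whether it scales as $1/d$ (the ``doubly-polarized'' $pp\ldots qq$ pieces whose sum over neighbors can be closed up via Cauchy--Schwarz as in Proposition~\ref{prop:gronwall start}) or as $\langle q_x\mathcal{N}_x q_x + q_x\rangle$ (the pieces already carrying a $q_x$ on the relevant site). A secondary subtlety is that $h^{\alpha_\varphi}$ contains the unbounded interaction $\tfrac{U}{2}\mathcal{N}(\mathcal{N}-1)$, so we must ensure every occurrence of $h^{\alpha_\varphi}_x$ is sandwiched between a $p_x$ on at least one side so that it becomes $\|h^{\alpha_\varphi}\varphi(t)\|$, which by \eqref{eq:mf moment bound 2} applied with $k=4$ is controlled by the initial data under the hypothesis $\mathrm{Tr}(p_\varphi(0)\mathcal{N}^4)\le C$ of Theorem~\ref{thm:excitation method}; this is precisely what produces the sum $\sum_{j=0}^{6}$ in $\tilde C(t)$.
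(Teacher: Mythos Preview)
Your proposal is correct and follows essentially the same approach as the paper: term-by-term Cauchy--Schwarz on \eqref{sym}--\eqref{est5}, exploiting that every occurrence of $h^{\alpha_\varphi}_x$ is adjacent to a $p_x$ so that it reduces to $\|h^{\alpha_\varphi}\varphi\|\le C(J,\mu,U)(1+\langle\varphi,\mathcal{N}^4\varphi\rangle)^{1/2}$, which is then propagated via \eqref{eq:mf moment bound 2} with $k=4$ to produce the sum $\sum_{j=0}^6$ in $\tilde C(t)$. The only minor inaccuracy is your remark about a ``symmetry argument'' for \eqref{est4}: that term is already of the form $qq\,K^{(4)}p_x h^{\alpha_\varphi}_x\,qq + h.c.$, so no cancellation is needed and a direct Cauchy--Schwarz (pairing $h^{\alpha_\varphi}_x p_x a_x q_x q_y\Psi_d$ against $a_y q_x q_y\Psi_d$) suffices, exactly as the paper does.
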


\begin{proof}
We need to estimate each term in $\mathcal{R}$. We start by explaining in detail how to estimate one of the terms in \eqref{sym}. By Cauchy--Schwarz and Hölder's inequality we have
\[\begin{aligned}
& \bigg \vert \frac{J}{d\vert \Lambda \vert} \sum_{x\in \Lambda} \sum_{\underset{x\sim y}{y\in \Lambda}} \langle \Psi_d ,  p_x h^{\alpha_\varphi}_x p_y  a^*_x a_y q_x q_y  \Psi_d \rangle \bigg \vert 
\\ & \leq  \frac{J}{d\vert \Lambda \vert} \sum_{x\in \Lambda}  \big\Vert a^*_x q_x \Psi_d \big\Vert \Bigg \Vert h^{\alpha_\varphi}_x p_x \Bigg( \sum_{\underset{x\sim y}{y\in \Lambda}} q_y  a^*_yp_y \Bigg)  \Psi_d\Bigg \Vert
\\ & \leq \frac{2J}{\vert \Lambda \vert}  \sum_{x\in \Lambda}  \sqrt{\langle \varphi(0), (\mathcal{N}+1) \varphi(0)\rangle}  \sqrt{\langle \varphi, (h^{\alpha_\varphi})^2  \varphi\rangle}   \big\Vert a^*_x q_x \Psi_d\big\Vert  
\Bigg(\frac{1}{2d} +\frac{1}{(2d)^2}  \sum_{\underset{y\not =z,x\sim y}{y\in \Lambda}} \sum_{\underset{x\sim z}{z\in \Lambda}}   \Vert q_z \Psi_d \Vert  \Vert q_y \Psi_d \Vert \Bigg)^{\frac{1}{2}}
\\ & \leq\frac{2J}{\vert \Lambda \vert}  \sum_{x\in \Lambda}  \sqrt{\langle \varphi(0), (\mathcal{N}+1) \varphi(0)\rangle}  \sqrt{\langle \varphi, (h^{\alpha_\varphi})^2  \varphi\rangle}   \big\Vert a^*_x q_x \Psi_d\big\Vert    \Bigg( \frac{1}{2d} +\frac{1}{4d}  \sum_{\underset{y\not =z, x\sim y}{y\in\Lambda}}  \Vert q_y \Psi_d \Vert^2 +  \frac{1}{4d} \sum_{\underset{x\sim z}{z\in \Lambda}}  \Vert q_z \Psi_d \Vert^2 \Bigg)^{\frac{1}{2}}
\\ & \leq CJ \sqrt{ \langle \varphi(0), (\mathcal{N}+1) \varphi(0)\rangle} \sqrt{\langle \varphi, (h^{\alpha_\varphi})^2  \varphi\rangle}\Bigg( \frac{1}{\vert \Lambda \vert} \sum_x \Vert a^*_x q_x \Psi_d\Vert^2+   \frac{1}{d}
\\ & \qquad \qquad\qquad\qquad\qquad\qquad\qquad\qquad\qquad\quad  +\frac{1}{d}  \frac{1}{\vert \Lambda \vert} \underbrace{ \sum_{x\in \Lambda}\sum_{\underset{x\sim y}{y\in \Lambda}}  \Vert q_y \Psi_d \Vert^2}_{=2d\sum_{x\in \Lambda} \Vert q_x \Psi_d \Vert^2 } +  \frac{1}{d}\frac{1}{\vert \Lambda \vert} \underbrace{\sum_{x\in \Lambda} \sum_{\underset{x\sim z}{z\in \Lambda}}  \Vert q_z \Psi_d \Vert^2}_{=2d\sum_{x\in \Lambda}\Vert q_x \Psi_d \Vert^2} \Bigg)
\\ & \leq C J  \sqrt{\langle \varphi(0), (\mathcal{N}+1) \varphi(0)\rangle}  \sqrt{\langle \varphi, (h^{\alpha_\varphi})^2  \varphi\rangle} \left(  \frac{1}{\vert \Lambda \vert} \sum_{x\in \Lambda} \langle \Psi_d,  (q_x\mathcal{N}_x q_x+q_x) \Psi_d\rangle+\frac{1}{d}\right) .
\end{aligned}\]
The other terms of \eqref{sym} can be estimated analogously, so we arrive at
\[\begin{aligned}
 \left \vert \eqref{sym} \right \vert 
 \leq CJ\left( 1+ \langle \varphi(0), (\mathcal{N}) \varphi(0)\rangle  \right) \sqrt{ \langle\varphi, (h^{\alpha_\varphi})^2  \varphi\rangle}  \left(  \frac{1}{\vert \Lambda \vert} \sum_{x\in \Lambda} \langle \Psi_d,  (q_x\mathcal{N}_x q_x+q_x) \Psi_d\rangle+\frac{1}{d}\right).
\end{aligned}
\] 

In order to bound \eqref{est2}, we use Cauchy--Schwarz to find
\[\begin{aligned}
& \left \vert \frac{J}{d}\frac{1}{\vert \Lambda \vert}  \sum_{x\in \Lambda} \sum_{\underset{x\sim y}{y\in \Lambda}} \langle \Psi_d , q_y p_x h^{\alpha_\varphi}_x   a^*_x a_y q_x p_y  \Psi_d \rangle \right \vert 
\\&\quad =\left  \vert \frac{J}{d}\frac{1}{\vert \Lambda  \big \vert}  \sum_{x\in \Lambda} \sum_{\underset{x\sim y}{y\in \Lambda}} \langle h^{\alpha_\varphi}_x p_x a^*_y q_y\Psi_d ,    a^*_x  q_x p_y  \Psi_d \rangle \right \vert
\\ &\quad \leq \frac{J}{d}\frac{1}{\vert \Lambda  \big \vert}  \sum_{x\in \Lambda} \sum_{\underset{x\sim y}{y\in \Lambda}}\sqrt{\langle \varphi, (h^{\alpha_\varphi})^2  \varphi\rangle } \left(\Vert q_x \Psi_d  \Vert^2 +\Vert \mathcal{N}_x^{1/2} q_x \Psi_d \Vert^2\right)^{1/2} \left(\Vert q_y \Psi_d  \Vert^2 +\Vert \mathcal{N}_y^{1/2} q_y \Psi_d \Vert^2\right)^{1/2}
\\ &\quad \leq  C J \sqrt{ \langle \varphi, (h^{\alpha_\varphi})^2  \varphi\rangle}  \frac{1}{\vert \Lambda \vert} \sum_{x\in \Lambda} \langle \Psi_d, ( q_x\mathcal{N}_x q_x+q_x) \Psi_d\rangle.
 \end{aligned}\]
Estimating the other terms of \eqref{est2} in an analogous way, we get 
 \[\begin{aligned}
 \left \vert \eqref{est2} \right \vert 
 \leq CJ \left( 1+ \sqrt{\langle \varphi(0), (\mathcal{N}) \varphi(0)\rangle}  \right) \sqrt{\langle \varphi, (h^{\alpha_\varphi})^2  \varphi\rangle}   \frac{1}{\vert \Lambda \vert} \sum_{x\in \Lambda} \langle \Psi_d,  (q_x\mathcal{N}_x q_x+q_x) \Psi_d\rangle.
\end{aligned}
\] 
To bound \eqref{est3}, we use Cauchy--Schwarz to estimate
\[\begin{aligned}
& \left \vert \frac{J}{d}\frac{1}{\vert \Lambda \vert}  \sum_{x\in \Lambda} \sum_{\underset{x\sim y}{y\in \Lambda}}
\langle \Psi_d ,q_y h^{\alpha_\varphi}_x p_x  a^*_x a_y q_x q_y    \Psi_d \rangle \right \vert \\
&\quad= \left  \vert \frac{J}{d}\frac{1}{\vert \Lambda \vert}  \sum_{x\in \Lambda} \sum_{\underset{x\sim y}{y\in \Lambda}} 
\langle a^*_y q_y \Psi_d ,h^{\alpha_\varphi}_x p_x  a^*_x  q_x q_y    \Psi_d \rangle \right  \vert
\\ &\quad \leq  \frac{J}{d}\frac{1}{\vert \Lambda \vert}  \sum_{x\in \Lambda} \sum_{\underset{x\sim y}{y\in \Lambda}}
\Vert  a^*_y q_y \Psi_d \Vert \Big(\langle \Psi_d, q_x a_x \underbrace{p_x ( h^{\alpha_\varphi}_x)^2 p_x}_{=\langle \varphi ,(h^{\alpha_\varphi})^2 \varphi \rangle  p_x}  a^*_xq_x  \Psi_d \rangle      \Big)^{\frac{1}{2}} 
\\ &\quad \leq  \frac{J}{d}\frac{1}{\vert \Lambda \vert}  \sum_{x\in \Lambda} \sum_{\underset{x\sim y}{y\in \Lambda}} 
\sqrt{ \langle \varphi, (h^{\alpha_\varphi})^2  \varphi\rangle  }   \Big(\langle \Psi_d, q_y (\mathcal{N}_y +1)q_y \Psi_d \rangle      \Big)^{\frac{1}{2}}  \Big(\langle \Psi_d, q_x (\mathcal{N}_x +1)q_x \Psi_d \rangle      \Big)^{\frac{1}{2}} 
\\ &\quad \leq C J\sqrt{\langle \varphi, (h^{\alpha_\varphi})^2  \varphi\rangle}  \frac{1}{\vert \Lambda \vert} \sum_{x\in \Lambda} \langle \Psi_d, ( q_x\mathcal{N}_x q_x+q_x) \Psi_d\rangle,
\end{aligned}\]
and similarly
\[\begin{aligned}
& \left \vert \frac{J}{d}\frac{1}{\vert \Lambda \vert}  \sum_{x\in \Lambda} \sum_{\underset{x\sim y}{y\in \Lambda}} \langle \Psi_d , q_y  p_x h^{\alpha_\varphi}_y p_y  a^*_x a_y q_x q_y  \Psi_d \rangle \right \vert 
\\ &\quad \leq  \frac{J}{d}\frac{1}{\vert \Lambda \vert}  \sum_{x\in \Lambda} \sum_{\underset{x\sim y}{y\in \Lambda}} \sqrt{\langle \varphi(0), \mathcal{N} \varphi(0)\rangle} \Vert q_y \Psi_d\Vert \left( \left \langle \Psi_d ,q_x q_y a^*_y \underbrace{p_y (h^{\alpha_\varphi}_y)^2 p_y}_{= \langle \varphi, (h^{\alpha_\varphi})^2  \varphi\rangle  p_y} a_yq_y \Psi_d \right \rangle\right)^{1/2}
\\ &\quad =\frac{J}{d}\frac{1}{\vert \Lambda \vert} \sum_{x\in \Lambda} \sum_{\underset{x\sim y}{y\in \Lambda}} \sqrt{ \langle \varphi, (h^{\alpha_\varphi})^2  \varphi\rangle } \sqrt{\langle \varphi(0), \mathcal{N} \varphi(0)\rangle} \Vert q_y \Psi_d\Vert \left( \left  \langle \Psi_d ,q_x q_y a^*_y p_y a_yq_y \Psi_d \right\rangle\right)^{1/2}
\\ &\quad \leq CJ \sqrt{\langle \varphi(0), \mathcal{N} \varphi(0)\rangle} \sqrt{   \langle \varphi, (h^{\alpha_\varphi})^2  \varphi\rangle}     \frac{1}{\vert \Lambda \vert} \sum_{x\in \Lambda} \langle \Psi_d,  (q_x\mathcal{N}_x q_x+q_x) \Psi_d\rangle.
\end{aligned}\]
For \eqref{est4}, we directly find
\[\begin{aligned}
\left \vert \frac{J}{d}\frac{1}{\vert \Lambda \vert}  \sum_{<x,y>} \langle \Psi_d , q_x q_y a^*_xa_y p_x h^{\alpha_\varphi}_x q_x q_y \Psi_d \rangle \right \vert 
 &= \left  \vert \frac{J}{d}\frac{1}{\vert \Lambda \vert}  \sum_{<x,y>} \langle h^{\alpha_\varphi}_x p_x a_x q_x q_y\Psi_d , a_yq_x q_y \Psi_d \rangle \right  \vert 
\\ & \leq  CJ \sqrt{ \langle \varphi, (h^{\alpha_\varphi})^2  \varphi\rangle }  \frac{1}{\vert \Lambda \vert} \sum_{x\in \Lambda} \langle \Psi_d,  q_x\mathcal{N}_x q_x \Psi_d\rangle.
\end{aligned}\]
Estimating the other terms in an analogous way, we obtain 
\[\begin{aligned}
 \left \vert \eqref{est2}+\eqref{est3}+\eqref{est4} \right \vert 
 \leq C J \left( 1+  \langle \varphi(0), \mathcal{N} \varphi(0)\rangle\right) \sqrt{\langle \varphi, (h^{\alpha_\varphi})^2  \varphi\rangle } \frac{1}{\vert \Lambda \vert} \sum_{x\in \Lambda} \langle \Psi_d, ( q_x\mathcal{N}_x q_x+q_x) \Psi_d\rangle  .
\end{aligned}
\] 
Using
\[\begin{aligned}
&\vert \dot \alpha  \vert \leq C \left( \vert \mu\vert \sqrt{\langle \varphi(0), \mathcal{N} \varphi(0)\rangle} +U \langle \varphi, (\mathcal{N}+1)^{3/2} \varphi\rangle\right) 
\\&\vert \dot{\overline{ {\alpha_\varphi}}}  {\alpha_\varphi} + \overline  {\alpha_\varphi} \dot \alpha_\varphi \vert \leq C U  \sqrt{\langle \varphi(0), \mathcal{N} \varphi(0)\rangle} \langle \varphi, (\mathcal{N}+1)^{3/2} \varphi\rangle,
\end{aligned}
\]
we furthermore find
\[\begin{aligned}
\vert \eqref{est5}\vert \leq & CJ \left( \vert \mu+U\vert \sqrt{\langle \varphi(0), \mathcal{N} \varphi(0)\rangle}+  U \langle \varphi, \mathcal{N}^{3/2} \varphi\rangle +U \sqrt{ \langle \varphi(0), \mathcal{N} \varphi(0)\rangle} \sqrt{\langle \varphi,( \mathcal{N}+1)^{3/2} \varphi\rangle} \right)\\ & \frac{1}{\vert \Lambda \vert} \sum_{x\in \Lambda} \langle \Psi_d, \big(  q_x\mathcal{N}_x q_x+q_x\big) \Psi_d\rangle,
\end{aligned}\]
Combining all estimates, we arrive at the bound
\[\begin{aligned}
 \left \vert \mathcal{R} \right \vert 
 \leq  C(J,\mu,U) \left(  1+  \langle \varphi(0), \mathcal{N} \varphi(0)\rangle \right)  \sqrt{\langle \varphi, (h^{\alpha_\varphi})^2  \varphi\rangle} \left(  \frac{1}{\vert \Lambda \vert} \sum_{x\in \Lambda} \langle \Psi_d, ( q_x\mathcal{N}_x q_x+q_x) \Psi_d\rangle  +\frac{1}{d} \right),
\end{aligned}
\] 
where $C(J,\mu,U)$ is a polynomial in $J$, $\mu$ and $U$.
We also have by Cauchy--Schwarz
\begin{equation}\label{h2est}
\begin{aligned}
 \langle \varphi, (h^{\alpha_\varphi})^2  \varphi\rangle  &= \left \langle \varphi, \left( -J\big( {\alpha_\varphi} a^* +\overline  {\alpha_\varphi} a -\vert  {\alpha_\varphi} \vert^2 \big) +(J-\mu) \mathcal{N} +\frac{U}{2} \mathcal{N}(\mathcal{N}-1)\right)^2 \varphi \right \rangle 
 \\ &\leq  C(J,\mu,U)\left(  1+\langle \varphi(0), \mathcal{N} \varphi(0)\rangle ^2 \right) \left( 1+  \langle \varphi, \mathcal{N}^4 \varphi\rangle  \right).
\end{aligned}
\end{equation}
The proposition is proven by using the propagation bound \eqref{eq:mf moment bound 2} from Proposition~\eqref{prop:moments bound} for $k=4$, since then
\begin{equation}
\begin{aligned}
 \sqrt{ \langle \varphi, (h^{\alpha_\varphi})^2  \varphi\rangle}  &
 \leq  C(J,\mu,U)\left(  1+\langle \varphi(0), \mathcal{N} \varphi(0)\rangle  \right) \left( 1+  \sqrt{\langle \varphi, \mathcal{N}^4 \varphi\rangle}  \right) 
 \\ &\leq  C(J,\mu,U)\left(  1+\langle \varphi(0), \mathcal{N} \varphi(0)\rangle  \right)  
\\&\quad \Bigg( 1+ \sum_{j=0}^{6} \frac{ \left(  8J {\langle \varphi(0), \mathcal{N} \varphi(0)\rangle }^{1/2} t \right)^j}{j!}   \left \langle \varphi(0),(\mathcal{N}+j)^{4-\frac{j}{2} }\varphi(0) \right \rangle \Bigg).
\end{aligned}
\end{equation}
\end{proof}

With this proposition we can now prove a Gronwall estimate for $f$.

\begin{proposition}\label{granwalllemmaf}
For $f$ as defined in \eqref{defoff}, we have for all $t\in \mathbb{R}$,
\begin{equation}\label{Granwallf}
f(t)\leq e^{\int_0^t\tilde C(s)ds} f(0)+\frac{1}{d}  \int_0^t \left(1+\tilde C(s) \right) e^{\int_s^t\tilde C(r)dr } ds,
\end{equation}
with
\begin{equation}\label{tildec2}
 \begin{aligned}
 \tilde C(t)  &=   \frac{C(J,\mu,U)}{U} \left(1+\frac{1}{U}+\langle \varphi(0),\mathcal N \varphi(0) \rangle ^2\right) 
 \\ &\qquad \Bigg(  1 +
\sum_{j=0}^{6}  \left(  8J {\langle \varphi(0), \mathcal{N} \varphi(0)\rangle }^{1/2}  \right)^j \left \langle \varphi(0),(\mathcal{N}+j)^{4-\frac{j}{2} }\varphi(0) \right \rangle \frac{t^j}{j!}  \Bigg),
\end{aligned}
\end{equation}
where  $C(J,\mu,U)>0$ is a polynomial in $J$, $\mu$ and $U$.
\end{proposition}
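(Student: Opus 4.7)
The plan is to derive a differential inequality of the form $\partial_t f(t) \le \tilde C(t) f(t) + (1+\tilde C(t))/d$ and conclude via Gronwall's lemma. Since $\langle \Psi_d(t), H_d \Psi_d(t)\rangle$ is conserved under the Schrödinger flow, only the explicitly time-dependent operators contribute to $\partial_t f$. Writing $B_x(t) := q_x h_x^{\alpha_\varphi} q_x - h_x^{\alpha_\varphi} + c q_x$, each $B_x$ acts on site $x$ only, so in $i[H_d, B_x]$ the terms $h_y^{\alpha_\varphi}$ with $y \neq x$ drop out, leaving $i[\tilde H + h_x^{\alpha_\varphi}, B_x]$. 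Using the basic identities $i\partial_t q_x = [h_x^{\alpha_\varphi}, q_x]$ and
\begin{equation*}
i\partial_t(q_x h_x^{\alpha_\varphi} q_x) = [h_x^{\alpha_\varphi}, q_x h_x^{\alpha_\varphi} q_x] + q_x(i\partial_t h_x^{\alpha_\varphi}) q_x,
\end{equation*}
the single-site commutators $i[h_x^{\alpha_\varphi}, q_x h_x^{\alpha_\varphi} q_x]$ and $ic[h_x^{\alpha_\varphi}, q_x]$ cancel against the corresponding explicit time derivatives of $q_x h_x^{\alpha_\varphi} q_x$ and $c q_x$, and the remaining $-\sum_x \partial_t h_x^{\alpha_\varphi}$ is identified with $\dot{\tilde H}$ (since $\partial_t H_d = 0$). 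One therefore obtains
\begin{equation*}
\partial_t f = \frac{1}{|\Lambda|}\Big(i\langle \Psi_d, [\tilde H, \textstyle\sum_x B_x]\Psi_d\rangle + \sum_x \langle \Psi_d, q_x(\partial_t h_x^{\alpha_\varphi}) q_x \Psi_d\rangle + \langle \Psi_d, \dot{\tilde H}\Psi_d\rangle\Big).
\end{equation*}

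Substituting the first form of Proposition~\ref{expdottildeH} for $\langle \Psi_d, \dot{\tilde H}\Psi_d\rangle/|\Lambda|$ produces a cancellation of the $[\tilde H, q_x h_x^{\alpha_\varphi} q_x - h_x^{\alpha_\varphi}]$ contribution in $[\tilde H, \sum_x B_x]$, leaving
\begin{equation*}
\partial_t f = \mathcal{R}(t) + \frac{1}{|\Lambda|}\sum_x \langle \Psi_d, q_x(\partial_t h_x^{\alpha_\varphi}) q_x \Psi_d\rangle + \frac{ic}{|\Lambda|}\sum_x \langle \Psi_d, [\tilde H, q_x]\Psi_d\rangle.
\end{equation*}
The rest $\mathcal{R}(t)$ is controlled directly by Proposition~\ref{estrest}, using the elementary bound $q_x \mathcal{N}_x q_x \le q_x \mathcal{N}_x^2 q_x$ (valid since $n \le n^2$ for $n\in\mathbb{N}$) to obtain $|\mathcal{R}(t)| \le \tilde C(t)(g(t) + 1/d)$. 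For the second piece I would expand $\partial_t h_x^{\alpha_\varphi} = -J(\dot\alpha_\varphi a_x^* + \dot{\overline{\alpha_\varphi}}a_x) + J\partial_t |\alpha_\varphi|^2$, use \eqref{derivalpha} to bound $|\dot \alpha_\varphi|$ polynomially in $\alpha_\varphi$ and moments of $\mathcal{N}$, and then apply Cauchy--Schwarz together with the mean-field moment bound \eqref{eq:mf moment bound 2} of Proposition~\ref{prop:moments bound}; this yields a bound of the same form $\tilde C(t)(g(t) + 1/d)$. The commutator $[\tilde H, q_x]$ is handled in the same spirit: since $\tilde H = \sum_{\langle y, z\rangle} \tilde H_{y,z}$ and $q_x$ commutes with every $\tilde H_{y,z}$ not involving $x$, only the nearest-neighbour pairs $y \sim x$ contribute; for each of the four summand types listed in \eqref{expE} the resulting expressions are structurally identical to those already estimated in Proposition~\ref{epsestimate} and in the proof of Proposition~\ref{estrest}, and Cauchy--Schwarz once more gives $\tilde C(t)(g(t)+1/d)$, with $c$ absorbed polynomially into $\tilde C(t)$ via \eqref{cchoice}.

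Combining the three estimates gives $|\partial_t f(t)| \le \tilde C(t)(g(t) + 1/d)$. Applying the lower bound $g(t) \le \frac{4}{U}(f(t) + 1/d)$ from Proposition~\ref{equiv} converts this into $|\partial_t f(t)| \le \tilde C(t) f(t) + (1 + \tilde C(t))/d$, where the factor $1/U$ is absorbed into $\tilde C(t)$ as in \eqref{tildec2}. Gronwall's lemma then produces \eqref{Granwallf}. The main technical obstacle is the bookkeeping of the commutator $[\tilde H, q_x]$: each of the four summand types in $\tilde H$ contains several $p_y/q_y$ projections together with unbounded creation/annihilation factors, so the commutator with $q_x$ generates many sub-terms that must be estimated separately using nested Cauchy--Schwarz inequalities, carefully tracking the $1/d$ factor arising from the normalization of $\tilde H$ and the combinatorial factor from summing over neighbours of $x$.
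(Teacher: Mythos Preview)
Your proposal is correct and follows essentially the same route as the paper: the same three-term decomposition $\dot f = \mathcal{R} + \frac{1}{|\Lambda|}\sum_x\langle\Psi_d,q_x\dot h_x^{\alpha_\varphi}q_x\Psi_d\rangle + \frac{ic}{|\Lambda|}\sum_x\langle\Psi_d,[\tilde H,q_x]\Psi_d\rangle$ via the cancellation from Proposition~\ref{expdottildeH}, the same estimates on each piece, and the same conversion $g\le \tfrac{4}{U}(f+1/d)$ before Gronwall. One small simplification you are missing for what you call ``the main technical obstacle'': the quantity $\frac{i}{|\Lambda|}\sum_x\langle\Psi_d,[\tilde H,q_x]\Psi_d\rangle$ is exactly $\partial_t\operatorname{Tr}(\gamma_d^{(1)}q)$, so the bound on $[\tilde H,cq_x]$ follows immediately from Proposition~\ref{prop:gronwall start} (together with $q(\mathcal{N}+1)q\le 2(q\mathcal{N}^2q+q)$) rather than requiring a fresh term-by-term analysis of the four summand types in $\tilde H$.
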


\begin{proof}
Using  $ \dot H_d= 0 = \sum_{x\in \Lambda} \dot h^{\alpha_\varphi}_x +\dot{\tilde H}$, the time derivative of $f$ can be computed as 
\[\begin{aligned}
\dot f&= \frac{i}{\vert \Lambda \vert} \sum_{x\in \Lambda}  \langle \Psi_d , [H_d, H_d+ q_xh^{\alpha_\varphi}_xq_x-h^{\alpha_\varphi}_x +cq_x] \Psi_d \rangle+\frac{1}{\vert \Lambda \vert} \sum_{x\in \Lambda} \langle \Psi_d ,  (q_x \dot h^{\alpha_\varphi}_xq_x-\dot h^{\alpha_\varphi}_x) \Psi_d \rangle
\\ &\quad -\frac{i}{\vert \Lambda \vert} \sum_{x\in \Lambda} \langle \Psi_d, [h^{\alpha_\varphi}_x, q_xh^{\alpha_\varphi}_xq_x +cq_x] \Psi_d\rangle
\\ &=\frac{i}{\vert \Lambda \vert} \sum_{x\in \Lambda} \langle \Psi_d , [\tilde H, q_xh^{\alpha_\varphi}_xq_x-h^{\alpha_\varphi}_x+cq_x] \Psi_d \rangle+\frac{1}{\vert \Lambda \vert} \sum_{x\in \Lambda} \langle \Psi_d ,  (q_x \dot h^{\alpha_\varphi}_xq_x+\dot{ \tilde H}) \Psi_d \rangle.
\end{aligned}\]

Using Proposition \ref{expdottildeH} we get 
\[\begin{aligned}
\dot f&= \frac{i}{\vert \Lambda \vert} \sum_{x\in \Lambda} \langle \Psi_d, [\tilde H, cq_x] \Psi_d \rangle+\frac{1}{\vert \Lambda \vert}\sum_{x\in \Lambda} \langle \Psi_d ,  q_x \dot h^{\alpha_\varphi}_xq_x \Psi_d \rangle+\mathcal{R}.
\end{aligned}\] 
For the first two terms of this expression, we find
\[\begin{aligned}
&\left \vert \frac{1}{\vert \Lambda \vert} \sum_{x\in \Lambda}\langle \Psi_d ,  q_x \dot h^{\alpha_\varphi}_xq_x \Psi_d \rangle \right \vert
\\ &\quad =\left \vert \frac{1}{\vert \Lambda \vert} \left  \langle \Psi_d ,  q_x \left( -J \dot  \alpha_\varphi a^*_x  -J \dot{ \overline  {\alpha_\varphi}} a_x + 2JU\Im    ( \langle \varphi, \mathcal{N}a\varphi \rangle \overline  {\alpha_\varphi}) \right) q_x \Psi_d \right \rangle \right \vert
\\ &\quad  \leq  C(J,\mu,U ) \left( 1+\langle \varphi(0),\mathcal N \varphi(0) \rangle  \right) \left(1+  \sqrt{\langle \varphi,\mathcal{ N}^2 \varphi \rangle }\right)  \frac{1}{\vert \Lambda \vert} \sum_{x\in \Lambda} \left \langle \Psi_d, \big(  q_x\mathcal{N}_x^2 q_x+q_x\big) \Psi_d \right \rangle,
\end{aligned}\]
and
\[\begin{aligned}
&\left \vert \frac{1}{\vert \Lambda \vert}\sum_{x\in \Lambda}  \langle \Psi_d ,  [\tilde H,cq_x]  \Psi_d \rangle \right \vert
\\ &\quad  \leq  C(J,\mu,U) \left(1+\frac{1}{U}+\langle \varphi(0),\mathcal N \varphi(0) \rangle ^2\right)  \frac{1}{\vert \Lambda \vert} \sum_{x\in \Lambda} \left \langle \Psi_d,  (q_x\mathcal{N}_x^2 q_x +q_x)\Psi_d \right \rangle+\frac{1}{d}.
\end{aligned}\]
These two estimates, together with the estimate on $\mathcal{R}$ from Proposition~\ref{estrest} and the equivalence of $f$ and $g$ up to an error $d^{-1}$ from Proposition~\ref{equiv} imply

 \[\left \vert \frac{d}{dt}f(t) \right \vert \leq  \tilde C( t)\left( f(t)+\frac{1}{d} \right) +\frac{1}{d} , \]
where $\tilde C(t)$ depends on the initial data and on the other parameters of our model as defined in \eqref{tildec2}.
With Gronwall's lemma we arrive at \eqref{Granwallf}.

\end{proof}

\subsection{Conclusion of the Proof}\label{sec_proof_conclusion_excitation_method}

We combine the above results to prove our second main result.

\begin{proof}[Proof of Theorem~\ref{thm:excitation method}]

We use the equivalence of $f$ and $g$ up to an error $d^{-1}$ from Proposition~\ref{equiv}, and the Gronwall estimate for $f$ from Proposition~\ref{granwalllemmaf} to find
\begin{equation}
\begin{aligned}
 &\frac{1}{\vert \Lambda \vert} \sum_{x\in \Lambda} \langle \Psi_d , q_x \Psi_d  \rangle \\
&\quad\leq \frac{1}{\vert \Lambda \vert} \sum_{x\in \Lambda} \langle \Psi_d , \big(q_x \mathcal{N}_x^2 q_x+q_x \big) \Psi_d \rangle
 \\&\quad \leq  \frac{4}{U} \left(  f+\frac{1}{d} \right) 
  \\&\quad \leq\frac{4}{U}  e^{\int_0^t\tilde C(s)ds} f(0)  +  \frac{1}{ d} \left(   \frac{4}{U}+ \frac{4}{U} \int_0^t \left(1+\tilde C(s) \right) e^{\int_s^t\tilde C(r)dr } ds \right) 
 \\&\quad \leq  \frac{C}{U}\left( 1+J^2+U+\left(J-\mu-\frac{U}{2}\right)^2\right)\left( 1+\frac{1}{U}+\langle \varphi(0), \mathcal{N} \varphi(0)\rangle^2 \right) e^{\int_0^t\tilde C(s)ds} 
 \\&\qquad \frac{1}{\vert \Lambda \vert}  \sum_{x\in \Lambda} \left  \langle \Psi_d(0), (  q_x(0)\mathcal{N}_x^2 q_x(0)  +q_x(0))\Psi_d(0) \right \rangle  
+  \frac{1}{ d} \frac{4}{U}\left(  1+ e^{\int_0^t\tilde C(s)ds}+ \int_0^t (1+\tilde C(s)) e^{\int_s^t\tilde C(r)dr } ds\right),\label{q_estimate_proof_thm_2}
\end{aligned}
\end{equation} 
where $\tilde C (t)$ is defined in \eqref{tildec2}. 

Now note that since ${\rm Tr}(p(0)\mathcal{N}^4)  \leq C$, we get that $\tilde C(t)$ satisfies
\[ \tilde C(t) \leq C(J,\mu,U) \left(1+\sum_{j=1}^6 t^j\right) \]
where $C(J,\mu,U)>0$ depends polynomially on the parameters of our model $J$, $\mu$ and $U$. Thus, \eqref{q_estimate_proof_thm_2} can be estimated as
\begin{equation}
\begin{aligned}
&\frac{1}{\vert \Lambda \vert} \sum_{x\in \Lambda} \langle \Psi_d(t) , q_x(t) \Psi_d (t) \rangle  \\
&\quad\leq  \frac{1}{ d} \frac{1}{U}  + C(J,\mu,U) e^{C(J,\mu,U) \sum_1^7 |t|^j}\left(1+\frac{1}{U^2}\right)\left(  \frac{1}{\vert \Lambda \vert}  \sum_{x\in \Lambda} \left  \langle \Psi_d(0), \left(  q_x(0)\mathcal{N}_x^2 q_x(0)  +q_x(0)\right)\Psi_d(0) \right \rangle +\frac{1}{d}\right),
 \end{aligned}
\end{equation}
and the Theorem follows from using \eqref{eq:Tr norm estimate} from Lemma~\ref{lem_q_dens_mat}.
\end{proof}

\vspace{3mm}

\noindent\textbf{Acknowledgments.} S.~Farhat and S.~Petrat acknowledge funding by the
Deutsche Forschungsgemeinschaft (DFG, German Research Foundation) - project number
505496137. D.~Périce and S.~Petrat acknowledge funding by the
Deutsche Forschungsgemeinschaft (DFG, German Research Foundation) - project number
512258249.

The authors would like to thank the Institut Henri Poincaré (UAR 839 CNRS-Sorbonne Université) and the LabEx CARMIN (ANR-10-LABX-59-01) for their support, particularly for hosting us during the thematic program on "Quantum many-body systems out-of-equilibrium", where the idea that initiated this project was born. Special thanks go to one of its organizers, Thierry Giamarchi, for insightful discussions on quantum lattice systems and DMFT.

\addcontentsline{toc}{section}{References}
\bibliography{bibliography}

\begin{thebibliography}{10}

\bibitem{ALSSY}
M.~Aizenman, E.~Lieb, R.~Seiringer, J.~Solovej, and J.~Yngvason.
\newblock {B}ose--{E}instein quantum phase transition in an optical lattice model.
\newblock {\em Phys. Rev. A}, 70:023612, 2004.

\bibitem{anderson1995}
M.~H. Anderson, J.~R. Ensher, M.~R. Matthews, C.~E. Wieman, and E.~A. Cornell.
\newblock Observation of {Bose--Einstein} condensation in a dilute atomic vapor.
\newblock {\em Science}, 269(5221):198--201, 1995.

\bibitem{benedikter_lec}
N.~Benedikter, M.~Porta, and B.~Schlein.
\newblock {\em Effective Evolution Equations from Quantum Dynamics}.
\newblock SpringerBriefs in Mathematical Physics. Springer, 2016.

\bibitem{Bloch2007ManyBodyPW}
I.~Bloch, J.~Dalibard, and W.~Zwerger.
\newblock Many-body physics with ultracold gases.
\newblock {\em Rev. Mod. Phys.}, 80:885--964, 2007.

\bibitem{boccato2018}
C.~Boccato, C.~Brennecke, S.~Cenatiempo, and B.~Schlein.
\newblock Bogoliubov theory in the {Gross--Pitaevskii} limit.
\newblock {\em Acta Math.}, 222(2):219--335, 2019.

\bibitem{Bose_original}
Bose.
\newblock {P}lancks {G}esetz und {L}ichtquantenhypothese.
\newblock {\em Z. Phys.}, 26:178--181, 1924.

\bibitem{QF}
L.~Bo{\ss}mann, S.~Petrat, P.~Pickl, and A.~Soffer.
\newblock Beyond {Bogoliubov} dynamics.
\newblock {\em Pure Appl.\ Anal.}, 3(4):677--726, 2021.

\bibitem{spectrum}
L.~Bo{\ss}mann, S.~Petrat, and R.~Seiringer.
\newblock Asymptotic expansion of low-energy excitations for weakly interacting bosons.
\newblock {\em Forum Math.\ Sigma}, 9:e28, 2021.

\bibitem{Bru_Dorlas}
J.-B. Bru and T.~C. Dorlas.
\newblock Exact solution of the infinite-range-hopping {B}ose--{H}ubbard model.
\newblock {\em J. Stat. Phys.}, 113, 2003.

\bibitem{boson_DMFT}
K.~Byczuk and D.~Vollhardt.
\newblock Correlated bosons on a lattice: {D}ynamical mean-field theory for {B}ose--{E}instein condensed and normal phases.
\newblock {\em Phys. Rev. B}, 77(235106), 2008.

\bibitem{cazenave1998introduction}
T.~Cazenave and A.~Haraux.
\newblock {\em An Introduction to Semilinear Evolution Equations}.
\newblock Oxford lecture series in mathematics and its applications, 1998.

\bibitem{davis1995}
K.~B. Davis, M.-O. Mewes, M.~R. Andrews, N.~J. van Druten, D.~S. Durfee, D.~M. Kurn, and W.~Ketterle.
\newblock {Bose--Einstein} condensation in a gas of sodium atoms.
\newblock {\em Phys. Rev. Lett.}, 75(22):3969--3973, 1995.

\bibitem{derezinski2014}
J.~Derezi{\'n}ski and M.~Napi{\'o}rkowski.
\newblock Excitation spectrum of interacting bosons in the mean-field infinite-volume limit.
\newblock {\em Ann. Henri Poincar{\'e}}, 15(12):2409--2439, 2014.

\bibitem{Einstein_original_1}
A.~Einstein.
\newblock Quantentheorie des einatomigen idealen {G}ases.
\newblock {\em Sitzber. Kgl. Preuss. Akad. Wiss.}, pages 261--267, 1924.

\bibitem{Einstein_original_2}
A.~Einstein.
\newblock Quantentheorie des einatomigen idealen {G}ases. {Z}weite {A}bhandlung.
\newblock {\em Sitzber. Kgl. Preuss. Akad. Wiss.}, pages 3--14, 1925.

\bibitem{erdos2001}
L.~Erd{\H{o}}s and H.-T. Yau.
\newblock Derivation of the nonlinear {Schr{\"o}dinger} equation from a many-body {Coulomb} system.
\newblock {\em Adv. Theor. and Math. Phys.}, 5(6):1169--1205, 2001.

\bibitem{PhysRevB.40.546}
M.~P.~A. Fisher, P.~B. Weichman, G.~Grinstein, and D.~S. Fisher.
\newblock Boson localization and the superfluid-insulator transition.
\newblock {\em Phys. Rev. B}, 40:546--570, Jul 1989.

\bibitem{Fournais_Solovej_1}
S.~Fournais and J.~P. Solovej.
\newblock The energy of dilute {B}ose gases.
\newblock {\em Ann. Math.}, 192:893--976, 2020.

\bibitem{Fournais_Solovej_2}
S.~Fournais and J.~P. Solovej.
\newblock The energy of dilute {B}ose gases {II}: the general case.
\newblock {\em Invent. Math.}, 232:863--994, 2022.

\bibitem{DMFT-96}
A.~Georges, G.~Kotliar, W.~Krauth, and M.~Rozenberg.
\newblock Dynamical mean-field theory of strongly correlated fermion systems and the limit of infinite dimensions.
\newblock {\em Rev. Mod. Phys.}, 68:13--125, 01 1996.

\bibitem{Golse_2016}
F.~Golse.
\newblock {\em On the Dynamics of Large Particle Systems in the Mean Field Limit}, chapter~I, page 1–144.
\newblock Springer International Publishing, 2016.

\bibitem{Greiner2002QuantumPT}
M.~Greiner, O.~Mandel, T.~Esslinger, T.~W. H{\"a}nsch, and I.~Bloch.
\newblock Quantum phase transition from a superfluid to a {M}ott insulator in a gas of ultracold atoms.
\newblock {\em Nature}, 415:39--44, 2002.

\bibitem{GREINER200311}
M.~Greiner, O.~Mandel, T.~Rom, A.~Altmeyer, A.~Widera, T.~Hänsch, and I.~Bloch.
\newblock Quantum phase transition from a superfluid to a {M}ott insulator in an ultracold gas of atoms.
\newblock {\em Physica B: Condensed Matter}, 329-333:11--12, 2003.
\newblock Proceedings of the 23rd International Conference on Low Temperature Physics.

\bibitem{Gutzwiller}
M.~C. Gutzwiller.
\newblock Effect of correlation on the ferromagnetism of transition metals.
\newblock {\em Phys. Rev. Lett.}, 10, 1963.

\bibitem{PhysRevB.80.245110}
W.-J. Hu and N.-H. Tong.
\newblock Dynamical mean-field theory for the {B}ose--{H}ubbard model.
\newblock {\em Phys. Rev. B}, 80:245110, Dec 2009.

\bibitem{PhysRevLett.81.3108}
D.~Jaksch, C.~Bruder, J.~I. Cirac, C.~W. Gardiner, and P.~Zoller.
\newblock Cold bosonic atoms in optical lattices.
\newblock {\em Phys. Rev. Lett.}, 81:3108--3111, Oct 1998.

\bibitem{knowles2010}
A.~Knowles and P.~Pickl.
\newblock Mean-field dynamics: singular potentials and rate of convergence.
\newblock {\em Commun. Math. Phys.}, 298(1):101--138, 2010.

\bibitem{lewin2015}
M.~Lewin, P.~T. Nam, and B.~Schlein.
\newblock Fluctuations around {Hartree} states in the mean field regime.
\newblock {\em Amer. J. Math.}, 137(6):1613--1650, 2015.

\bibitem{lewin2015_2}
M.~Lewin, P.~T. Nam, S.~Serfaty, and J.~P. Solovej.
\newblock {B}ogoliubov spectrum of interacting {B}ose gases.
\newblock {\em Commun. Pure Appl. Math.}, 68(3):413--471, 2015.

\bibitem{PhysRevLett.62.324}
W.~Metzner and D.~Vollhardt.
\newblock Correlated lattice fermions in $d=\ensuremath{\infty}$ dimensions.
\newblock {\em Phys. Rev. Lett.}, 62:324--327, 1989.

\bibitem{DMFT-22}
E.~Pavarini, E.~Koch, A.~Lichtenstein, and D.~Vollhardt.
\newblock {\em Dynamical Mean-Field Theory of Correlated Electrons}.
\newblock Institute for Advanced Simulation, 2022.

\bibitem{MFBH}
E.~Picari, A.~Ponno, and L.~Zanelli.
\newblock Mean field derivation of {DNLS} from the {Bose--Hubbard} model.
\newblock {\em Ann. Henri Poincaré}, 23, 10 2021.

\bibitem{pickl2011}
P.~Pickl.
\newblock A simple derivation of mean field limits for quantum systems.
\newblock {\em Lett. Math. Phys.}, 97(2):151--164, 2011.

\bibitem{Gyrokineticlimit}
D.~Périce and N.~Rougerie.
\newblock Gyrokinetic limit of the 2d {H}artree equation in a large magnetic field.
\newblock {\em Preprint}, 2024.
\newblock \href{https://arxiv.org/abs/2403.19226}{https://arxiv.org/abs/2403.19226}.

\bibitem{Rokhsar_Kotliar_1991}
D.~S. Rokhsar and B.~G. Kotliar.
\newblock Gutzwiller projection for bosons.
\newblock {\em Phys. Rev. B}, 44, 1991.

\bibitem{schlein_2022_review}
B.~Schlein.
\newblock Bose gases in the {Gross--Pitaevskii} limit: A survey of some rigorous results.
\newblock In R.~Frank, A.~Laptev, M.~Lewin, and R.~Seiringer, editors, {\em The Physics and Mathematics of Elliott Lieb}, volume~II, pages 277--305. EMS Press, 2022.

\bibitem{simon2005trace}
B.~Simon.
\newblock {\em Trace Ideals and Their Applications}.
\newblock Mathematical surveys and monographs, 2005.

\bibitem{Zwerger_2003}
W.~Zwerger.
\newblock {M}ott--{H}ubbard transition of cold atoms in optical lattices.
\newblock {\em J. Opt. B}, 5:9--16, 2003.

\end{thebibliography}

\end{document}